\newcommand{\R}{{\mathbb R}}
\newcommand{\U}{{\mathcal U}}
\newcommand{\E}{{\mathbf E}}
\newcommand{\N}{{\mathbb N}}
\newcommand{\M}{{\mathcal M}}
\newcommand{\cN}{{\mathcal N}}
\newcommand{\1}{{\mathbb{I}}}
\newcommand{\w}{w}
\newcommand{\h}{{\mathbf{h}}}
\newcommand{\x}{{\mathbf{x}}}
\newcommand{\RB}{\R_{\geq 0}}
\renewcommand{\c}{{\mathbf{c}}}
\newcommand{\C}{{\mathcal{C}}}
\newcommand{\PP}{\mathbf{P}}
\newcommand{\G}{{\mathcal G}}
\newcommand{\fP}{\mathcal{A}}
\newcommand{\fp}{\alpha}
\newcommand{\new}[1]{\textcolor{black}{#1}}
\newcommand{\ZFadd}[1]{\textcolor{black}{#1}}
\newcommand{\ssredit}[1]{\textcolor{black}{#1}}
\newcommand{\pdadd}[1]{\textcolor{black}{#1}}
\newcommand{\todo}[1]{}
\newcommand{\zfedit}[1]{}
\definecolor{english}{rgb}{0.0, 0.5, 0.0}
\newcommand{\dcpadd}[1]{\textcolor{black}{#1}}
\newcommand{\hariadd}[1]{\textcolor{black}{#1}}
\theoremstyle{plain}
\newtheorem{thm}{Theorem}[section]
\newtheorem{definition}{Definition}[section]
\newtheorem{lem}{Lemma}[section]
\newtheorem{cor}{Corollary}[section]
\newtheorem{step}{Step}
\newtheorem{example}{Example}[section]
\begin{document}

\title{Optimal Auctions through Deep Learning:\\
Advances in Differentiable Economics\thanks{{\new{David Parkes is currently  on sabbatical at DeepMind as a research scientist. 
This work is supported in part through NSF award
CCF-1841550, as well as a Google Fellowship for Zhe Feng.} We thank Zihe Wang (Shanghai University of Finance and
Economics) for pointing out that the combinatorial feasible definition in the ICML'19 published version of the extended abstract of this paper need not imply an integer decomposition.
      We
would like to thank \new{Dirk Bergemann}, Yang Cai, Vincent
Conitzer, Yannai Gonczarowski, Constantinos Daskalakis, Glenn
Ellison, Sergiu Hart, Ron Lavi, Kevin Leyton-Brown, Shengwu Li,
Noam Nisan, Parag Pathak, Alexander Rush, Karl Schlag, Zihe
Wang, Alex Wolitzky, participants in the Economics and
Computation Reunion Workshop at the Simons Institute, the
NIPS'17 Workshop on Learning in the Presence of Strategic
Behavior, a Dagstuhl Workshop on Computational Learning Theory
meets Game Theory, the EC'18 Workshop on Algorithmic Game Theory
and Data Science, the Annual Congress of the German Economic
Association, participants in seminars at LSE, Technion, Hebrew,
Google, HBS, MIT, and the anonymous reviewers on earlier
versions of this paper for their helpful feedback. The first
version of this paper, originally titled as ``Optimal Auctions through Deep Learning", was posted on arXiv on June 12,
2017.  An extended abstract appeared in ICML'19~\citep{deep19}, 
along with a short Research Highlight in the {\em Comm.~ACM}~\citep{cacm21}. The source code for all experiments is available from Github at \url{https://github.com/saisrivatsan/deep-opt-auctions}.}}}

\if 0
\author{Paul D\"utting}
\affiliation{
  \institution{Department of Mathematics, London School of Economics}
  \streetaddress{to add}
  \city{to add}
  \country{to add}}
\email{p.d.duetting@lse.ac.uk}

\author{Zhe Feng}
\affiliation{
 \institution{John A.~Paulson School of Engineering and Applied Sciences,
   Harvard University}
  \streetaddress{to add}
  \city{to add}
  \country{to add}}
\email{zhe\_feng@g.harvard.edu}

\author{Harikrishna Narasimhan}
\affiliation{
  \institution{Google Research}
  \streetaddress{to add}
  \city{Mountain View}
  \country{USA}}
\email{hnarasimham@google.com}

\author{David C.~Parkes}
\affiliation{
 \institution{John A.~Paulson School of Engineering and Applied Sciences,
   Harvard University}
  \streetaddress{to add}
  \city{to add}
  \country{to add}}
\email{parkes@eecs.harvard.edu}

\author{Sai Srivatsa Ravindranath}
\affiliation{
 \institution{John A.~Paulson School of Engineering and Applied Sciences,
   Harvard University}
  \streetaddress{to add}
  \city{to add}
  \country{to add}}
\email{saisr@g.harvard.edu}

\begin{CCSXML}
<ccs2012>
 <concept>
  <concept_id>10010520.10010553.10010562</concept_id>
  <concept_desc>Computer systems organization~Embedded systems</concept_desc>
  <concept_significance>500</concept_significance>
 </concept>
 <concept>
  <concept_id>10010520.10010575.10010755</concept_id>
  <concept_desc>Computer systems organization~Redundancy</concept_desc>
  <concept_significance>300</concept_significance>
 </concept>
 <concept>
  <concept_id>10010520.10010553.10010554</concept_id>
  <concept_desc>Computer systems organization~Robotics</concept_desc>
  <concept_significance>100</concept_significance>
 </concept>
 <concept>
  <concept_id>10003033.10003083.10003095</concept_id>
  <concept_desc>Networks~Network reliability</concept_desc>
  <concept_significance>100</concept_significance>
 </concept>
</ccs2012>
\end{CCSXML}

\ccsdesc[500]{Computer systems organization~Embedded systems}
\ccsdesc[300]{Computer systems organization~Redundancy}
\ccsdesc{Computer systems organization~Robotics}
\ccsdesc[100]{Networks~Network reliability}

\keywords{datasets, neural networks, gaze detection, text tagging}

\fi

\author[a]{Paul D\"utting}
\author[a]{Zhe Feng}
\author[a]{Harikrishna Narasimhan}
\author[b]{David C.~Parkes}
\author[b]{\\Sai Srivatsa Ravindranath}

\affil[a]{Google Research \authorcr \texttt{duetting,zhef,hnarasimham@google.com}}
\affil[b]{John A.~Paulson School of Engineering and Applied Sciences, Harvard University \authorcr \texttt{parkes,saisr@g.harvard.edu}}

\date{October 14, 2022}

\maketitle

\begin{abstract}
Designing an incentive compatible auction that maximizes expected revenue is an intricate
task. The single-item case was resolved in a seminal piece of work by
Myerson in 1981, \dcpadd{but more than 40 years later a full analytical understanding of the optimal
design still  remains elusive for settings with two or more items.}
In this work, we initiate the exploration of the use of tools from deep learning for the automated design of optimal auctions.
We model an auction as a multi-layer neural network, frame optimal auction design as a constrained learning problem, and show how it can be solved using standard
\new{machine learning}
pipelines.	In addition to providing generalization bounds, we present extensive experimental results, recovering essentially all known  solutions that come from the theoretical analysis of optimal auction design problems and obtaining novel mechanisms for settings in which the optimal mechanism is unknown.
\end{abstract}

\section{Introduction}

Optimal auction design is one of the cornerstones of economic theory.  
It is of great practical importance as auctions are used across industries and in the public sector to organize the sale of products and services. Concrete examples are the U.S.~FCC Incentive Auction, the sponsored search auctions conducted by search engines such as Google, and the auctions run on platforms such as eBay.
In the standard \emph{independent private valuations} model, each bidder has a valuation function 
over subsets of items, drawn independently from not necessarily identical distributions.
It is assumed that the auctioneer knows the value distributions and can use
this information in designing the auction. A challenge is that valuations are private, and
bidders may not  report their valuations truthfully.

In a seminal piece of work, Myerson resolved the optimal auction
design problem when there is a single item for sale \citep{Myerson81}. Today, after 40 years of intense research, there
are some elegant partial characterizations~\citep{ManelliVincent06,
  Pavlov11, HaghpanahH15,GK18,DaskalakisEtAl17,Yao17}, but the
analytical problem of optimal design is not completely resolved even
for a setting with two bidders and two items.  At the same time, there
have been impressive algorithmic
advances~\citep{CaiDW12a,CaiDW12b,CaiDW13,HartNisan17,BabaioffILW14,Yao15,CaiZhao17,ChawlaHMS10},
although most of them apply to the weaker notion of Bayesian incentive
compatibility (BIC).  Our focus in this paper is on
auctions that satisfy {\em dominant-strategy incentive compatibility}
(DSIC), which is a more robust and desirable notion of incentive
compatibility.

A recent line of work has started to bring in tools from machine
learning and computational learning theory to design auctions from
samples of bidder valuations. Much of the effort has focused on
analyzing the \emph{sample complexity} of designing revenue-maximizing
auctions~\citep{ColeR14,MohriM14,HuangMR15,MorgensternR15,GonczarowskiN17,MorgensternR16,Syrgkanis17,GonczarowskiW18,BalcanSV16}.
A handful of works has leveraged machine learning pipelines to optimize
different aspects of mechanisms \citep{Lahaie11,DuettingFJLLP12,
  Narasimhan_ijcai16}, but none of these provides the generality and
flexibility of our approach.  There have also been other
computational approaches to auction design, under the research program of
\textit{automated mechanism design}
\citep{ConitzerS02,ConitzerS04a,SandholmL15} (to which the present paper contributes), but where scalable, they
are limited to specialized classes of auctions that are already
known to be
incentive compatible.

\subsection{Our Contribution} 

In this work, we provide the first, general purpose, end-to-end
approach for solving the multi-item optimal auction design problem. We use
multi-layer neural networks to encode the rules of auction mechanisms, with bidder
valuations comprising the input to the network and an allocation and
payments comprising the output of the network. We train these neural
networks using samples from bidder value distributions and seek to
maximize expected revenue subject to constraints for incentive
compatibility.  \dcpadd{We refer to the overarching framework as that of {\em
  differentiable economics}, which references the idea of making use
of differentiable representations of economic rules. In this way, we
can  use stochastic gradient descent for economic design,
building on what is a very successful pipeline for deep
learning.}

The central technical challenge  in this work is to achieve {\em incentive compatibility}, so that bidders will report true valuations in the equilibrium of the auction.
We propose two different approaches to handling incentive
compatibility (IC) constraints. In the first, we leverage
characterization results for IC mechanisms, and constrain the network
architecture appropriately. \dcpadd{In the case of single-bidder settings, we  show how to make
  use of menu-based characterizations, which correspond to DSIC
  mechanisms.  We refer to this
  architecture as {\em RochetNet}, reflecting in its naming a
  connection with a characterization due to~\citet{Rochet87}.}

The second approach replaces the IC constraints with the requirement
of zero expected {\em ex post regret}, which is equivalent to DSIC up
to measure zero events.  For this, we make use of \emph{augmented
  Langrangian} optimization during training, which has the effect of introducing into the loss function
  penalty terms that correspond to violations of incentive
compatibility.
In this way, we minimize during training a combination of negated
revenue and a penalty term for IC violations.  We refer to this
neural network architecture as {\em RegretNet}.
This approach is applicable to multi-bidder multi-item settings for
which we do not  have tractable characterizations of IC mechanisms, but
will generally only find mechanisms that are approximately incentive
compatible.

We show through extensive experiments that these two approaches are
capable of recovering the designs of essentially all auctions for which theoretical solutions have been developed   over the past 40 years, and in the case of RegretNet, we show that the degree of approximation to DSIC is very good. We also demonstrate that this deep learning franework is  a
useful tool for  refuting hypotheses or generating supporting evidence in regard to   the conjectured
structure of optimal auctions, and that in the case of RochetNet this 
framework can 
be used to discover   designs that can then be proved to be optimal.
We also give generalization bounds that provide confidence intervals
on the expected revenue and expected ex post regret, in terms of the
empirical revenue and empirical regret achieved during training, the
descriptive complexity of the neural network used to encode the
allocation and payment rules, and the number of samples used to train
the network.

\subsection{Discussion}%
 
 While the original work on automated mechanism design (AMD) framed the problem as a linear program
(LP)~\citep{ConitzerS02,ConitzerS04a}, this has severe scalablility
issues as the formulation scales exponentially in the number of agents
and items~\citep{GuoC10}. We provide a detailed comparison with an
LP-based
framework, and find that even for a small setting with two
bidders and two items (and a discretization of bidder values into eleven bins per
item), the corresponding LP takes 62 hours to complete since the LP
needs to handle $\approx 9 \times 10^5$ decision variables and
$\approx 3.6 \times10^6$ constraints.

In comparison,  differentiable economics leverages the expressive power of neural networks and
the ability to enforce complex constraints using a standard
machine learning pipeline. This provides for optimization over 
a broad class of mechanisms without needing to resort to a discretized function representation, and is 
constrained only by the expressivity of the neural network architecture.     
For the same setting, our approach finds an auction with low regret in
just over 3.7 hours (see Table \ref{tab:regret_vs_lpa}).
\dcpadd{Moreover, the LP based approach fails to scale much beyond
  this point while the neural network-based approach continues to
  scale.}
%

 The optimization problems studied here are non-convex and  gradient-based approaches may, in general, get stuck in local optima. Empirically, however, this has not been an obstacle to the successful application of deep learning in other problem domains, and there is theoretical support for a ``no local optima'' phenomenon (see, e.g.,~\cite{ChoromanskaLA15,Kawaguchi16, pmlr-v97-du19c, pmlr-v97-allen-zhu19a}). \ssredit{We make similar observations for our experiments: our neural network architectures recover optimal solutions, wherever known, despite the formulation being non-convex.}

In the case of RegretNet, our framework only provides a guarantee of approximate DSIC. In this regard, we work  with expected ex post regret, which is a  quantifiable
relaxation of DSIC that was first introduced
in~\citep{DuettingFJLLP12}.  An essential aspect is that it quantifies the regret to bidders for truthful bidding given knowledge of the bids of others (hence ``ex post"), and thus is a quantity that measures the degree of approximation to DSIC. Indeed, our experiments suggest that this
relaxation is a very effective tool for approximating optimal DSIC
auctions, with RegretNet attaining a very good fit to known theoretical results.

\ZFadd{This work also shows that this neural-network based
  pipeline can be
  used to discover new analytical results (see
  Section~\ref{sec:dualityddt}, where we use computational results to
  guess the analytical structure of an optimal design and  duality
  theory to verify its optimality). %
}

\subsection{Further Related Work}   

\dcpadd{Since the first version of this paper, there has been considerable
follow-up work on the topic of differentiable economics,  extending
the approach to budget-constrained bidders \citep{FengEtAl18},
applying specialized architectures for single bidder settings and
using them to derive new analtyical results~\citep{STZ18}, minimizing
agent payments~\citep{Tacchetti19}, applying to multi-facility
location problems~\citep{GolowichEtAl18}, applying to two-sided
matching~\citep{sai21,feng22}, incorporating human
preferences~\citep{peri21}, balancing fairness and
revenue~\citep{kuo20}, providing certificates of
strategy-proofness~\citep{curry20a}, requiring complete
allocations~\citep{curry22}, developing permutation-equivariant
architectures~\citep{rahme21}, formulating the problem as a
two-player game between a designer and an adversary~\citep{rahme21a}, using context-integrated transformer-based neural network architecture for contextual auction design~\citep{pmlr-v162-duan22a}, and using attention mechanism through transformers for optimal auctions design~\citep{ivanov2022optimal}}. \hariadd{There has also been follow-up work on deriving sample complexity bounds for learning a Nash equilibrium \citep{zhijian+2021} using tools similar to the ones we use for our generalization bounds.} 

More recent work has adopted differentiable approaches for the design of taxation policies~\citep{zheng2022}, indirect auctions~\citep{shen2020reinforcement,brero2021reinforcement,brerolearning}, mitigations to price collusion~\cite{breromibuari22}, game design~\citep{balaguer22}, the study of platform economies~\citep{wangma22}, 
and for  multi-follower Stackelberg games~\citep{wang22}.
Deep learning has also been used to study other problems within the field of  
economics, for example using neural networks to predict the
behavior of human participants in strategic
scenarios~\citep{HartfordWL16,FudenbergLiang18,doi:10.1126/science.abe2629},
to provide an automated equilibrium analysis of
mechanisms~\citep{ThompsonEtAl17}, for causal
inference~\citep{DBLP:conf/icml/HartfordLLT17,DBLP:conf/nips/LouizosSMSZW17},
\dcpadd{and for solving for the equilibria of Stackelberg
games~\citep{wang22}, symmetric auction
games~\citep{DBLP:journals/natmi/BichlerFHKS21}}, and combinatorial
games~\citep{DBLP:journals/corr/abs-1711-02301}.
\dcpadd{The research described here also relates to the method of
{\em empirical mechanism
design}~\citep{DBLP:conf/uai/ViqueiraCMG19,DBLP:journals/aamas/VorobeychikRW12,DBLP:conf/sigecom/VorobeychikKW06,
  DBLP:conf/sigecom/BrinkmanW17}, which applies 
{\em empirical game theory} to  mechanism design, using
empirical game theory to search for the  equilibria of induced games by
building out a suitable set of candidate
strategies~\citep{DBLP:conf/atal/JordanSW10,DBLP:conf/atal/KiekintveldW08,wellman06a}; see also
more recent work on policy-space response oracles~\citep{lanctot17}.}

\subsection{Organization}

Section~\ref{sec:prelims} formulates the auction design problem as a
learning problem, introduces the charactization-based and 
characterization-free approaches, and gives the main generalization
bounds. Section~\ref{sec:regretnet} introduces the network
architectures of RochetNet and RegretNet, and instantiates the
specific generalization bound for these
networks. Section~\ref{sec:training} describes the training and
optimization procedures, and Section~\ref{sec:experiments} presents
extensive experimental results \dcpadd{including experiments that
  provide support for theoretical conjectures in regard to the design
  of optimal auctions along with the discovery of new,
  provably-optimal auction designs}. Section~\ref{sec:conclusion}
concludes.

\section{Auction Design as a Learning Problem}
\label{sec:prelims}

\subsection{Preliminaries} 

We consider a setting with a set of $n$ bidders $N= \{1,\ldots, n\}$ and $m$ 
items $M =\{1,\ldots,m\}$. Each bidder $i$ has a
{\em valuation function }
$v_i: 2^M \rightarrow \mathbb{R}_{\geq 0}$, 
where $v_i(S)$ denotes the bidder's value the subset of items $S
\subseteq M$.

In the simplest case, a bidder may have \emph{additive} valuations,
with  a value $v_i(\{j\})$ for each item $j \in M$, and a  value for a
subset of items $S \subseteq M$
that is 
	$v_i(S) = \sum_{j \in S} v_i(\{j\})$.
\new{Alternatively, if a bidder's value for a subset of items
  $S\subseteq M$ is $v_i(S) =  \max_{j\in S} v_i(\{j\})$,  the  bidder has a \emph{unit-demand} valuation.} \new{We also consider bidders with general combinatorial valuations, but defer the details to Appendix~\ref{sec:ca-architecture} and~\ref{sec:experCA}.}

Bidder $i$'s  valuation function is drawn independently from a distribution $F_i$ 
over possible valuation functions $V_i$. We write $v = (v_1, \dots, v_n)$ for a profile of 
valuations, and denote $V=\prod_{i = 1}^{n} V_i$. %
The auctioneer 
knows the distributions $F = (F_1, \dots, F_n)$, but does not know the bidders' realized 
valuation $v$. The bidders report their valuations (perhaps untruthfully), and an auction 
decides on an allocation of items to the bidders and charges a payment to them.

We denote an {\em auction} $(g,p)$ as a pair of
allocation rules $g_i: V \rightarrow 2^M$ and payment rules 
$p_i: V \rightarrow \mathbb{R}_{\geq 0}$ (these rules can be randomized).
Given bids $b = (b_1, \dots,b_n) \in V$, 
the auction computes an allocation $g(b)\in 2^M$,
and payments $p(b)\in
\mathbb{R}_{\geq 0}^n$.

A bidder with valuation $v_i$ receives 
{\em utility} \new{$u_i(v_i; b) = v_i(g_i(b)) - p_i(b)$}
at bid profile $b$.
Let $v_{-i}$ denote the valuation profile $v=(v_1,\ldots,v_n)$ without
element $v_i$, similarly for $b_{-i}$, and let
$V_{-i} = \prod_{j \neq i} V_j$ denote the possible valuation profiles
of bidders other than bidder $i$.  An auction is \emph{dominant
  strategy incentive compatible} (DSIC) if each bidder's utility is
maximized by reporting truthfully no matter what the other bidders
report. That is,
\new{$u_i(v_i; (v_i,b_{-i})) \geq u_i(v_i; (b_i,b_{-i}))$} for every
bidder $i$, every valuation $v_i \in V_i$, every bid $b_i \in V_i$,
and all bids $b_{-i} \in V_{-i}$ from others.  An auction is ex post
{\em individually rational} (IR) if each bidder receives a non-zero
utility when participating truthfully, i.e.
\new{$u_i(v_i; (v_i,b_{-i})) \geq 0$}~$\forall i \in N$,
$v_i \in V_i$, and $b_{-i} \in V_{-i}$ .

In a DSIC %
auction, it is in the best interest of each bidder to report 
truthfully, and so the equilibrium
revenue on valuation profile $v$ is simply
$\sum_i p_i(v)$. Optimal auction design
seeks to identify a DSIC auction %
that maximizes expected revenue.

\ZFadd{There is also a weaker notion
  of incentive compatibility, \emph{Bayesian Incentive Compatibility}
  (BIC). An auction is BIC if each bidder's utility is maximized by
  reporting truthfully when the other bidders also report truthfully,
  i.e. $\E_{v_{-i}}[u_i(v_i; (v_i,v_{-i}))] \geq \E_{v_{-i}}[u_i(v_i;
  (b_i,v_{-i}))]$ for every bidder $i$, every valuation $v_i \in V_i$,
  every bid $b_i \in V_i$. In this work, we focus on DSIC auctions
  rather than BIC auctions, since DSIC auctions are more preferable in
  practice because} \dcpadd{truthful bidding remains an equilibrium without
  common knowledge of the distributions on valuations or
  common knowledge on rationality.}
\if 0
It thus seeks to solve the following constrained optimization problem:
\begin{align}
\max_{(g,p)}\, {\mathbf E}_{v\sim F}\ \bigg[\sum_i p_i(v)\bigg]   %
\quad \mbox{s.t.} \quad (g,p)\in \mathit{IC} 
\label{eq:opt-design}
\end{align}
where $IC$ 
denotes the set of 
auctions that satisfy incentive compatibility.
\fi
\if 0
We refer the reader to a recent survey by \citet{Daskalakis15}
in regard to analytical solutions to this problem, and its relaxation where the 
DSIC constraint is replaced with the weaker BIC constraint, 
as well as for an overview
of the many new challenges that arise in settings with multiple items. 
\fi

\subsection{Formulation as a Learning Problem}

We pose the problem of optimal auction design as a learning problem,
where in the place of a loss function that measures error against a
target label, we adopt  as the loss function
the negated, expected revenue on valuations
drawn from $F$.

We are given a parametric class of auctions,
$(g^w,p^w)\in \mathcal{M}$, for parameters $w\in \mathbb{R}^d$ for
some $d>0$, and a sample of $L$ bidder valuation profiles
$\mathcal{S}=\{v^{(1)},\ldots,v^{(L)}\}$ drawn i.i.d.~from $F$. There
is no need to compute equilibrium inputs; rather, we sample true
profiles, and seek to learn rules that are DSIC.  The goal is to find
an auction that minimizes the negated, expected revenue
\new{$-{\mathbf E}[\sum_{i\in N}p^w_i(v)]$}, among all auctions in
$\mathcal{M}$ that satisfy DSIC (or just IC).
\ZFadd{For a single-bidder setting, there is no difference between
  DSIC and BIC.}

We present two approaches for achieving IC. In the first, we
  leverage a characterization result to constrain the search space so
  that all mechanisms within this class are IC.
  In the second, we
  replace the IC constraints with a differentiable approximation, and
  move the constraints into the objective via the augmented Lagrangian
  method.
  The first approach affords a smaller search space and is exactly
  DSIC, but  only applies to single-bidder
  multi-item settings.  The second approach applies to multi-bidder,
  multi-item settings, but entails search through a
  larger parametric space and only achieves approximate IC.

  In Appendix~\ref{app:myerson}, we  also describe a construction based on~\citet{Myerson81}'s characterization result for multi-bidder single-item settings, which we refer to as \emph{MyersonNet}.

\subsubsection{Characterization-Based Approach}


We begin by describing our first approach, which we refer to as {\em RochetNet},  in which we exploit a
characterization of DSIC mechanisms to constrain the search space.

We describe the approach for additive valuations, but it can also be
extended to unit demand valuations.
For an additive valuation on $m$ items,
the utility function $u: \RB^m \rightarrow \R$ induced
for a single bidder by a mechanism
$(g,p)$ is,
\begin{equation}
u(v) = \sum_{j=1}^m g_j(v)\,v_j  \,-\, p(v),
\label{eq:induced-utility}
\end{equation}
where $g_j(v)\in \{0,1\}$ indicates whether or not the bidder is
assigned item $j$.

We can consider a {\em menu} of $J$ choices, for some $J\geq 1$,
where each choice
consists of a possibly randomized allocation, together with a price.
For choice
$j\in [J]$, let $\alpha_j\in [0,1]^m$ specify the randomized
allocation,
and parameter $\beta_j\in \R$ specify the negated price.
By choosing the menu item that maximizes the bidder's utility,
or the null (no allocation, no payment) outcome when this is better,
a menu
of
size $J$ induces the
following utility function:
\begin{equation}
u(v) \,=\, \max\left\{\max_{j \in [J]}\, \{ \alpha_j \cdot v \,+\, \beta_j\},\, 0\right\}.
\label{eq:induced-utility-max}
\end{equation}
%
%
%

The well known {\em taxation principle} from mechanism design theory
tells us that a mechanism that selects the menu choice that maximizes an agent's
reported utility, based on its bid $b\in \R^m$, is DSIC~\citep{hammond79,guesnerie95}. 
 To see this,
observe that the menu does not depend on the reports, and that the
agent will maximize its utility by reporting its true valuation
function so that the right choice is made on its behalf.  Moreover,
the taxation principle also tells us
that the use of a menu is without loss of generality
for DSIC mechanisms.  

Based on this, for a given $J\geq 0$, we seek to learn a mechanism 
with parameters $w = (\alpha, \beta)$, where $\alpha\in [0,1]^{mJ}$
and $\beta\in \R^J$, to maximize the expected revenue
$\E_{v \sim F}[\beta_{j^*(v)}]$,
where $j^*(v) \in \text{argmax}_{j \in [J]} \{ \alpha_j \cdot v \,+\,
\beta_j\}$,
and  denotes the best choice for the bidder, where choice
$0$ corresponds to the null outcome.
%
%
For a unit-demand bidder, the utility can also be
    represented via (\ref{eq:induced-utility}), with the additional
    constraint that $\sum_j g_j(v)\leq 1, \forall v$. We discuss this
    more in Section~\ref{sec:rochetnet-architecture}.


    We also have the following characterization of DSIC mechanisms for
    the single bidder case.
\begin{thm}[\citet{Rochet87}]\label{theorem:u-convex}
	The utility function $u: \RB^m \rightarrow \R$ that is induced by a
	DSIC mechanism for a single biddder 
        is 1-Lipschitz w.r.t.\ the $\ell_1$-norm, 
	non-decreasing, and convex. 
\end{thm}

 The convexity
    can be understood by recognizing that
the induced utility function~\eqref{eq:induced-utility-max} is the maximum over a set of hyperplanes,
each corresponding to a choice in the menu set. 
Figure~\ref{fig:monotonic_networkb} illustrates Rochet's theorem
for a single item
($m=1$) and a menu consisting of four choices ($J=4$). Here,
the induced
utility for choice $j$ given bid $b\in \R$ is
$h_j(b) \,=\, \alpha_j \cdot b \,+\, \beta_j$. 
\begin{figure}[t]
\centering
\scalebox{1}{
\centering
\begin{tikzpicture}[scale=0.7,shorten >=4pt]
\draw [dotted] (1,0) -- (6,1.25);
\draw [dotted] (0,0.15) -- (4,0.35);
\draw [dotted] (4.3,0.3) -- (7.3,3.3);
\draw [dotted] (6.1,1.3) -- (7,4);
\draw [dotted,line width=0.25mm] (0,0.75) -- (7,0.75);
\draw [dashed,line width=0.25mm	] (0,0.75) -- (4,0.75);
\draw (0,0.15) -- (2,0.25) -- (5,1) -- (6.5,2.5) -- (7,4);
\draw [->] (-0.25,0) -- (-0.25,4.5);
\draw [->] (-0.25,0) -- (7.25,0);
\node at (3.75,-0.25) {$b$};
\node at (-0.7,2.2) {$u(b)$};
\node at (6.5,0.45) {$u(b)=0$};
\node at (1.15,0.45) {$h_1$};
\node at (4.25,1.10) {$h_2$};
\node at (5.5,1.95) {$h_3$};
\node at (6.45,3.25) {$h_4$};
\end{tikzpicture}}
\vspace*{-10pt}
\caption{An induced  utility function represented by RochetNet for the
  case of a single item  ($m=1$) and menu with four
  choices $(J=4)$. \label{fig:monotonic_networkb}}
\vspace*{-8pt}
\end{figure}


Given this, to find the
optimal single-bidder auction we can search over a suitably sized menu
set and pick the one that maximizes expected revenue.
In Section~\ref{sec:rochetnet-architecture} we explain how to achieve
this by modeling the utility function as a neural network, and
formulating the above optimization as a
differentiable learning problem.

\subsubsection{Characterization-Free Approach}

Our second approach---which we refer to as \emph{RegretNet}---does not
require a characterizatio of IC. Instead, it replaces the IC
constraints with a differentiable approximation and brings the IC
constraints into the objective by augmenting the objective with a term
that accounts for the extent to which the IC constraints are violated.

We measure the extent to which an auction violates IC through the
following notion of ex post regret.  Fixing the bids of others, the
{\em ex post regret} for a bidder is the maximum increase in her
utility, considering all possible non-truthful bids. For a mechanism
($g^{w}, p^{w}$), with parameters $w$,
we will be interested in the \emph{expected ex post
  regret} for bidder $i$:
\begin{eqnarray*}
	\mathit{rgt}_i(w) =~
	\mathbf{E}\Big[\max_{v'_i \in V_i}\,u^w_i(v_i; (v'_i, v_{-i})) - u^w_i(v_i;(v_i, v_{-i}))\Big],
\end{eqnarray*}
where  the expectation is over $v \sim F$ and
$u^w_i(v_i;b) = v_i(g^w_i(b)) - p^w_i(b)$ for model parameters $w$.

We assume that $F$ has full support on the space of valuation profiles
$V$.
Given this, 
and recognizing that the regret is non-negative, an auction satisfies
DSIC if and only if $\mathit{rgt}_i(w) = 0, \forall i \in N$, except
for measure zero events.\footnote{In this work, we focus on DSIC, but the
RegretNet can also be adapted to handle BIC~\citep{FengEtAl18}.}

Given this, we re-formulate the learning problem as one of minimizing
the expected negated revenue subject to the expected ex post regret
being zero for each bidder:
\begin{eqnarray*}
	\min_{w \in \mathbb{R}^d}\, &\E_{v\sim F}\bigg[-\sum_{i\in N}p^w_i(v)\bigg]\\
	\text{s.t.\;}\, &{rgt}_i(w) \,=\, 0, ~\forall i\in N.
	\label{eq:ml-detailed1}
\end{eqnarray*}

Given a sample $\mathcal{S}$ of $L$ valuation profiles from $F$, 
we estimate the empirical {ex post }regret for bidder $i$ as: 
\begin{align}
\widehat{\mathit{rgt}}_i(w) =%
&
\frac{1}{L}\sum_{\ell=1}^L
\Big[\max_{v'_i \in V_i}\,u^w_i\big(v_i^{(\ell)}; \big(v'_i, v^{(\ell)}_{-i}\big)\big) - u^w_i(v_i^{(\ell)}; v^{(\ell)})\Big],
\label{eq:emp-rgt}
\end{align}
and  seek to minimize the
{empirical loss} (negated revenue) subject to the empirical regret
being zero for all bidders, and the following formulation:
\begin{eqnarray}
\min_{w \in \mathbb{R}^d}\, &-\frac{1}{L}\sum_{\ell=1}^L \sum_{i=1}^n p^w_i(v^{(\ell)})
\nonumber\\
\text{s.t.\;}\, &\widehat{rgt}_i(w) \,=\, 0, ~~\forall i\in N.
\label{eq:ml-detailed2}
\end{eqnarray}

\new{We additionally} require the auction to satisfy IR, which can be ensured by restricting the search space to a class of parametrized auctions that charge no bidder more than her
\new{valuation} for an allocation. 

In Section~\ref{sec:regretnet}, we model the allocation and
  payment rules through a neural network, and incorporate the IR
  requirement within the architecture. In Section~\ref{sec:training}
  we describe how the IC constraints can be incorporated into the
  objective using Lagrange multipliers, so that the resulting neural
  net can be trained with standard pipelines.

\subsection{Quantile-Based Regret}

The intent is that the characterization-free approach leads to mechanisms
with low expected ex post regret.  By seeking to minimize the expected
ex post regret, we can also obtain regret bounds of the form ``the
probability that the ex post regret is larger than $x$ is at most
$q$.''  For this, we define quantile-based ex post regret.
\begin{definition}[Quantile-based ex post regret]
	For each bidder $i$, and $q$ with \new{$0<q<1$}, the $q$-quantile-based ex post regret, $\mathit{rgt}^q_i(w)$, induced by the probability distribution $F$ on valuation profiles, is defined as the smallest $x$ such that
	\begin{align*}
	\PP\left(\max_{v'_i \in V_i}\,u^w_i(v_i; (v'_i, v_{-i})) - u^w_i(v_i;(v_i, v_{-i})) \new{\;\geq x} \right) \new{\leq q.}
	\end{align*} 
\end{definition}

We can bound the $q$-quantile based regret $\mathit{rgt}^q_i(w)$ by the expected ex post regret $\mathit{rgt}_i(w)$ as in the following {lemma}. The proof appears in Appendix~\ref{app:quantile-regret}.

\begin{lem}\label{lem:quantile-regret}
	For any fixed $q$, $0 < q < 1$, and bidder $i$, we can bound the $q$-quantile-based ex post regret by
	\begin{align*}
	\mathit{rgt}^q_i(w) \leq \frac{\mathit{rgt}_i(w)}{q}.
	\end{align*}
\end{lem}

\new{Using this Lemma~\ref{lem:quantile-regret}, we can show,  for example, that when the expected ex post regret is $0.001$, then the probability that the ex post regret exceeds $0.01$ is at most $10\%$.}

\subsection{Generalization Bounds}
\label{sec:generalization-bound}

We conclude this section with two generalization bounds.
  The first is a lower   bound on  the expected revenue 
  in terms of the empirical revenue during training, the complexity (or capacity) of
  the auction class that we optimize over, and the number of sampled
  valuation profiles.
  The  second is an  upper bound on 
  the expected ex post regret
  in terms of the 
  empirical regret during training, the complexity (or capacity) of
  the auction class that we optimize over, and the number of sampled
  valuation profiles.

  We measure the capacity of an auction class $\mathcal{M}$ using a
  definition of \emph{covering numbers} from the ranking literature
  \citep{Rudin09a}. %
  \new{For this, define the} $\ell_{\infty,1}$ distance between auctions
  $(g,p), (g',p') \in \M$ as
\[\new{\max_{v \in V} \sum_{i \in N, j \in M}|g_{ij}(v) - g'_{ij}(v)| + \sum_{i \in N} |p_i(v) - p'_i(v)|.}\] 
For any $\epsilon > 0$, let $\cN_\infty(\M, \epsilon)$ be the {minimum} number of balls of radius $\epsilon$ required to cover $\M$ under the $\ell_{\infty,1}$ distance.  
~\\[-10pt]
\begin{thm}
	\label{THM:GBOUND}
	For each bidder $i$, assume 
	that the valuation function $v_i$ satisfies $v_i(S) \leq 1,\, \forall S \subseteq M$. Let $\M$ be a class of auctions that satisfy individual rationality. Fix $\delta \in (0,1)$.  With probability at least $1-\delta$ over draw of sample $\mathcal{S}$ of $L$ profiles from $F$, for any $(g^w,  p^w) \in \M$, %
	~\\[-11pt]
	\begin{eqnarray*}
		&\hspace*{-7pt} \new{\E_{v\sim F}\bigg[\sum_{i\in N}p^w_i(v)\bigg]}%
		\new{ \,\geq\, \frac{1}{L}\sum_{\ell=1}^L \sum_{i=1}^n p^w_i(v^{(\ell)})}
		\new{\,-\, 2n{\Delta_{L}}}
		\new{\,-\, {Cn}\sqrt{\frac{\log(1/\delta)}{L}},} \\[-12pt]
	\end{eqnarray*}
	and
	\begin{eqnarray*}
		\frac{1}{n}\sum_{i=1}^n rgt_i(w) \,\leq\, \frac{1}{n}\sum_{i=1}^n \widehat{rgt}_i(w) \,+\,  
		2\Delta_{L}
		\,+\,  C'\sqrt{\frac{\log(1/\delta)}{L}},
	\end{eqnarray*}
	where $\Delta_{L}= \inf_{\epsilon>0}\Big\{\frac{\epsilon}{n} + 
	\,2\sqrt{\frac{2\log(\cN_\infty(\M, \,\frac{\epsilon}{2n}))}{L}}\Big\}$ and $C, C'$ are distribution-independent constants. 
\end{thm}
See \new{Appendix~\ref{app:gbound}} for the proof. If the term
$\Delta_{L}$ in the above bound goes to zero as the sample size $L$
increases then the above bounds go to zero as $L \rightarrow \infty$.
In Theorem~\ref{thm:cover_regretnet} in
Section~\ref{sec:regretnet}, we bound $\Delta_{L}$ for the neural
network architectures we present in this work.

\section{Neural Network Architectures}
\label{sec:regretnet}

We describe the RochetNet architecture for single-bidder multi-item
settings in Section~\ref{sec:rochetnet-architecture}, and the
RegretNet architecture for multi-bidder multi-item settings in
Section~\ref{sec:regretnet-architecture}. We focus on additive
valuations and unit-demand valuations, and discuss how to extend the
constructions to allow for combinatorial valuations in
Appendix~\ref{sec:ca-architecture}.

\subsection{The RochetNet Architecture}\label{sec:rochetnet-architecture}

\if 0
\new{Recall that in the single-bidder, multi-item setting we seek to
  encode utility functions that satisfy the requirements of
  Theorem~\ref{theorem:u-convex}. The associated auction mechanism can
  be deduced from the gradient of the utility function.}
\fi


   RochetNet  operationalizes the
idea of menu-based mechanisms through a suitable neural network architecture. 
\ZFadd{We first describe the construction for additive valuations
  and then explain how to extend it  to unit-demand valuations.}
The parameters correspond to  a menu of
$J$ choices, where each choice $j\in [J]$ is associated with
randomized allocation $\alpha_j\in [0,1]^m$ and negated price
$\beta_j\in \R$ ($\beta_j$s will be negative, and the smaller the
value of $\beta_j$, the larger the payment).  The network selects the
choice for the bidder that maximizes the bidder's reported utility
given its bid, or chooses the null outcome (no allocation, no payment)
when this is preferred. This ensures DSIC and
IR.

\if 0

This
yields the following utility function:
\begin{equation}
\new{u^{w}(v) \,=\, \max\left\{\max_{j \in [J]}\, \{ \alpha_j \cdot v \,+\, \beta_j\},\, 0\right\},}
\label{eq:induced-utility-network}
\end{equation}
with parameters $w = (\alpha, \beta)$, where $\alpha\in [0,1]^{mJ}$
and $\beta\in \R^J$.


\fi

\begin{figure}[t]
\centering
\scalebox{1}{
\centering
\begin{tikzpicture}[scale=0.85, transform shape, shorten >=1pt,->,draw=black!100, node distance=\layersep, thick]
    \tikzstyle{input text}=[draw=white,minimum size=17pt,inner sep=0pt]
    \tikzstyle{input neuron}=[circle,draw=black!100,minimum size=17pt,inner sep=0pt,thick]
    \tikzstyle{hidden neuron}=[circle,draw=black!100,minimum size=17pt,inner sep=0pt,thick]
    \tikzstyle{hidden text}=[draw=white,minimum size=22pt,inner sep=0pt,thick]
    \tikzstyle{unit}=[draw=black!100,minimum size=22pt,inner sep=0pt,thick]

    \foreach \name / \y in {1,...,2}
    	\node[input neuron, pin={[pin edge={<-}]left:$b_\y$}] (I-\name) at (0,-\y) {};
    \node[input text] (I-0) at (0,-3) {$\vdots$};
    \node[input neuron, pin={[pin edge={<-}]left:$b_m$}] (I-3) at (0,-4) {};

    \path node[hidden neuron] (H-1) at (2,-0.5) {$h_1$};
    \path node[hidden neuron] (H-2) at (2,-2) {$h_2$};
    \path node[hidden text] (H-0) at (2,-3.25) {\vdots};
    \path node[hidden neuron] (H-3) at (2,-4.5) {$h_J$};
	\path node[hidden neuron] (H-4) at (4,-1) {$0$};

    \foreach \src in {1,...,3}
    	\foreach \dest in {1,...,3}
			\path (I-\src) edge (H-\dest); 

	\node[unit, pin={[pin edge={->}]right:$u(b)$}] (M) at (4,-2.5) {max};
	
	\foreach \src in {1,...,4}
		\path (H-\src) edge (M); 
\end{tikzpicture}}
\vspace*{-10pt}
\caption{RochetNet:  Neural network representation of a non-negative, monotone, convex induced utility function; here $h_j(b) \,=\, \alpha_j \cdot b \,+\, \beta_j$ for $b\in \R^m$ and $\alpha_j\in [0,1]^m$. 
\label{fig:monotonic_networka}}
\vspace*{-8pt}
\end{figure}

The utility function, represented as a 
single layer neural network, 
is illustrated in Figure~\ref{fig:monotonic_networka}, where each $h_j(b) \,=\, \alpha_j \cdot b \,+\, \beta_j$ for bid $b\in \R^m$.  The input layer takes a bid
$b\in \R^m$ and the output of the network is the induced utility.
%
%
%
For input $b$, 
$j^\ast(b) \in \text{argmax}_{j \in [J]\cup \{0\}} \{ \alpha_j \cdot b
\,+\, \beta_j\}$ denotes the best choice for the bidder, where choice
$0$ corresponds to $\alpha_0=0$ and $\beta_0=0$ and the null
outcome. This best choice defines the allocation and payment rule: for
bid $b$, the allocation is $g^w(b)=\alpha_{j^\ast(b)}$ and the payment
is $p^w(b)=-\beta_{j^\ast(b)}$.

\if 0
The following theorem verifies that the RochetNet architecture ensures that the utility function
is monotonically non-decreasing, convex and 1-Lipschitz w.r.t. $\ell_1$ norm, conforming to
Rochet's characterization~(Theorem~\ref{theorem:u-convex}). The proof is given in Appendix~\ref{app:rochetnet-u-convex}.

\begin{thm} 
\label{thm:rochetnet-u-convex}
For any \ZFadd{parameter $w = (\alpha, \beta)$ such that} \new{$\alpha \in [0, 1]^{mJ}$ and $\beta \in \R^J$}, the function
\new{$u^{w}$} is non-negative, monotonically non-decreasing, convex and 1-Lipschitz w.r.t.\ the $\ell_1$-norm. The mechanism $(g^w,p^w)$ corresponding to RochetNet is DSIC and IR.
\end{thm}

\fi

\if 0
By using a large number of hyperplanes %
one can use this neural network architecture to search over a sufficiently rich class of monotone, convex 1-Lipschitz utility functions.
Once trained, 
the  mechanism \new{$(g^{w},p^{w})$, with $w = (\alpha, \beta)$}, can be derived from the gradient of the
utility function, with the allocation rule given by:
\begin{equation}
\new{g^w(b) = \nabla u^{\alpha, \beta}(b),}
\label{eq:mech-g}
\end{equation}
and the payment rule is given by the difference between 
the expected value to the bidder from the allocation and
the bidder's utility:
\begin{equation}
\new{p^w(b) \,=\,\nabla u^{\alpha, \beta}(b) \cdot b \,-\, u^{\alpha, \beta}(b).}
\label{eq:mech-p}
\end{equation}

Here the 
utility gradient can be computed as:
$
\new{\nabla_j u^{\alpha, \beta}(b) \,=\, \alpha_{j^*(b)},}
$
for \new{$j^*(b) \in \text{argmax}_{j \in [J]} \{ \alpha_j \cdot b \,+\, \beta_j\}$}.

\fi

\ZFadd{By using a large number of hyperplanes, one can use this
  neural network architecture to search over a sufficiently rich class
  of DSIC and IR auctions for the single-bidder, multi-item setting.
  Given the RochetNet construction, we seek to minimize the negated,
  expected revenue,
$\E_{v \sim F}[\beta_{j^*(v)}]$.}
\if 0
\begin{align}
-\E_{v \sim F}\big[ \nabla u^{\alpha, \beta}(v) \cdot v - u^{\alpha, \beta}(v) \big] = \E_{v \sim F}\big[\beta_{j^*(v)}\big],
\end{align}
where $j^*(v) \in \text{argmax}_{j \in [J]} \{ \alpha_j \cdot v \,+\, \beta_j\}$.

\fi
To ensure that the objective is a continuous function of parameters
$\alpha$ and $\beta$, we adopt during training
a {\em softmax} operation in place of
the argmax, and the following loss function:
\begin{equation}
  \label{eq:loss-rochetnet}
\new{\mathcal{L}(\alpha,\beta) \,=\, -\E_{v \sim F}\big[\sum_{j\in [J]}\beta_j \widetilde{\nabla}_j(v)\big],}
\end{equation}
where
\begin{equation}
  \label{eq:rochetnet-gradient}
\new{\widetilde{\nabla}_j(v) \,=\, \mathrm{softmax}_j\big(\kappa\cdot(\alpha_1 \cdot v + \beta_1), \ldots, \kappa\cdot(\alpha_J \cdot v + \beta_J)\big)},
\end{equation}
and $\kappa > 0$ is a constant that controls the quality of the
approximation.  Here, the softmax function,
$\mathrm{softmax}_j(\kappa x_1,\ldots,\kappa x_J)=e^{\kappa
  x_j}/\sum_{j'}e^{\kappa x_{j'}}$, takes as input $J$ real numbers
and returns a probability distribution consisting of $J$
probabilities, proportional to the exponential of the inputs.
\ZFadd{We only do this approximation during training,
  and always use argmax during
   testing to guarantee the mechanism is DSIC.}

During training, we seek to optimize the parameters of the neural network, i.e.,
$\alpha \in [0,1]^{mJ}$, and $\beta \in \R^{J}$, to minimize loss~\eqref{eq:loss-rochetnet}.
\if 0
\begin{eqnarray}
\inf_{\alpha \in \R^{mJ}, \beta \in \R^{J}}\,
\mathcal{L}(\alpha,\beta)\,
&\text{s.t.}& \mathit{irp}(\alpha,\beta) = 0,
\end{eqnarray}
where $\mathit{irp}(\alpha,\beta) \,=\, \E_{v \sim F}[\max\{0, -u^{\alpha, \beta}(v)\}].$ 
\fi 
For this,
given a sample $\mathcal{S} =\{v^{(1)}, \ldots, v^{(L)}\}$ drawn from
$F$,  we use stochastic gradient descent to
optimize an empirical version of the loss.
%
%

This approach easily extends to a
\new{single} bidder with a \new{unit-demand} valuation.
In this case, the new requirement
is that the sum of the allocation probabilities cannot exceed one. 
This can be  enforced by restricting the coefficients for each
hyperplane to sum up to at most \new{one}, i.e.\ $\sum_{k=1}^m \alpha_{jk} \leq 1,
\forall j\in [J]$, and $\alpha_{jk} \geq 0, \forall j\in J, k\in
[m]$.  To achieve this contraint, we can re-parameterize $\alpha_{jk}$
as $\mathrm{softmax}_k\big(\gamma_{j1}, \cdots, \gamma_{jm}\big)$,
where $\gamma_{jk}\in \R, \forall j\in J, k\in m$.
With this restriction, the resulting mechanism is DSIC for unit-demand
bidders since the selected menu choice
 corresponds a distribution over
single-item allocations.\footnote{In follow-up work,~\cite{STZ18} extend the RochetNet
  architecture to more general settings, including settings with
  non-linear utility functions.}

\if 0
An interpretation of the RochetNet architecture is that the network
maintains a menu of randomized allocations and prices, and chooses the
option from the menu that maximizes the bidder's utility based on the
bid. Each linear function $h_j(b) \,=\, \alpha_j \cdot b \,+\,
\beta_j$ in RochetNet corresponds to an option on the menu, with the
allocation probabilities and payments encoded through the parameters
$\alpha_j$ and $\beta_j$ respectively. 
\fi

\subsection{The RegretNet Architecture} \label{sec:regretnet-architecture}

We next describe the architecture for the characterization-free,
RegretNet approach. In this case, we train a neural network that
explicitly encodes a multi-bidder allocation and payment rule. The
architecture consists of two logically distinct components that
comprise part of a single network: the allocation component and the
payment component. These are trained together as a single
network, and the outputs of these networks are used to compute the
regret and revenue, and thus quantities used by the loss function.

\subsubsection{Additive Valuations}

\begin{figure*}[t]
	\vspace{-2pt}
	\centering
\begin{tikzpicture}[scale=0.6,transform shape, shorten >=1pt,->,draw=black!100, node distance=\layersep, thick]
   \def\x{0}     
   \def\y{9}   
   
    \tikzstyle{input text}=[draw=white,minimum size=22pt,inner sep=0pt]
    \tikzstyle{input neuron}=[circle,draw=black!100,minimum size=17pt,inner sep=0pt,thick]
    \tikzstyle{hidden neuron1}=[draw=black!100,minimum size=25pt,inner sep=0pt,thick]
    \tikzstyle{hidden neuron}=[circle,draw=black!100,minimum size=25pt,inner sep=0pt,thick]
    \tikzstyle{hidden text}=[draw=white,minimum size=22pt,inner sep=0pt,thick]
    \tikzstyle{unit}=[circle,draw=black!100,minimum size=20pt,inner sep=0pt,thick]

    \node[input text] (I-0) at (0,-0.65) {$\vdots$};
    \node[input text] (I-0) at (0,-2.45) {$\vdots$};
    \node[input text] (I-0) at (0,-4.2) {$\vdots$};
    \node[input neuron, pin={[pin edge={<-}]left:$b_{11}$}] (I-1) at (\x,0) {};
    \node[input neuron, pin={[pin edge={<-}]left:$b_{1m}$}] (I-2) at (\x,-1.5) {};
    \node[input neuron, pin={[pin edge={<-}]left:$b_{n1}$}] (I-3) at (\x,-3.5) {};
	\node[input neuron, pin={[pin edge={<-}]left:$b_{nm}$}] (I-4) at (\x,-5) {};

    \path node[hidden neuron] (H-1) at (\x+1.5,-0.5) {$h^{(1)}_1$};
    \path node[hidden neuron] (H-2) at (\x+1.5,-2) {$h^{(1)}_2$};
    \path node[hidden text] (H-0) at (\x+1.5,-3.25) {\vdots};
    \path node[hidden neuron] (H-3) at (\x+1.5,-4.5) {$h^{(1)}_{J_1}$};

    \path node[hidden neuron] (J-1) at (\x+3,-0.5) {$h^{(R)}_1$};
    \path node[hidden neuron] (J-2) at (\x+3,-2) {$h^{(R)}_2$};
    \path node[hidden text] (J-0) at (\x+3,-3.25) {\vdots};
    \path node[hidden neuron] (J-3) at (\x+3,-4.5) {$h^{(R)}_{J_{R}}$};

    \node[input text] (S-0) at (\x+4.5,-0.65) {$\vdots$};
    \node[input text] (S-0) at (\x+4.5,-2.45) {$\vdots$};
    \node[input text] (S-0) at (\x+4.5,-4.2) {$\vdots$};
	\path node[unit, pin={[pin edge={->}]right:$z_{11}$}] (S-1) at (\x+4.5,-0) {};
    \path node[unit, pin={[pin edge={->}]right:$z_{n1}$}] (S-2) at (\x+4.5,-1.5) {};
    \path node[unit, pin={[pin edge={->}]right:$z_{1m}$}] (S-3) at (\x+4.5,-3.5) {};
    \path node[unit, pin={[pin edge={->}]right:$z_{nm}$}] (S-4) at (\x+4.5,-5) {};

    \foreach \src in {1,...,4}
    	\foreach \dest in {1,...,3}
			\path (I-\src) edge (H-\dest); 
			
	\foreach \src in {1,...,3}
		\foreach \dest in {1,...,4}
			\path (J-\src) edge (S-\dest); 
	
	\node at  ($(H-2.south east) + (0.45,0)$) {\ldots};

	\draw[thick,dashed] ($(S-1.north west)+(-0.2,0.3)$)  rectangle ($(S-2.south east)+(0.2,-0.3)$) ;
	\node (N-1) at  ($(S-1.north west) + (0.3,0.5)$) {$\mathit{softmax}$};

	\draw[thick,dashed] ($(S-3.north west)+(-0.2,0.3)$)  rectangle ($(S-4.south east)+(0.2,-0.3)$) ;
	\node at  ($(S-4.south west) + (0.3,-0.5)$) {$\mathit{softmax}$};
	
	\draw[thick,dotted] ($(S-1.north west)+(-0.2,0.3)$)  rectangle ($(S-4.south east)+(0.2,-0.3)$) ;

    \tikzstyle{input text}=[draw=white,minimum size=22pt,inner sep=0pt]
    \tikzstyle{input neuron}=[circle,draw=black!100,minimum size=17pt,inner sep=0pt,thick]
    \tikzstyle{hidden neuron1}=[draw=black!100,minimum size=25pt,inner sep=0pt,thick]
    \tikzstyle{hidden neuron}=[circle,draw=black!100,minimum size=25pt,inner sep=0pt,thick]
    \tikzstyle{hidden text}=[draw=white,minimum size=22pt,inner sep=0pt,thick]
    \tikzstyle{unit}=[draw=black!100,minimum size=20pt,inner sep=0pt,thick]

    \node[input text] (I-0) at (\y,-0.65) {$\vdots$};
    \node[input text] (I-0) at (\y,-2.45) {$\vdots$};
    \node[input text] (I-0) at (\y,-4.2) {$\vdots$};
    \node[input neuron, pin={[pin edge={<-}]left:$b_{11}$}] (I-1) at (\y,0) {};
    \node[input neuron, pin={[pin edge={<-}]left:$b_{1m}$}] (I-2) at (\y,-1.5) {};
    \node[input neuron, pin={[pin edge={<-}]left:$b_{n1}$}] (I-3) at (\y,-3.5) {};
	\node[input neuron, pin={[pin edge={<-}]left:$b_{nm}$}] (I-4) at (\y,-5) {};

    \path node[hidden neuron] (H-1) at (\y+1.5,-0.5) {$c^{(1)}_1$};
    \path node[hidden neuron] (H-2) at (\y+1.5,-2) {$c^{(1)}_2$};
    \path node[hidden text] (H-0) at (\y+1.5,-3.25) {\vdots};
    \path node[hidden neuron] (H-3) at (\y+1.5,-4.5) {$c^{(1)}_{J'_1}$};

    \path node[hidden neuron] (J-1) at (\y+3,-0.5) {$c^{(T)}_1$};
    \path node[hidden neuron] (J-2) at (\y+3,-2) {$c^{(T)}_2$};
    \path node[hidden text] (J-0) at (\y+3,-3.25) {\vdots};
    \path node[hidden neuron] (J-3) at (\y+3,-4.5) {$c^{(T)}_{J'_{T}}$};

    \path node[unit] (S-1) at (\y+4.5,-0.5) {$\sigma$};
    \path node[unit] (S-2) at (\y+4.5,-2) {$\sigma$};
    \path node[hidden text] (S-0) at (\y+4.5,-3.25) {\vdots};
    \path node[unit] (S-3) at (\y+4.5,-4.5) {$\sigma$};

    \path node[hidden text] (T-1) at (\y+6,-0.5) {$\tilde{p}_1$};
    \path node[hidden text] (T-2) at (\y+6,-2) {$\tilde{p}_2$};
    \path node[hidden text] (T-0) at (\y+6,-3.25) {\vdots};
    \path node[hidden text] (T-3) at (\y+6,-4.5) {$\tilde{p}_n$};

    \path node[hidden neuron, pin={[pin edge={->}]right:$\displaystyle {p}_1 = \tilde{p}_1\sum_{j=1}^m z_{1j}\,b_{1j}$}] (M-1) at (\y+7.5,-0.5) {$\times$};
    \path node[hidden neuron, pin={[pin edge={->}]right:$\displaystyle{p}_2 = \tilde{p}_2\sum_{j=1}^m z_{2j}\,b_{2j}$}] (M-2) at (\y+7.5,-2) {$\times$};
    \path node[hidden text] (M-0) at (\y+7.5,-3.25) {\vdots};
    \path node[hidden neuron, pin={[pin edge={->}]right:$\displaystyle {p}_n = \tilde{p}_n\sum_{j=1}^m z_{nj}\,b_{nj}$}] (M-3) at (\y+7.5,-4.5) {$\times$};

    \foreach \src in {1,...,4}
    	\foreach \dest in {1,...,3}
			\path (I-\src) edge (H-\dest); 
			
	\foreach \src in {1,...,3}
		\foreach \dest in {1,...,3}
			\path (J-\src) edge (S-\dest); 
			
	\foreach \src in {1,...,3}
		\path (S-\src) edge (T-\src); 
	
	\foreach \src in {1,...,3}
		\path (T-\src) edge (M-\src); 
	
	\node at  ($(H-2.south east) + (0.45,0)$) {\ldots};

   	 \node[input text] (temp-1) at (\x+5,-0.85) {}; 
   	 \draw (temp-1) edge[bend right=-25,solid,line width=0.6mm,dotted,color=gray] node[above,color=black]{$z_{11},\ldots,z_{1m}$} (M-1); 
	
	 \node[hidden text] (temp-2) at (\y+6.6,0.5) {$\mathbf b$};
	 \draw[<-,solid,line width=0.5mm,dotted,color=gray] (M-1)  -- (\y+6.7,0.43); 
	 
	 \node[input text] (temp-3) at (\x+5,-2.5) {}; 
   	 \path (temp-3) edge[bend right=15,solid,line width=0.6mm,dotted,color=gray]  (M-2);
   	 
   	 \node[hidden text] (temp-4) at (\y+6.6,-1.05) {$\mathbf b$};
	 \draw[<-,solid,line width=0.5mm,dotted,color=gray] (M-2)  -- (\y+6.7,-1.13); 
	 
	 \node[input text] (temp-5) at (\x+5,-4.2) {}; 
   	 \path (temp-5) edge[bend right=25,solid,line width=0.6mm,dotted,color=gray]  
node[auto,color=black]{$z_{n1},\ldots,z_{nm}$}    	 
   	 (M-3);
   	 
   	 \node[hidden text] (temp-6) at (\y+6.6,-3.60) {$\mathbf b$};
	\draw[<-,solid,line width=0.6mm,dotted,color=gray] (M-3)  -- (\y+6.7,-3.68);

    \draw [-,dashed] (6.5,1.5) -- (6.5,-6.5);
    
    \node at  (2.7,1.5) {\textbf{Allocation Network} $g$};	 
    \node at  (13.2,1.5) {\textbf{Payment Network} $p$};

    \node[unit] (metrics) at (6.5,-7.5) {~\textit{Metrics}: $rev$, $rgt_1,\ldots,rgt_n$~~~};
	\node (alloc) at (\x+2,-5.9) {};
	\path (alloc) edge[bend right=25,solid,line width=1.5mm]  node[below]{$w_g$~~~} (metrics); 	 
	\node (pay) at (\x+11,-5.9) {};
	\path (pay) edge[bend left=25,solid,line width=1.5mm] node[below]{~~~$w_p$}  (metrics); 	 
		 
\end{tikzpicture}
	\vspace{-3pt}
	\caption{The allocation component and payment
          component of the RegretNet neural network for a setting with
          $n$ additive bidders and $m$ items. The inputs are bids from
          each bidder for each item. The revenue $rev$ and expected ex
          post $rgt_i$ are defined as a function of the parameters of
          the allocation component and payment component $w = (w_g, w_p)$.}
	\vspace{-5pt}
	\label{fig:gen-net}
\end{figure*}  

\new{An overview of the RegretNet architecture for additive valuations is given in Figure~\ref{fig:gen-net}.}
The allocation component encodes a randomized allocation rule $g^w:  \R^{nm} \rightarrow [0,1]^{nm}$ and the payment component encodes a payment rule $p^w:  \R^{nm} \rightarrow \R_{\geq 0}^{n}$, both of which are modeled as feed-forward, fully-connected networks with
\new{a tanh activation function in each of the hidden nodes}. 
The input layer consists of bids \new{$b_{ij} \geq 0$} representing the valuation of bidder $i$ for item $j$. 

The allocation component outputs a vector of
allocation probabilities $z_{1j} = g_{1j}(b), \ldots, z_{nj} = g_{nj}(b)$, for each item
$j\in [m]$. To ensure feasibility, i.e., that the probability of an item being allocated is at most one, the allocations are computed using
a softmax activation function, so that for all items $j$, we have $\sum_{i=1}^n z_{ij} \leq 1$. To accommodate the possibility of an item not being assigned, we include a dummy node in the softmax computation to hold the residual allocation probability. The payment component outputs a payment for each bidder that denotes the amount the bidder should pay in expectation for a particular bid profile.  

To ensure that the auction satisfies \textit{ex post IR}, i.e., does
not charge a bidder more than her expected value for the allocation,
the network first computes a normalized payment
$\tilde{p}_i \in [0,1]$ for each bidder $i$ using a sigmoidal unit,
and then outputs a payment
$p_i = \tilde{p}_{i}(\sum_{j=1}^m z_{ij}\, b_{ij})$, where \new{the}
$z_{ij}$'s are \new{the} outputs from the allocation component.
\ZFadd{This guarantees ex post IR, since the payment can be
  represented as a distribution over payments for each allocation in
  the support of the randomized allocation, where each payment is at
  most the bidder's reported value for that allocation.}

\if 0
It is always
feasible to convert the payment into an equivalent, lottery in payments, such
that a bidder's payment is no greater than her value for all
outcomes in the support of the distribution output by the allocation network.  
\fi
\if 0
A formal description of the allocation rule is given below:
\begin{eqnarray*}
	h^{(1)}_{j} &=& \sigma(\alpha^{(1)}_j \cdot b),
	~~\forall j = 1, \ldots, J_1\\
	h^{(k)}_j &=& \sigma(\alpha^{(k)}_j \cdot \h^{(k-1)}), 
	~~\forall k = 2, \ldots, R,\, j = 1, \ldots, J_{k}\\
	s_{ij} &=& \alpha^{(R+1)}_{ij} \cdot \h^{(R)},
	~~\forall i = 1, \ldots, n+1,\, j = 1, \ldots, m\\
	g^\w_{ij}(b) &=& \mathit{softmax}_i(s_{1j}, \ldots, s_{nj}, s_{n+1,j})
	~~\forall i = 1, \ldots, n,\, j = 1, \ldots, m,
\end{eqnarray*}
where weights $\alpha_j^{(1)} \in \R^{nm}$, $\alpha_j^{(k)} \in \R^{J_{k-1}}$, for $k = 2, \ldots, R$, and $j = 1, \ldots, J_{k}$,  and weights 
$\alpha_{ij}^{(R+1)} \in \R^{J_{R}}$, for $i = 1, \ldots, n$, and $j = 1, \ldots, m$. The
sigmoid activation function is $\sigma(z) = 1/(1+e^{-z})$. The
softmax activation function for item $j$ is
$\mathit{softmax}_i(s_{1j}, \ldots, s_{nj}, s_{n+1,j}) \,=\,
e^{s_{ij}}/\sum_{k=1}^{n+1} e^{s_{kj}}$, where $s_{n+1,j}$ is a
dummy input that corresponds to the item not being allocated to any
bidder.  
\fi  
\if 0
A formal description of the payment rule is given below: 
\begin{eqnarray*}
	c^{(1)}_{j} &=& \sigma(\beta^{(1)}_j \cdot b), ~~\forall j = 1, \ldots, J'_1 \\
	c^{(k)}_j &=& \sigma(\beta^{(k)}_j \cdot \c^{(k-1)}), 
	~~\forall k = 2, \ldots, T,\, j = 1, \ldots, J'_k\\
	s_{i} &=& \beta^{(T+1)}_{i} \cdot \c^{(T)},
	~~\forall i = 1, \ldots, n\\
	p^{\w}_i(b) &=& relu(s_{i}), ~~\forall i = 1, \ldots, n,
\end{eqnarray*}
where  $relu(z) = \max\{z, 0\}$ ensures that the payments are non-negative, and  weights 
$\beta^{(1)}_j \in \R^{nm}$, $\beta^{(k)}_j \in \R^{J'_{k-1}}$, for $k = 2, \ldots, T+1$, and
$j = 1, \ldots, J'_{k}$. 
\fi
\if 0
In the special case where there is only one bidder, the allocation
network needs to only output a probability for each item, denoting
whether the item is allocated to the bidder. The network
uses  sigmoidal activation output for each item instead
of a softmax function (Figure \ref{fig:regretnet-single-buyer} in Appendix \ref{app:regretnet}).
\fi

\subsubsection{Unit-Demand Valuations}\label{sec:unit-demand-architecture}

The allocation component  for unit-demand bidders is the feed-forward network shown \new{in Figure \ref{fig:regretnet-ud}}. For revenue maximization in this setting, it is sufficient to
consider allocation rules that assign at most one item to each
bidder.\footnote{
	\color{black}This holds by a simple reduction argument: 
	for any IC auction that allocates multiple items, one can construct an IC auction with the same  revenue  by retaining only the most-preferred  item among those allocated to a bidder.}
      In the case of randomized allocation rules, this requires that
      the total allocation probability to each bidder is at most one,
      i.e., $\sum_{j}z_{ij} \leq 1, ~\forall i \in [n]$. We would also
      require that no item is over-allocated, i.e.,
      $\sum_{i}z_{ij}\leq 1, ~\forall j \in [m]$. Hence, we design the
      allocation component such that the matrix of output
      probabilities $[z_{ij}]_{i,j = 1}^n$ is doubly
      stochastic.\footnote{\new{This is a slightly
          more general definition for
          doubly stochastic than is typical. Doubly stochastic is more
          typically defined on a square matrix with the sum of rows
          and the sum of columns equal to 1.}}

\begin{figure*}[t]
	\begin{subfigure}[b]{0.99\textwidth}
		\centering
\begin{tikzpicture}[scale=0.7,transform shape, shorten >=1pt,->,draw=black!100, node distance=\layersep, thick]
    \tikzstyle{input text}=[draw=white,minimum size=22pt,inner sep=0pt]
    \tikzstyle{input neuron}=[circle,draw=black!100,minimum size=17pt,inner sep=0pt,thick]
    \tikzstyle{hidden neuron1}=[draw=black!100,minimum size=25pt,inner sep=0pt,thick]
    \tikzstyle{hidden neuron}=[circle,draw=black!100,minimum size=25pt,inner sep=0pt,thick]
    \tikzstyle{hidden text}=[draw=white,minimum size=22pt,inner sep=0pt,thick]
    \tikzstyle{unit}=[circle,draw=black!100,minimum size=20pt,inner sep=0pt,thick]

  \node[input text] (I-0) at (0,-0.65) {$\vdots$};
    \node[input text] (I-0) at (0,-2.45) {$\vdots$};
    \node[input text] (I-0) at (0,-4.2) {$\vdots$};
    \node[input neuron, pin={[pin edge={<-}]left:$b_{11}$}] (I-1) at (0,0) {};
    \node[input neuron, pin={[pin edge={<-}]left:$b_{1m}$}] (I-2) at (0,-1.5) {};
    \node[input neuron, pin={[pin edge={<-}]left:$b_{n1}$}] (I-3) at (0,-3.5) {};
	\node[input neuron, pin={[pin edge={<-}]left:$b_{nm}$}] (I-4) at (0,-5) {};    

    \path node[hidden neuron] (H-1) at (1.5,-0.5) {$h^{(1)}_1$};
    \path node[hidden neuron] (H-2) at (1.5,-2) {$h^{(1)}_2$};
    \path node[hidden text] (H-0) at (1.5,-3.25) {\vdots};
    \path node[hidden neuron] (H-3) at (1.5,-4.5) {$h^{(1)}_{J_1}$};

    \path node[hidden neuron] (J-1) at (3,-0.5) {$h^{(R)}_1$};
    \path node[hidden neuron] (J-2) at (3,-2) {$h^{(R)}_2$};
    \path node[hidden text] (J-0) at (3,-3.25) {\vdots};
    \path node[hidden neuron] (J-3) at (3,-4.5) {$h^{(R)}_{J_{R}}$};

    \node[input text] (S-0) at (4.5,-0.65) {$\vdots$};
    \node[input text] (S-0) at (5.5,-0.8) {$\ldots$};
    \node[input text] (S-0) at (5.5,-3.5) {$\ldots$};
 
    \node[input text] (S-0) at (6.5,-0.65) {$\vdots$};
    \node[input text] (S-0) at (5.5,-4.2) {$\vdots$};
    \node[input text] (S-0) at (5.5,-5) {$\ldots$};

	\path node[unit] (S-1) at (4.5,-0) {$s_{11}$};
    \path node[unit] (S-2) at (4.5,-1.5) {$s_{n1}$};
    \path node[unit] (S-3) at (4.5,-3.5) {$s'_{11}$};
    \path node[unit] (S-4) at (4.5,-5) {$s'_{n1}$};

    \path node[unit] (S-11) at (6.5,-0) {$s_{1m}$};
    \path node[unit] (S-22) at (6.5,-1.5) {$s_{nm}$};
    \path node[unit] (S-33) at (6.5,-3.5) {$s'_{1m}$};
    \path node[unit] (S-44) at (6.5,-5) {$s'_{nm}$};

	\path node[unit, pin={[pin edge={->}]right:$z_{11} = \min\{\bar{s}_{11},\bar{s}'_{11}\}$}] (S-111) at (8.5,-0) {};
    \path node[unit, pin={[pin edge={->}]right:$z_{n1} = \min\{\bar{s}_{n1},\bar{s}'_{n1}\}$}] (S-222) at (8.5,-1.5) {};
    \path node[unit, pin={[pin edge={->}]right:$z_{1m} = \min\{\bar{s}_{1m},\bar{s}'_{1m}\}$}] (S-333) at (8.5,-3.5) {};
    \path node[unit, pin={[pin edge={->}]right:$z_{nm} = \min\{\bar{s}_{nm},\bar{s}'_{nm}\}$}] (S-444) at (8.5,-5) {};

    \foreach \dest in {1,...,3}
    	\foreach \src in {1,...,4}
			\path (I-\src) edge (H-\dest); 			
			
	\foreach \src in {1,...,3}
		\foreach \dest in {1,...,4}
			\path (J-\src) edge (S-\dest); 
	
	\node at  ($(H-2.south east) + (0.45,0)$) {\ldots};
	
	\draw[thick,dashed] ($(S-1.north west)+(-0.2,0.3)$)  rectangle ($(S-2.south east)+(0.2,-0.3)$) ;
	\node at  ($(S-1.north west) + (0.3,0.5)$) {$softmax$};
	
	\draw[thick,dashed] ($(S-11.north west)+(-0.2,0.3)$)  rectangle ($(S-22.south east)+(0.2,-0.3)$) ;
	\node at  ($(S-11.north west) + (0.3,0.5)$) {$softmax$};
	
	\draw[thick,dotted] ($(S-1.north west)+(-0.3,0.4)$)  rectangle ($(S-22.south east)+(0.3,-0.4)$) ;

	\draw[thick,dashed] ($(S-3.north west)+(-0.2,0.3)$)  rectangle ($(S-33.south east)+(0.2,-0.3)$) ;
	\node at  ($(S-33.south west) + (0.3,-0.5)$) {$softmax$};
	
	\draw[thick,dashed] ($(S-4.north west)+(-0.2,0.3)$)  rectangle ($(S-44.south east)+(0.2,-0.3)$) ;
	\node at  ($(S-44.south west) + (0.3,-0.5)$) {$softmax$};
	
	\draw[thick,dotted] ($(S-3.north west)+(-0.3,0.4)$)  rectangle ($(S-44.south east)+(0.3,-0.4)$) ;
	\draw[thick,dashed] ($(S-111.north west)+(-0.2,0.3)$)  rectangle ($(S-444.south east)+(0.2,-0.3)$) ;	
	
    \node at (8.5,-0.65) {$\vdots$};
    \node at (8.5,-2.45) {$\vdots$};
    \node at (8.5,-4.2) {$\vdots$};

    \node[input text] (S-0) at (7.5,-1.1) {$\bar{s}$};
    \node[input text] (S-0) at (7.5,-3.95) {$\bar{s}'$};

    \path[draw=black,solid,line width=1mm,fill=black] (7,-1.25) -- (8,-2);
    \path[draw=black,solid,line width=1mm,fill=black] (7,-3.75) -- (8,-3);

\end{tikzpicture}\\
		\centering
	\end{subfigure}
	\caption{The allocation component of the RegretNet neural network for  settings with $n$ {\em unit-demand} bidders and $m$ items.
	}
	\label{fig:regretnet-ud}
\end{figure*}

In particular,the allocation component computes two sets of scores $s_{ij}$'s and $s'_{ij}$'s. Let $s$, $s' \in \R^{nm}$ denote the corresponding matrices. The first set of scores are normalized along the rows and the second set of scores normalized along the columns. Both normalizations can be performed by passing these scores through softmax functions.
The allocation for bidder $i$ and item $j$  is then computed as the minimum of the corresponding normalized scores:
\begin{eqnarray*}
	z_{ij} \,=\,\varphi^{DS}_{ij}(s, s') \,=\,  \min\bigg\{
	\frac{e^{s_{ij}}}{\sum_{k=1}^{n+1} e^{s_{kj}}},\,\frac{e^{s'_{ij}}}{\sum_{k=1}^{m+1} \new{e^{s'_{ik}}}}\bigg\},
\end{eqnarray*}
where indices $n+1$ and $m+1$ denote dummy inputs that correspond to an item not being allocated to any bidder and a bidder not being allocated any item, respectively. %

\new{We first show that $\varphi^{DS}(s,s')$ as constructed is doubly stochastic, and that we do not lose in generality by the constructive approach that we take. See Appendix~\ref{APP:UNIT_DEMAND_DS} for a proof.}

\begin{lem}\label{LEM:UNIT_DEMAND_DS}
	\new{The matrix} $\varphi^{DS}(s, s')$ is doubly stochastic $\forall\, s, s' \in \R^{nm}$. \new{For any doubly stochastic matrix} $z \in [0,1]^{nm}$, $\exists\, s, s' \in  \R^{nm}$, for which $z = \varphi^{DS}(s, s')$.
\end{lem}

It remains to show that doubly-stochastic matrices correspond to
lotteries over one-to-one assignments. %
\pdadd{This is an easy corollary of~\cite{Birkhoff46}} and also a special case of the \emph{bihierarchy} structure proposed
in~\cite{BudishAER13} (Theorem 1), which we state in the following
lemma for completeness.
\begin{lem}[\cite{Birkhoff46}]\label{LEM:BVN-DECOMPOSITION}
  Any doubly stochastic matrix $A \in \mathbb{R}^{n \times m}$ can be
  represented as a convex combination of matrices $B^1, \dots, B^k$
  where each $B^\ell \in \{0,1\}^{n \times m}$ and
  $\sum_{j \in [m]} B_{ij} \leq 1$, $\forall i \in [n]$ and
  $\sum_{i \in [n]} B_{ij} \leq 1$, $\forall j \in [m]$.
      \end{lem}

      \citet{BudishAER13} also propose a polynomial algorithm to
      decompose the doubly stochastic matrix.
      The payment component for unit-demand valuations is the same as
      for the case of additive valuations (see Figure
      \ref{fig:gen-net}).

\subsection{Covering Number Bounds} 

\new{We conclude this section by instantiating our generalization
  bound from Section~\ref{sec:generalization-bound} to RegretNet,
  where we have both a regret and revenue term. Analogous results can
  also   be stated for RochetNet, where we only have a revenue term.}
Here,  $\Vert \cdot \Vert_1$ is the induced matrix norm, i.e. $\Vert
w\Vert_1 = \max_{j}\sum_{i} |w_{ij}|$.
\begin{thm}\label{thm:cover_regretnet}
	For RegretNet with $R$ hidden layers, $K$ nodes per hidden layer, $d_g$ parameters in the allocation component, $d_p$ parameters in the payment component, $m$ items, $n$ bidders, a sample size of $L$, and  the vector of all model parameters w \new{satisfying} %
	$\|w\|_{1} \leq W$,
	the following are valid bounds
	for \new{the $\Delta_{L}$ term defined in Theorem~\ref{THM:GBOUND},}
	for different bidder valuation types:

	(a) additive valuations:
	
	$%
	\Delta_{L} \leq O\big(\sqrt{R(d_g+d_p)
		\log(LW\max\{K, mn\})
		/
		{L}
	}
	\big)$,
	
	(b) unit-demand valuations:
	
	$\displaystyle
	\Delta_{L} \leq O\big(
	\sqrt{R(d_g+d_p)
		\log(LW\max\{K, mn\})
		/
		{L}
	}\big)$,
\end{thm}
The proof is given in Appendix~\ref{APP:THM_COVER_REGRETNET}. As the sample size $L \rightarrow \infty$, the term $\Delta_{L} \rightarrow 0$. 
The dependence of the above result on the number of layers, nodes, and parameters in the network is similar to standard covering number bounds for neural networks \citep{AnthonyP09}. 

\section{Training the Networks} \label{sec:training}

We next describe how we train the neural network architectures
presented in the previous sections.

The approach that we take for RochetNet is the standard
  (projected) stochastic gradient descent
  (SGD) for loss function
  $\mathcal{L}(\alpha, \beta)$ in Equation~\ref{eq:loss-rochetnet}.
  For additive valuations, we 
  project each weight $\alpha_{jk}$ during training
  into $[0,1]$ to guarantee
  feasibility.

In the case of RegretNet, we need to take care
  of the need for incentive alignment  directly. We use the augmented Lagrangian method to solve the constrained
  training problem in~\eqref{eq:ml-detailed2} over the space of
  \new{neural network} parameters $w$.
  The Lagrangian function for the optimization problem,
augmented with a quadratic penalty term for violating the constraints, is
\begin{eqnarray}
	\C_\rho(w; \lambda) ~=~
	-\frac{1}{L}\sum_{\ell=1}^L \sum_{i \in N} p^w_i(v^{(\ell)})
	\,+\, \sum_{i\in N}\lambda_{i}\,\widehat{\mathit{rgt}}_i(w)
	\,+\, \frac{\rho}{2} \sum_{i\in N}\Big(\widehat{\mathit{rgt}}_i(w)\Big)^2,
\end{eqnarray}
where $\lambda\in\mathbb{R}^n$ is a vector of Lagrange multipliers, 
and
$\rho > 0$ is a fixed
parameter that controls the weight on the quadratic penalty. 

\begin{algorithm}[tb]
	\caption{RegretNet Training}
	\label{alg:training}
	\begin{algorithmic}[1]
		\STATE {\bfseries Input:} Minibatches $\mathcal{S}_1, \ldots, \mathcal{S}_T$ of size $B$%
		\STATE {\bfseries Parameters:} 
		$\forall t, \rho_t>0$, $\gamma>0$, $\eta>0$, $\new{\Gamma} \in \N$, $K \in \N$
		\STATE {\bfseries Initialize:} $w^0 \in \R^d$, $\lambda^0 \in \R^n$ %
		\FOR{$t~=~0$ \textbf{to} $T$}
		\STATE Receive minibatch \new{$\mathcal{S}_t \,=\, \{v^{(1)}, \ldots, v^{(B)}\}$}
		\STATE Initialize misreport ${v'}_{i}^{(\ell)}\in V_i, \forall \ell \in [B],~ i \in N$
		\FOR{$r~=~0$ \textbf{to} \new{$\Gamma$}}
		\STATE ~~$\forall \ell \in [B],~ i \in N:$
		\STATE ~~~~~~~~${v'}_{i}^{(\ell)} \leftarrow {v'}_{i}^{(\ell)} +\gamma\nabla_{v'_i}\!\big[u^w_i\big(v^{(\ell)}_i; \big(v'_i, v^{(\ell)}_{-i}\big)\big)\big]\,\Big\vert_{v'_i={v'}_{i}^{(\ell)}}$
		\ENDFOR
		\STATE Compute regret gradient: ~$\forall \ell \in [B], i \in N$:
		\STATE ~~~~$g^t_{\ell, i} ~=~$
		\STATE ~~~~~~$\nabla_w\big[u^{w}_i\big(v_i^{(\ell)}; \big({v'}_{i}^{(\ell)}, v^{(\ell)}_{-i}\big)\big) - u^{w}_i(v_i^{(\ell)}; v^{(\ell)}) \big]\,\Big\vert_{w=w^t}$
		\STATE Compute Lagrangian gradient using \eqref{eq:C-grad} and update $w^t$:
		\STATE ~~~~$w^{t+1} \leftarrow w^t \,-\, \eta\nabla_w\, \C_{\rho_t}	(w, \lambda^{t})\,\Big\vert_{w=w^t}$
		\STATE Update Lagrange multipliers once in $Q$ iterations:
		\STATE ~~~~\textbf{if} {$t$ is a multiple of $Q$}
		\STATE ~~~~~~~~$\lambda^{t+1}_i \leftarrow \lambda_i^{t} + \rho_t\,\widetilde{\mathit{rgt}}_i(w^{t+1}), ~~\forall i \in N$
		\STATE ~~~~\textbf{else}  
		\STATE ~~~~~~~~$\lambda^{t+1} \leftarrow \lambda^t$
		\ENDFOR
	\end{algorithmic}
\end{algorithm}

The solver is described in Algorithm \ref{alg:training} and
alternates  between  the
following updates on the model parameters and the Lagrange multipliers: (a) $w^{new}\,\in\,  \text{argmin}_{w}\,\, \C_\rho(w^{old};\, \lambda^{old})$, and (b) $\lambda^{new}_{i} \,=\, \lambda_i^{old}\,+\, \rho\,\widehat{\mathit{rgt}}_i(w^{new}),~
\forall i\in N.$

We divide the training sample $\mathcal{S}$ into minibatches of size
$B$, and perform several passes over the training samples (with random
shuffling of the data after each pass).  We denote the minibatch
received at iteration $t$ by
\new{$\mathcal{S}_t \,=\, \{v^{(1)}, \ldots, v^{(B)}\}$.}  The update
(a) on model parameters involves an unconstrained optimization of
$\C_\rho$ over $w$ and is performed using a gradient-based
optimizer.

Let $\widetilde{\mathit{rgt}}_i(w)$ denote the empirical
regret in \eqref{eq:emp-rgt} computed on minibatch $\mathcal{S}_t$.
The gradient of $\C_\rho$ w.r.t.\ $w$ for fixed $\lambda^t$ is given
by:
\begin{align}
\nabla_w \, \C_\rho(w;\, \lambda^{t}) ~=~ &-\frac{1}{B}\sum_{\ell=1}^B \sum_{i\in N} \nabla_w\, p^w_i(v^{(\ell)})
+\, \sum_{i\in N}\, \sum_{\ell = 1}^B \lambda^t_{i}\, g_{\ell, i}
\,+\,\rho \sum_{i\in N} \, \sum_{\ell = 1}^B\, \widetilde{\mathit{rgt}}_i(w)\, g_{\ell, i},
\label{eq:C-grad}
\end{align}
where
\begin{align*}
g_{\ell, i} ~=~ \nabla_w\Big[ \max_{v'_i \in V_i}\,u^w_i\big(v_i^{(\ell)}; \big(v'_i, v^{(\ell)}_{-i}\big)\big) - u^w_i(v_i^{(\ell)}; v^{(\ell)})\Big].
\end{align*}

The terms $\widetilde{rgt}_i$ and $g_{\ell, i} $ in turn involve a ``max'' over misreports for each bidder $i$ and valuation profile $\ell$. We solve \new{this} inner maximization over misreports using another gradient based optimizer. 
In particular, we maintain a misreport ${v'}_{i}^{(\ell)}$ for each $i$ and valuation profile $\ell$. For each minibatch, we compute the optimal misreport, for each agent $i$ and each valuation profile $\ell$, by taking \new{$\Gamma$} gradient updates \dcpadd{from a randomly initialized valuation}, each update of the form
\begin{align}
    {v'}_{i}^{(\ell)} &= {v'}_{i}^{(\ell)} +\gamma\nabla_{v'_i}\!\big[u^w_i\big(v^{(\ell)}_i; \big(v'_i, v^{(\ell)}_{-i}\big)\big)\big]\,\Big\vert_{v'_i={v'}_{i}^{(\ell)}},\label{eq:gradasc}
%
\end{align}
for some $\gamma > 0$. 
\dcpadd{This is in the spirit of adversarial  machine learning, where these gradient steps on the input are taken to try to find a misreport for the agent that ``defeats" the incentive alignment of the mechanism.}

\if 0
The Lagrange multipliers are initialized to zero.
\fi
\if 0
In particular, the gradient is pushed through the loss function as
well as an empirical measure of regret.
\fi

\if 0
\begin{align}
\widetilde{\mathit{rgt}}_i(g^w,p^w) \,=\, \frac{1}{L}\sum_{\ell=1}^L
\max_{v' \in S_\ell}\,u^w_i\big(v_i^{(\ell)}; \big(v'_i, v^{(\ell)}_{-i}\big)\big) - u^w_i(v_i^{(\ell)}; v^{(\ell)}).
\label{eq:emp-rgt-approx}
\end{align}
\fi
\if 0
In our experiments, we analyzed the effect of the size of misreport samples on the quality of the learned auction. \hn{(todo?)}
\fi

\begin{figure}[t]
\centering
\includegraphics[page=1,scale=0.3,trim={2.5cm 0 2.5cm 0},clip]{./plots_paul/vis0801}
\includegraphics[page=3,scale=0.3,trim={2.5cm 0 2.5cm 0},clip]{./plots_paul/vis0801}%
\includegraphics[page=5,scale=0.3,trim={2.5cm 0 2.5cm 0},clip]{./plots_paul/vis0801}\\
\includegraphics[page=7,scale=0.3,trim={2.5cm 0 2.5cm 0},clip]{./plots_paul/vis0801}
\includegraphics[page=9,scale=0.3,trim={2.5cm 0 2.5cm 0},clip]{./plots_paul/vis0801}%
\includegraphics[page=11,scale=0.3,trim={2.5cm 0 2.5cm 0},clip]{./plots_paul/vis0801}\\
\caption{The gradient-based approach to regret approximation, shown for a well-trained auction for Setting~\ref{SI}. The top left plot shows the true valuation (green dot) and ten random initial misreports (red dots). %
	The remaining plots give snapshots of the progress of gradient ascent on the input, showing this  every four steps.
\label{fig:gradient}}
\vspace{-10pt}
\end{figure}


Figure~\ref{fig:gradient} gives a  
visualization of this search for defeating misreports
when learning an optimal auction for a problem with 
a single bidder  with an  
additive valuation over two items, where the 
bidder's value for each item is an  independent draw
from $U[0,1]$ (see Section~\ref{sec:manelliandvincent}, Setting~\ref{SI}).
In the visualization, the bidder has true valuation $(v_1, v_2) = (0.1, 0.8)$, with this input represented as a green dot. The red crosses represent possible misreports.
The heat map shows the {\em utility gain}, $u_1((v_1, v_2); (b_1, b_2))  - u_1((v_1, v_2); (v_1, v_2))$, for this bidder when bidding 
some  amount $(b_1,b_2) \in [0,1]^2$ rather than truthfully.
This mechanism is already approximately DSIC and
the utility gain 
is  negative everywhere (and truthful bidding has zero regret), with shades of yellow corresponding to a misreport that is almost as good as a true report and shades of green towards blue corresponding to a harmful misreport. 
We illustrate the use of input gradients
by initializing each of $10$ possible misreports \pdadd{(we are using 10 misreports for illustration, in our experiments we initialize only a single misreport)},
and  performing $\Gamma=20$ \dcpadd{gradient-ascent} steps~\eqref{eq:gradasc} for each misreport.  
Figure~\ref{fig:gradient} shows the  initial misreports along with a new snapshot of the location of each misreport 
every four \dcpadd{gradient-ascent} steps. 

We use the Adam optimizer \citep{KingmaB15} for
updates on \new{model parameters} $w$ and \new{misreports}
${v'}_{i}^{(\ell)}$.\footnote{\new{Adam is a variant of SGD that makes
    use of a momentum term to update weights. Lines 9 and 15 in the
    pseudo-code of Algorithm~\ref{alg:training} are stated for a
    standard SGD algorithm.}}
Since the optimization problem is non-convex, 
{the solver is not guaranteed to reach a globally optimal solution.} 
However, this training algorithm proves very effective in our experiments. 
The learned auctions incur very low regret and closely match the structure of optimal auctions in settings where this structure is known from existing theory.

\section{Experimental Results}
\label{sec:experiments}

In this section, we demonstrate that our approach can recover
near-optimal auctions for essentially all settings for which an
analytical solution is known, \new{that it is an effective tool for
  confirming or refuting hypotheses about optimal designs,} and that
it %
can find new auctions for settings where there is no known analytical
solution.  \new{We present %
  a representative subset of the results here, and provide additional
  experimental results in Appendix~\ref{APP:ADDITIONAL-EXPERIMENTS}.}

\subsection{Setup}
We implement our framework using the TensorFlow deep learning
library. 
%
\new{For RochetNet we initialized parameters $\alpha$ and $\beta$
  in~\eqref{eq:loss-rochetnet}
  using a random uniform
  initializer over the interval [0,1] and a zero initializer,
  respectively.} \new{For RegretNet} we used the tanh activation
function at the hidden nodes, and Glorot uniform initialization
\new{\citep{GlorotY10}}.  \new{We perform
  cross validation to decide
  on the number of hidden layers and the number of nodes in each
  hidden layer. We include exemplary numbers that illustrate the
  tradeoffs in Section~\ref{sec:scaling-up}.} %
\new{We trained RochetNet on $2^{15}$ valuation profiles sampled every iteration in an online manner. We used the Adam optimizer with a learning rate of 0.1 for 20,000 iterations for making the updates. The parameter $\kappa$ in Equation~(\ref{eq:rochetnet-gradient}) was set to 1,000. Unless specified otherwise we used a max network over 1,000 linear functions to model the induced utility functions, and report our results on a sample of 10,000 profiles.} 

\new{For RegretNet we used a sample of 640,000 valuation profiles for training and a sample of 10,000 profiles for testing.}
The augmented Lagrangian solver was run 
for a maximum of 80 epochs \new{(full passes over the training set)} %
with a minibatch size of 128. 
The value of $\rho$ in \new{the} augmented Lagrangian was set to {\color{black}1.0} and 
incremented every \new{two} epochs.

	An  update on $w^t$ was performed for every minibatch   %
	using the Adam optimizer 
	with learning rate
	0.001. For each update on $w^t$, we ran \new{$\Gamma=25$}
        misreport updates steps with learning rate 0.1.
At the end of 25 updates, the optimized misreports for the current minibatch were cached and used to initialize the misreports for the same minibatch in the next epoch. 
An update on $\lambda^t$ was performed \new{once every} 100 minibatches (i.e., \new{$Q=100$}).

We ran all experiments on a compute cluster with NVDIA \new{Graphics Processing Unit (GPU)} cores.%

\subsection{Evaluation}
In addition to the revenue of the learned auction on a test set,  we also
evaluate the regret \new{achieved by RegretNet}, averaged across all bidders and test valuation
profiles\new{, i.e., $rgt \,=\, \frac{1}{n}\sum_{i=1}^n
	\widehat{\mathit{rgt}}_i(g^{w},p^{w})$.}
Each  $\widehat{\mathit{rgt}_i}$ has \new{an inner ``max''} of the utility function over bidder valuations $v_i' \in V_i$ (see \eqref{eq:emp-rgt}). We evaluate these terms by running gradient ascent on $v_i'$ with a step-size of 0.1 for 2,000 iterations (we test 1,000 different random  initial $v'_i$ and  report the one \new{that} achieves the largest regret). %
\if 0
For lower-dimensional settings (I-V), we also evaluate each  $\widehat{\mathit{rgt}_i}$ by uniformly sampling 10,000 additional valuation profiles for each test profile $v^{(l)}$ and evaluating the max over the sampled profiles; we call the average regret value we get as $rgt_{s}$. 
\fi

\new{For some of the experiments we also report the total time
  required to train the network. This time is incurred during offline
  training, while the allocation and payments can be computed in a few
  milliseconds once the network is trained.}

\subsection{The Manelli-Vincent and Pavlov Auctions}\label{sec:manelliandvincent}

\new{As a representative example of the exhaustive set of analytical results that we can recover with our approach we discuss the Manelli-Vincent and Pavlov auctions \citep{ManelliVincent06,Pavlov11}. We specifically consider the following single-bidder, two-item settings:}

\begin{enumerate}[label=\Alph*.,ref=\Alph*]
	\item Single bidder with additive valuations over \new{two} items, where the item values \new{are independent draws} from $U[0,1]$. 
	\label{SI}
	\item Single bidder with unit-demand valuations over \new{two} items, where the item values \new{are independent draws} from $U[2,3]$. 
	\label{SV}
\end{enumerate}

\new{The optimal design for the first setting is given by \citet{ManelliVincent06}, who show that the optimal mechanism is deterministic and offers the bidder three options:  receive both items and pay $(4-\sqrt{2})/3$, receive item $1$ and pay $2/3$, or receive item 2 and pay $2/3$.}
\new{For the second setting \citet{Pavlov11} shows that it is optimal to offer a fair lottery $(\frac{1}{2},\frac{1}{2})$ over the items (at a discount), or to purchase any item at a fixed price. For the parameters here the price for the lottery is $\frac{1}{6} (8 + \sqrt{22}) \approx 2.115$ and the price for an individual item is $\frac{1}{6}+\frac{1}{6} (8 + \sqrt{22}) \approx 2.282$.}

\begin{figure}[t]
	\centering
	\centering
	\begin{subfigure}{0.49\textwidth}
		\centering
		\hspace*{-10pt}
		\includegraphics[scale=0.37]{./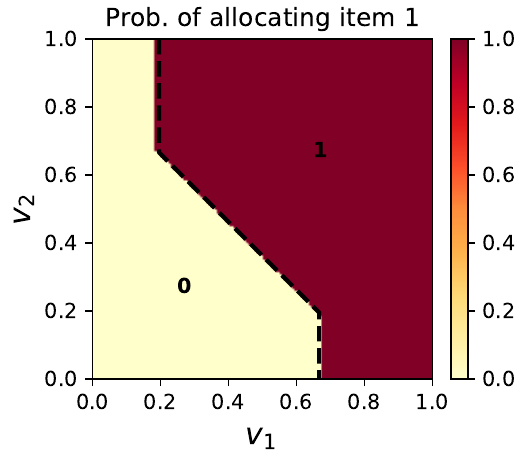}
		\hspace*{-10pt}{\scriptsize (a)}\hspace*{-2pt}
		\includegraphics[scale=0.37]{./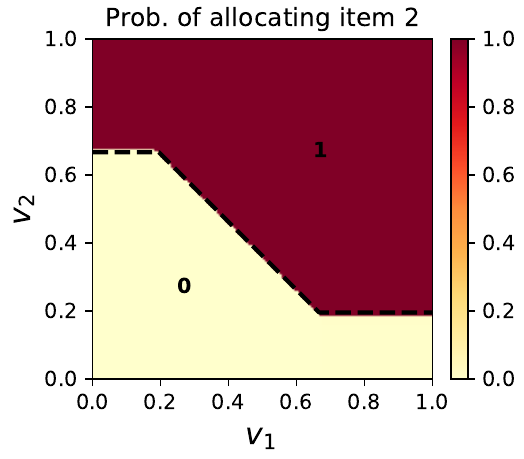}
	\end{subfigure}
	\begin{subfigure}{0.49\textwidth}
		\centering
		\hspace*{-10pt}
		\includegraphics[scale=0.37]{./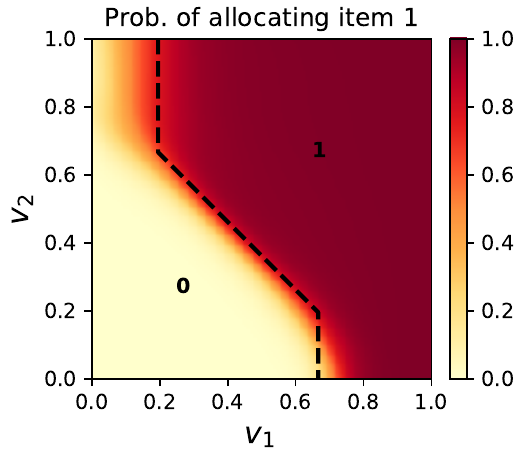}
		\hspace*{-10pt}{\footnotesize (b)}\hspace*{-4pt}
		\includegraphics[scale=0.37]{./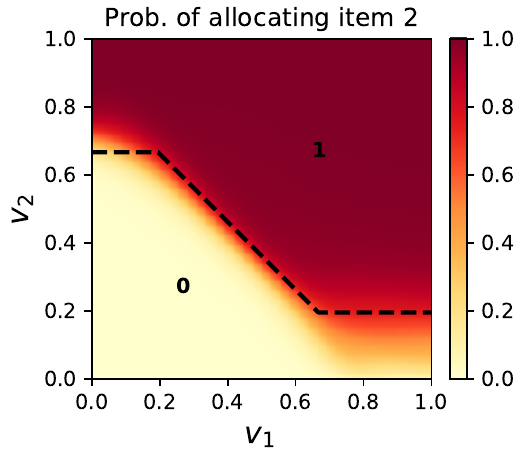}
	\end{subfigure}
	
	\begin{subfigure}{0.49\textwidth}
		\centering
		\hspace*{-16pt}
		\includegraphics[scale=0.145]{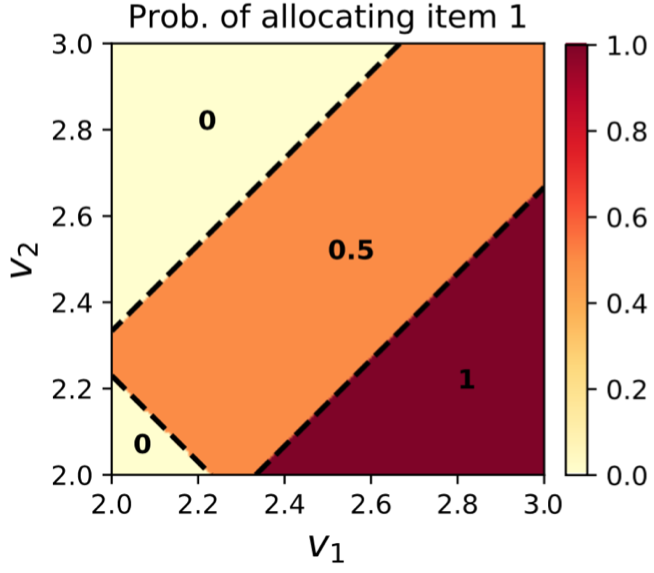}
		\hspace*{-10pt}{\scriptsize (c)}\hspace*{-2pt}
		\includegraphics[scale=0.37]{./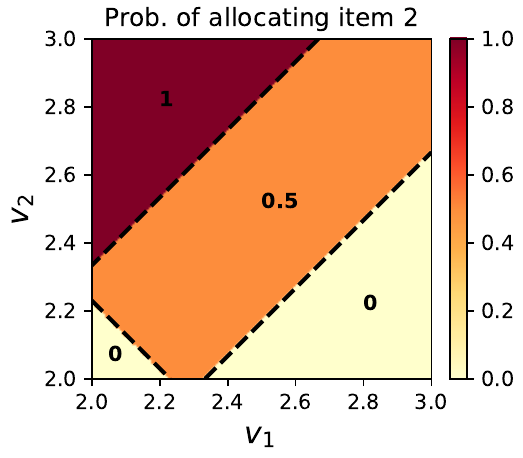}
	\end{subfigure}
	\begin{subfigure}{0.49\textwidth}
		\centering
		\hspace*{-10pt}
		\includegraphics[scale=0.37]{./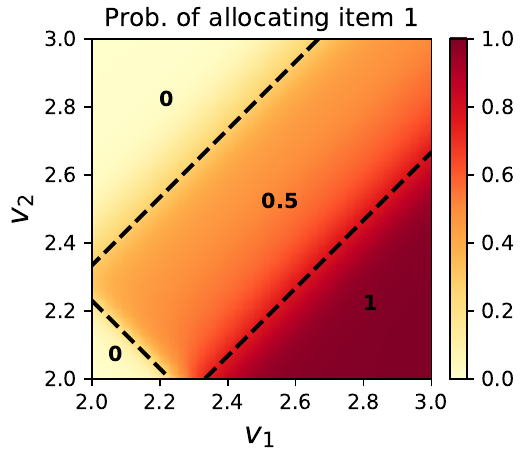}
		\hspace*{-10pt}{\footnotesize (d)}\hspace*{2pt}
		\includegraphics[scale=0.37]{./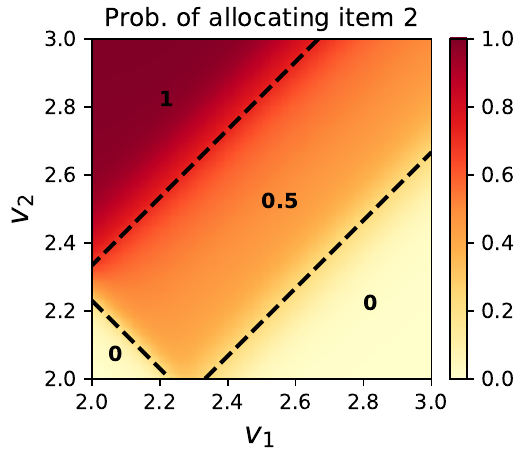}
	\end{subfigure}
	\caption{%
		Side-by-side comparison of allocation rules learned by RochetNet and RegretNet for single bidder, two items settings. Panels (a) and (b) are for Setting~\ref{SI} and Panels (c) and (d) are for Setting~\ref{SV}. 
		The panels describe the probability that the bidder is allocated item 1 (left) and item 2 (right) for different valuation inputs.  
		The optimal auctions are described by the regions separated by the dashed black lines, with the numbers in black the optimal probability of allocation in the region.
		\label{fig:alloc-basic-settings}}
\end{figure}

\new{We used two hidden layers with 100 hidden nodes in RegretNet for these settings.}
\new{A visualization of the optimal allocation rule and those learned by RochetNet and RegretNet is given in Figure~\ref{fig:alloc-basic-settings}. Figure~\ref{fig:small-settings}(a) gives the optimal revenue, the revenue and regret obtained by RegretNet, and the revenue obtained by RochetNet. Figure~\ref{fig:small-settings}(b) shows how these terms evolve over time during training in RegretNet.}

\new{We find that both approaches essentially recover the optimal design, not only in terms of revenue, but also in terms of the allocation rule and transfers. The auctions learned by RochetNet are exactly DSIC and match the optimal revenue precisely, with sharp decision boundaries in the allocation and payment rule. The decision boundaries for RegretNet are smoother, but still remarkably accurate. The revenue achieved by RegretNet matches the optimal revenue up to a $< 1\%$ error term and the regret it incurs is $< 0.001$.}
\new{The plots of the test revenue and regret show that the augmented Lagrangian method is effective in driving the test revenue and the test regret towards optimal levels.}

\new{The additional domain knowledge incorporated into the RochetNet architecture leads to exactly DSIC mechanisms that match the optimal design more accurately, and speeds up computation (the training took about 10 minutes compared to 11 hours). On the other hand, we find it surprising how well RegretNet performs given that it starts with no domain knowledge at all.}

\new{We present and discuss a host of additional experiments with single-bidder, two-item settings in Appendix~\ref{APP:ADDITIONAL-EXPERIMENTS}.}

\begin{figure}[t]
	\begin{subfigure}[b]{0.99\textwidth}
		\begin{center}
			\begin{tabular}{|l|c|c|c|c|c|c|c|}
				\hline
				\multirow{2}{*}{Distribution} & Opt & \multicolumn{5}{|c|}{RegretNet} & RochetNet \\
				\cline{2-8} & $\mathit{rev}$ & $\mathit{rev}$ & $\mathit{rgt (mean)}$& $\mathit{rgt (90\%)}$ &$\mathit{rgt(95\%)}$& $\mathit{rgt(99\%)}$ & $\mathit{rev}$ \\
				\hline
				Setting \ref{SI} & 0.550 & 0.554 & $< 0.001$ & $< 0.001$ & $ 0.001$ & $ 0.002$ &0.550 \\
				Setting \ref{SV} & 2.137 & 2.137 & $< 0.001$ & $< 0.001$ & $< 0.001$ & $ 0.002$ &2.136 \\ 
				\hline
			\end{tabular}
			~\\[16pt]\footnotesize{(a)}
		\end{center}
	\end{subfigure}

	\begin{subfigure}[b]{0.99\textwidth}
		\vspace*{0.25cm}
		\begin{center}
			\includegraphics[scale=0.57]{./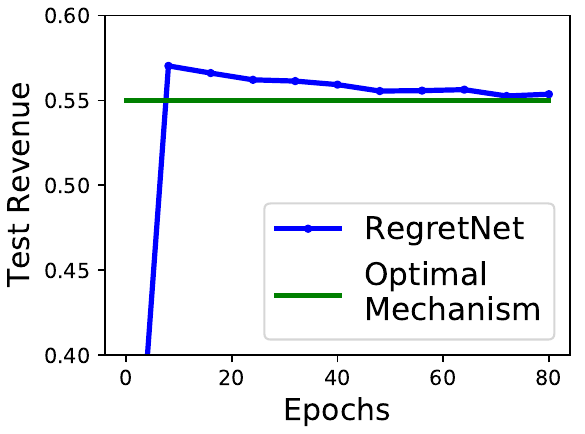}
			~~
			\includegraphics[scale=0.57]{./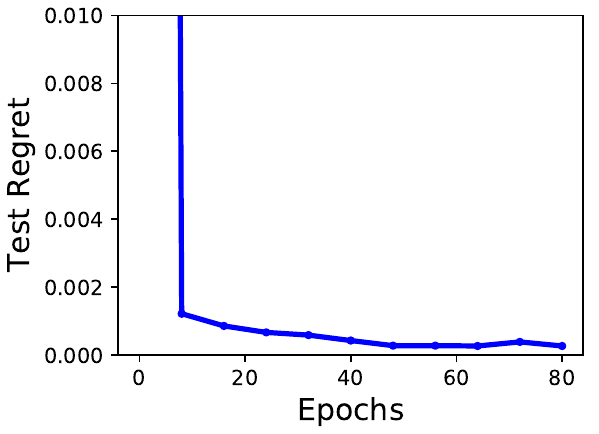}
			~\\[-3pt]\footnotesize{(b)}
		\end{center}
	\end{subfigure}
	\caption{ (a): Test revenue and test regret for RegretNet and test revenue for RochetNet  for Settings \ref{SI} and \ref{SV}. 
		(b): Plot of test revenue and test regret as a function of training epochs for Setting \ref{SI} with RegretNet.}
	\label{fig:small-settings}
	\vspace{-7pt}
\end{figure}

\subsection{The Straight-Jacket Auction}

\begin{figure}[t]
	\centering
	\begin{tabular}{|r|c|c|}
		\hline
		Items & SJA ($\mathit{rev}$) & RochetNet ($\mathit{rev}$)\\
		\hline
		2& 0.549 & 0.549\\
		3& 0.875 & 0.875\\
		4& 1.219 & 1.219\\
		5 & 1.576 & 1.576\\
		6 & 1.943 & 1.943\\
		7 & 2.318 & 2.318\\
		8 & 2.699 & 2.699\\
		9 & 3.086 & 3.086\\
		10 & 3.478 & 3.478\\
		\hline
	\end{tabular}
	\caption{Revenue of the Straight-Jacket Auction (SJA) computed via the recursive formula in~\citep{GK18}, and the test revenue of the auction learned by RochetNet, for various numbers of items $m$. The SJA is known to be optimal for up to six items and conjectured to be optimal for any number of items.\label{fig:sja}}
	\vspace{-10pt}
\end{figure}

Extending the analytical result of \cite{ManelliVincent06} to a single bidder, and an arbitrary number of items (even with additive preferences, all uniform on $[0,1]$) has proven elusive. It is not even clear whether the optimal mechanism is deterministic or requires randomization.

\new{A breakthrough came with~\cite{GK18}, who were able to find a pattern in the results for two items and three items. The proposed mechanism---the Straight-Jacket Auction (SJA)---offers bundles of items at fixed prices. The key to finding these prices is to view the best-response regions as a subdivision of the $m$-dimensional cube, and observe that there is an intrinsic relationship between the price of a bundle of items and the volume of the respective best-response region.}

\new{\citet{GK18} give a recursive algorithm for finding the
	subdivision and the prices, and used LP duality to prove that the
	SJA is optimal for $m \leq 6$ items.\footnote{
		\new{The duality argument developed by~\citeauthor{GK18}
			is similar but incomparable to the duality approach of~\cite{DaskalakisDT13}. We will return to the latter in Section~\ref{sec:dualityddt}.}} They also conjecture that the SJA remains optimal for general $m$, but were unable to prove it.}

\new{Figure~\ref{fig:sja} gives the revenue of the SJA, and that found by RochetNet for $m \leq 10$ items. We used a test sample of $2^{30}$ valuation profiles (instead of 10,000) to compute these numbers for higher precision. It shows that RochetNet finds the optimal revenue for $m \leq 6$ items, and that it finds DSIC auctions whose revenue matches that of the SJA for $m = 7, 8, 9,$ and $10$ items. Closer inspection reveals that the allocation and payment rules learned by RochetNet essentially match those predicted by \citeauthor{GK18} for all $m \leq 10$.}
\new{We take this as strong additional evidence that their conjecture
  is correct.}

For these experiments, we used a max network over 10,000 linear
functions (instead of 1,000) to increase the representation and
flexibility of the neural network. \ZFadd{This
  \emph{overparameterization trick} is commonly used in deep learning
  and has proven to be very effective in practice~\citep{KSH12, pmlr-v97-allen-zhu19a}. We illustrate this effect in Appendix~\ref{app:rochet-overparam}.} We followed up on the usual training
phase with an additional 20 iterations of training using Adam
optimizer with learning rate 0.001 and a minibatch size of
$2^{30}$.

We also found it useful to impose {\em item-symmetry}
on the learned auction, especially for $m = 9$ and $10$ items, as this helped with
accuracy and reduced training time. Imposing symmetry comes without
loss of generality for auctions with an item-symmetric
distribution~\citep{Daskalakis12}. To impose item symmetry, we first permute the inputs to be in ascending order, compute the allocation and payment on this permuted input, and then invert the permutation of allocation to compute the mechanism for the original inputs. With these modifications it took about 13 hours to train the networks. 

\subsection{Discovering New Analytical Results}
\label{sec:dualityddt}

We next demonstrate the potential of RochetNet to help to
  discover new analytical results for optimal auctions.
  For this, we consider a single bidder with additive but correlated
  valuations for two items as follows:
	\begin{enumerate}[label=\Alph*.,start=3,ref=\Alph*]
		\item One additive bidder and two items, where the bidder's valuation is drawn uniformly from the triangle $T=\{(v_1, v_2)|\frac{v_1}{c}+v_2 \leq 2, v_1\geq 0, v_2\geq 1\}$ where $c>0$ is a free parameter. \label{exp:triangle-1}
                \end{enumerate}
                
	There is no analytical result for the optimal auction design
        for this setting. We ran RochetNet for different values of $c$
        to discover the optimal auction. The mechanisms learned by
        RochetNet for $c=0.5, 1, 3,$ and $5$ are shown in
        Figure~\ref{fig:alloc-triangle-setting}.

        Based on this, we
        conjectured that the optimal mechanism contains two menu items
        for $c \leq 1$, namely $\{(0,0), 0\}$ and $\{(1,1),
        \frac{2+\sqrt{1+3c}}{3}\}$, and three menu items for $c > 1$,
        namely $\{(0,0), 0\}$, $\{(1/c, 1), 4/3\}$, and $\{(1,1),
        1+c/3\}$, giving the optimal allocation and payment in each
        region.
        In particular, as $c$ transitions from values less than or equal to 1 to values larger than 1, the optimal mechanism transitions from being deterministic to being randomized. Figure~\ref{tab:triangle-via-dual} gives the revenue achieved by RochetNet and the conjectured optimal format for a range of parameters $c$ computed on $2^{30}$ valuation profiles.

\new{We can  validate the optimality of this conjectured design
  through duality theory~\citep{DaskalakisDT13}.
  The proof is given in Appendix~\ref{app:duality-framework}.}
\begin{figure}[t]
	\centering
	\begin{tabular}{|r|c|c|}
		\hline
		\multicolumn{1}{|c|}{c}& \multicolumn{1}{|c|}{Opt ($\mathit{rev}$)} & \multicolumn{1}{|c|}{RochetNet ($\mathit{rev}$)}\\
		\hline
		0.125 & 1.029 & 1.029\\
		0.200   & 1.046 & 1.046\\
		0.250  & 1.056 & 1.056\\
		0.500   & 1.104 & 1.104\\
		1.000     & 1.185 & 1.185\\
		3.000     & 1.481 & 1.481\\
		5.000     & 1.778 & 1.778\\
		8.000     & 2.222 & 2.222\\
		10.000    & 2.518 & 2.518\\
		20.000    & 4.000 & 4.000\\
		\hline
	\end{tabular}
	\caption{Test revenue of the newly discovered optimal mechanism and that of RochetNet, for Setting~\ref{exp:triangle-1} with varying parameter $c$.}\label{tab:triangle-via-dual}
\end{figure}

\begin{figure}[t]
	\centering
	\centering
	\begin{minipage}{0.49\textwidth}
		\centering
		\hspace*{-10pt}
		\includegraphics[scale=0.38]{./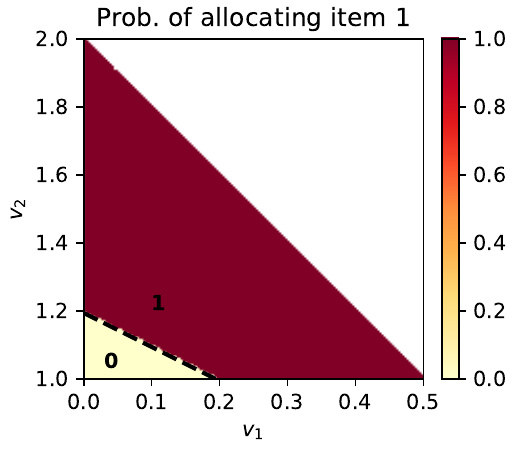}
		\hspace*{-10pt}{\scriptsize (a)}\hspace*{-2pt}
		\includegraphics[scale=0.38]{./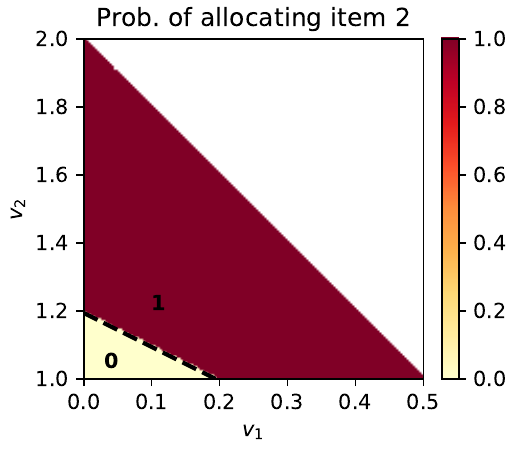}
	\end{minipage}
	\begin{subfigure}{0.49\textwidth}
		\centering
		\hspace*{-10pt}
		\includegraphics[scale=0.38]{./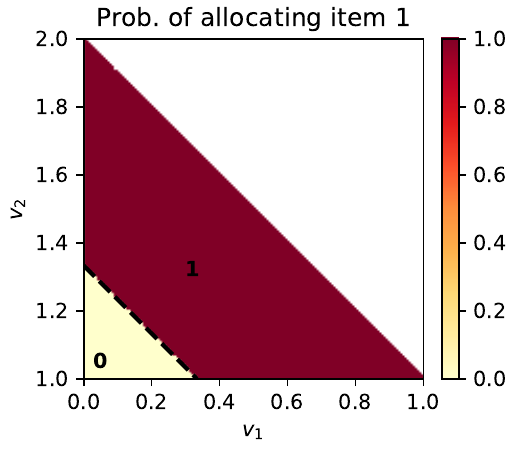}
		\hspace*{-10pt}{\footnotesize (b)}\hspace*{-4pt}
		\includegraphics[scale=0.38]{./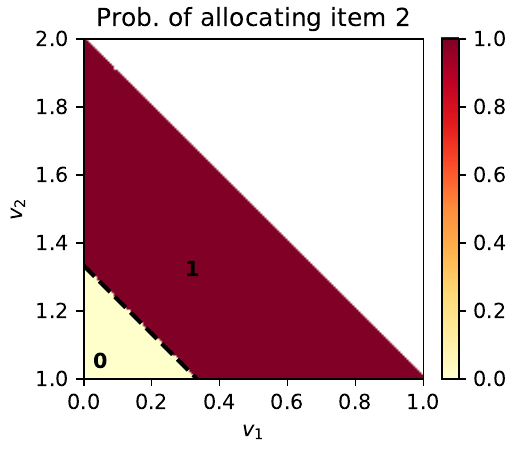}
	\end{subfigure}
	
	\begin{minipage}{0.49\textwidth}
		\centering
		\hspace*{-16pt}
		\includegraphics[scale=0.38]{./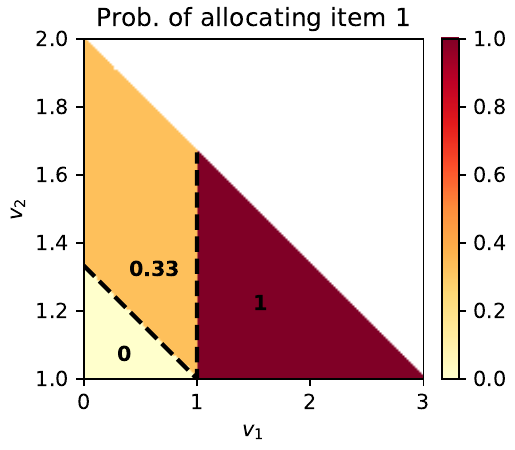}
		\hspace*{-10pt}{\scriptsize (c)}\hspace*{-2pt}
		\includegraphics[scale=0.38]{./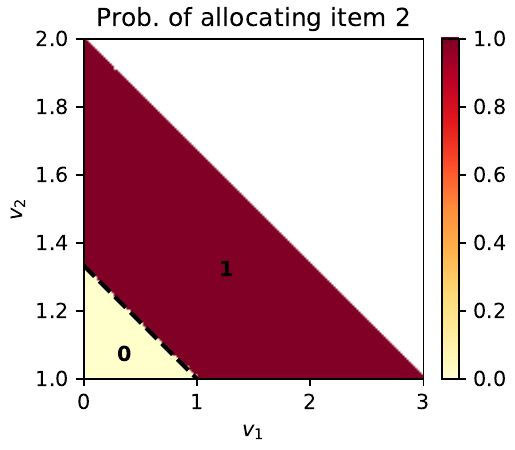}
	\end{minipage}
	\begin{subfigure}{0.49\textwidth}
		\centering
		\hspace*{-10pt}
		\includegraphics[scale=0.38]{./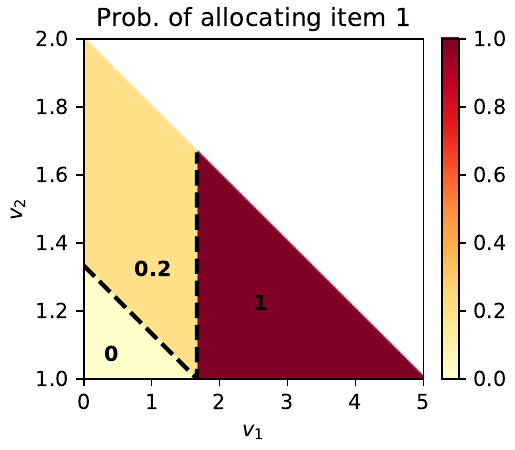}
		\hspace*{-10pt}{\footnotesize (d)}\hspace*{2pt}
		\includegraphics[scale=0.38]{./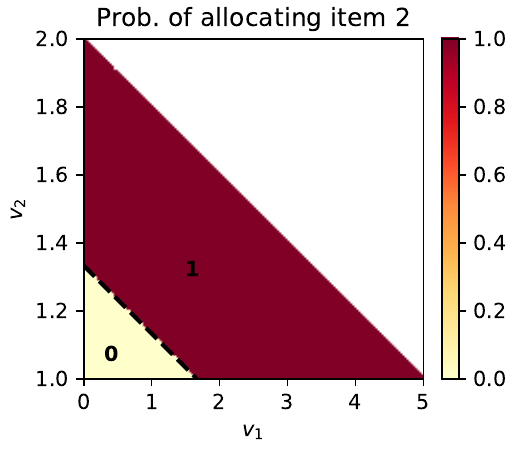}
	\end{subfigure}
	\caption{%
		Allocation rules learned by RochetNet for Setting~\ref{exp:triangle-1}. 
		The panels describe the probability that the bidder is allocated item 1 (left) and item 2 (right) for $c = 0.5, 1, 3,$ and $5$. The auctions proposed in Theorem~\ref{THM:NEW_UNIFORM_TRIANGLE} are described by the regions separated by the dashed black lines, with the numbers in black the optimal probability of allocation in the region.
		\label{fig:alloc-triangle-setting}}
	\vspace{-10pt}
\end{figure}

\begin{thm}\label{THM:NEW_UNIFORM_TRIANGLE}
	\new{For any $c>0$, suppose the bidder's valuation is uniformly distributed over set $T=\{(v_1, v_2)|\frac{v_1}{c}+v_2 \leq 2, v_1\geq 0, v_2\geq 1\}$. Then the optimal auction contains}
	\new{two menu items $\{(0,0), 0\}$ and $\{(1,1), \frac{2+\sqrt{1+3c}}{3}\}$ when $c \leq 1$, and}
	\new{three menu items $\{(0,0), 0\}$, $\{(1/c, 1), 4/3\}$, and $\{(1,1), 1+c/3\}$ otherwise.}
\end{thm}

  \ssredit{In  Appendix~\ref{app:additional-tri}, we also give the mechanisms learned by RochetNet for two additional settings. Taken together, these results demonstrate that RochetNet is a powerful tool to help in the discovery of new analytical results.}
\ZFadd{In follow-up work,~\citet{STZ18} also use a neural network framework,  closely related to RochetNet, to discover an
  optimal analytical result for a similar setting: a single additive
  bidder and two items, where the bidder's valuation is drawn
  uniformly from the triangle
  $\{(v_1, v_2)|\frac{v_1}{c}+v_2 \leq 1, v_1\geq 0, v_2\geq 0\}$.}

\subsection{Experiments with Optimal Mechanisms that Require an Infinitely-sized Menu}

\ssredit{We now demonstrate how RochetNet performs for settings where the optimal mechanism is known to require an  infinite number of menu choices. For this, we consider the following setting from~\cite{DaskalakisEtAl17}}:
\begin{enumerate}[label=\Alph*.,start=4,ref=\Alph*]
\item %
\new{One} additive bidders and \new{two} items, where bidders draw their value for each item \new{independently} from $\mathit{Beta}(\alpha=1,\beta=2)$.\footnote{A $Beta$ distribution with $\alpha=1, \beta=2$  has the density function $f(x) = 2(1 -x)$}\label{exp:beta}
\end{enumerate}

\ssredit{This setting and the corresponding optimal mechanism, with its infinite menu size, is described in detail in Example~3 of~\cite{DaskalakisEtAl17}. 
We seek to evaluate the performance of RochetNet for different-sized menus. In Figure~\ref{fig:beta}, we report the revenue, the number of menu choices represented in RochetNet, and the number of menu choices that are active for one or more samples in the test set.} 
\ssredit{As we increase the number of initialized menu choices, the number of active menu items increases as well.} 
\ssredit{Comparing the optimal infinite-sized menu with the menu learned by RochetNet, we find that the difference in revenue comes from a large number of menu items that each only contribute marginally to the net revenue ($< 10^{-5})$. RochetNet fails to learn some of these menus  due to the fixed size of mini-batches and the numerical tolerance error of the optimization routine. Regardless, the overall gap in revenue is negligible.} \pdadd{Already with two active menu items, RegretNet achieves a revenue of $\sim 0.3309$ ($99.93\%$ of optimal), while with three or more active menu items the revenue is at least $\sim0.3310$ ($99.96\%$ of optimal). } 



%
\begin{figure}[t]
	\centering
	\begin{tabular}{|c|c|c|c|}
		\hline
		\vtop{\hbox{\strut Number of initialized menu}\hbox{\strut choices in RochetNet}} & RochetNet ($\mathit{rev}$)& \vtop{\hbox{\strut Number of active menu}\hbox{\strut items in RochetNet}} \\
		\hline
		2 & 0.3309 & 2 \\
		5 & 0.3309 & 2 \\
		10 & 0.3310 & 3 \\
		20 & 0.3310 & 4 \\
		50 & 0.3310 & 6 \\
		100 & 0.3310 & 7 \\
		200 & 0.3310 & 11 \\
		500 & 0.3310 & 11 \\
		1,000 & 0.3310 & 12 \\
		2,000 & 0.3310 & 17 \\
		5,000 & 0.3310 & 34 \\
		10,000 & 0.3310 & 59 \\
		20,000 & 0.3310 & 89 \\
		\hline
	\end{tabular}
	\caption{Test revenue of the auction learned by RochetNet for different menu sizes in Setting~\ref{exp:beta}. The number of active menus increases as the number of initialized menu choices increases. The optimal mechanism requires an infinitely-sized menu and achieves a revenue of $0.3311$.\label{fig:beta}}
	\vspace{-10pt}
\end{figure}

\subsection{Scaling Up}\label{sec:scaling-up}

In this section, we consider settings with up to five bidders and up
to ten items. %
\new{This is} several orders of magnitude more complex than existing analytical or computational \new{results}. \new{It is also a natural playground for RegretNet, as no tractable characterizations of IC mechanisms are known for these settings.} 
\new{We specifically consider the following two settings, that
  generalize the basic setting considered in~\citet{ManelliVincent06} and~\citet{GK18} to more than one bidder:}
\begin{enumerate}[label=\Alph*.,start=5,ref=\Alph*]
	\item \new{Three} additive bidders and \new{ten} items, where  bidders draw their value for each item \new{independently} from $U[0,1]$. \label{exp:scale-up-1}
	\item \new{Five} additive bidders and \new{ten} items, where  bidders draw their value for each item \new{independently} from $U[0,1]$. \label{exp:scale-up-2}
\end{enumerate}

\new{An analytical description of the optimal auction for these settings is not known. However, running a separate Myerson auction for each item is optimal in the limit of the number of bidders \citep{Palfrey83}. For a regime with a small number of bidders, this provides a strong benchmark. We also compare to selling the grand bundle via a Myerson auction.} 

For Setting \ref{exp:scale-up-1}, we show in Figure \ref{fig:large-settings}(a) the revenue and regret of the learned auction on a validation sample of 10,000 profiles, obtained %
with different architectures. Here \new{$(R, K)$} denotes an architecture with $R$ hidden layers and \new{$K$} nodes per layer. %
The (5,\,100) architecture %
has the lowest regret among all the 100-node networks for \new{both Setting \ref{exp:scale-up-1}
	and Setting \ref{exp:scale-up-2}.} %
Figure \ref{fig:large-settings}(b) shows that the learned auctions yield higher revenue compared to the baselines, and do so with tiny regret.

\begin{figure}
	\begin{subfigure}{0.99\textwidth}
		\centering
		\includegraphics[scale=0.25]{./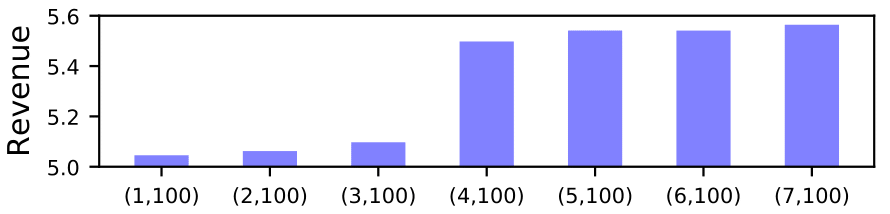}
		\includegraphics[scale=0.25]{./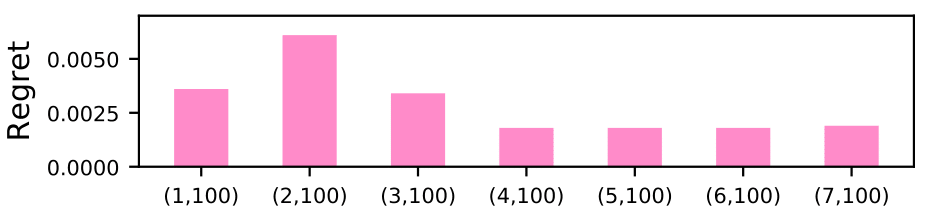}
		~\\[-5pt]
		{\scriptsize (a)}%
	\end{subfigure}
	~\\[5pt]

	\begin{subfigure}[b]{0.99\textwidth}
		\begin{center}
			\begin{tabular}{|l|c|c|c|c|c|c|c|}
				\hline
				\multirow{2}{*}{Setting} & \multicolumn{5}{|c|}{RegretNet} & Item-wise & Bundled \\
				\cline{2-6} & $\mathit{rev}$ & $\mathit{rgt (mean)}$& $\mathit{rgt (90\%)}$ &$\mathit{rgt(95\%)}$& $\mathit{rgt(99\%)}$ & Myerson & Myerson \\
				
				\hline
				\ref{exp:scale-up-1}: $3\times 10$
				& 5.541 &  0.002 & 0.003 & 0.004 & 0.006 & 5.310 & 5.009 
				\\
				\hline
				\ref{exp:scale-up-2}: $5\times 10$
				& 6.778 &  0.005 & 0.008 & 0.009 & 0.014  & 6.716 & 5.453 
				\\
				\hline
			\end{tabular}
			~\\[6pt]\scriptsize{(b)}\\[5pt]
		\end{center}
	\end{subfigure}
	~\\
	\if 0
	\begin{subfigure}{0.45\textwidth}
		\begin{center}
			\scriptsize
			\begin{tabular}{|c|c|c|c|}
				\hline
				{Train Misreports} &  %
				1 & 10 & 20
				\\
				\hline
				Test $rev$~\,  %
				& 5.520 & 5.525 & 5.509
				\\
				Test $rgt_g$  %
				& 0.002 & 0.001  & 0.001
				\\
				Run-time (hr)
				& 12.45
				& 23.53
				& 36.04
				\\
				\hline
			\end{tabular}%
			~\\[4pt]\scriptsize{(c)}
		\end{center}
		\vspace{-0.5cm}
	\end{subfigure}
	\fi
	\caption{
		(a) Revenue and  regret of RegretNet on the validation set for auctions learned for Setting \ref{exp:scale-up-1} using different architectures, where $(R, K)$ denotes $R$ hidden layers and $K$ nodes per layer.
		(b) Test revenue and test regret for Settings \ref{exp:scale-up-1} and \ref{exp:scale-up-2} for the (5, 100) architecture.
		\if 0
		(c) Test revenue and regret, and training time for \ref{exp:scale-up-1} using misreports with different sizes for computing regret during training. %
		\fi
	}
	\label{fig:large-settings}
	\vspace{-14pt}
\end{figure}

\subsection{Comparison to Linear Programming}
\label{sec:lpdcp} 
In this section, we compare the train time  and solution quality of \new{RegretNet} with
the solve time and solution quality of the LP-based approach proposed \new{in}~\citet{ConitzerS02,ConitzerS04a}.
To be able to run the LP, we consider the small setting of two additive bidders and two items,
and with  bidders that draw their value for each item independently from $U[0,1]$. 

\new{For RegretNet, we used two hidden layers with 100 nodes per hidden layer.}  The LP \new{was} solved with the commercial solver Gurobi and on an Amazon AWS EC-2 instance with 48 cores and 96GB memory. For the LP-based approach, we \new{handle} continuous valuations by discretizing the values into \new{11} bins per item  (resulting in $\approx 9\times10^5$
decision variables and $\approx 3.6\times10^6$ constraints),
\dcpadd{and adopt two  different rounding strategies, one that 
rounds a continuous input valuation profile to the nearest
discrete profile for evaluation ({\tt -nearest}), and one that rounds  
a continuous input valuation profile
down to the nearest
discrete profile for evaluation ({\tt -down}).
Whereas the LP-based mechanism with {\tt -nearest} rounding fails IR, the use of {\tt -down} rounding
ensures the LP-based approach is IR.}

\begin{figure}[t]
\vspace*{10pt}
\centering
\begin{tabular}{|c|c|c|c|c|c|c|}
\hline

Setting & \multicolumn{2}{c|}{Method} & $\mathit{rev}$ & $\mathit{rgt (mean)}$ & \textit{IR viol.} & Run-time (in hours) \\
\hline
\multirow{4}{*}{$2 \times 2$} & \multicolumn{2}{c|}{RegretNet} & 0.878 & \textless{} 0.0005 & 0 & $\sim$ 3.7 \\
\cline{2-7}
 & \multirow{2}{*}{LP (11 bins/ value)} & {\tt -nearest} & 0.927 & 0.002 & 0.003 & \multirow{2}{*}{61.9} \\
 \cline{3-6}
 &  & {\tt -down} & 0.837 & 0.001 & 0 &  \\
 \cline{2-7}
 & \multicolumn{1}{c|}{ LP (12 bins/ value)} & - & - & - & - & \textgreater{}216\\
 \hline
\end{tabular}
	\caption{Test revenue, test regret, test IR violation, and training time or solve time for RegretNet and an LP-based approach, for a two bidder, two items setting with additive uniform valuations.\label{tab:regret_vs_lpa}}
\end{figure}

The results for this set-up are shown in Figure~\ref{tab:regret_vs_lpa}.  We also
report the violations in IR constraints incurred by the LP on the test
set; for $L$ valuation profiles, this is measured by
$\frac{1}{nL}\sum_{\ell = 1}^L\sum_{i \in N}\, \max\{$  $-u_i(v^{(\ell)}),
0\}$. \sloppy Due to the coarse discretization, the LP approach with nearest-point rounding suffers substantial
IR violations. \dcpadd{As a result of this, as well as its relatively high regret compared to RegretNet,  the relatively high revenue achieved by the LP together with nearest-point rounding, compared with RegretNet, is misleading.} \ssredit{For this reason, we also include the performance of the LP-based mechanism when the continuous input valuation profiles are rounded down to their respective discrete profiles. There we see zero IR violation but substantially lower revenue than RegretNet (and still with  higher regret)}
We \new{were} not able to run \new{an} LP for this setting for a finer discretization than 11 bins per item value
in more than nine days (216 hours) of compute time.\footnote{We used an AWS EC-2 instance with 48 cores and 96GB of memory} In contrast, \new{RegretNet}
yields very low regret along with zero IR violations (as the neural
\new{network} \new{satisfies} IR by design), and does so in 
around four hours. In fact, even for the larger Settings
\ref{exp:scale-up-1}--\ref{exp:scale-up-2}, the training time of
RegretNet was less than $13$ hours. 
\begin{figure}[t]
	\begin{subfigure}{0.99\textwidth}
		\centering
		\includegraphics[scale = 0.75]{./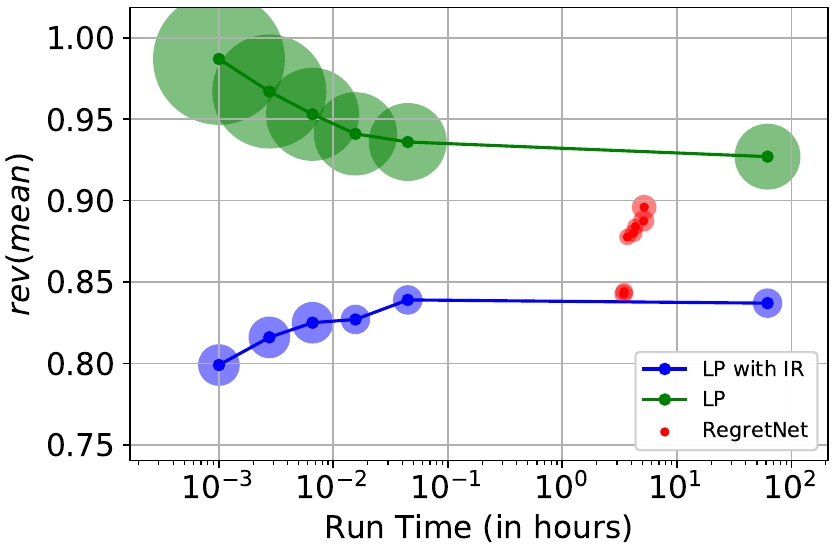}
	\end{subfigure}
	\caption{Test revenue vs.~training or solve time (in hours) in the two bidders, two items, additive $U[0,1]$ value setting. Comparing  the LP-based approach (with  {\tt -nearest} and {\tt -down} rounding, corresponding to ``LP" and ``LP with IR" respectively)  and RegretNet, and varying the number of variables in the LP by modifying the level of discretization, and   the number of parameters in RegretNet by modifying the network structure. 
	The size of a marker corresponds to the sum of regret and IR violations of each method.  For RegretNet, we only plot the results that lie on the efficient frontier.
 \label{fig:rgtnet_vs_lp_plot}}
\end{figure}

\ssredit{In Figure~\ref{fig:rgtnet_vs_lp_plot}, we plot the test revenue, test regret and the run-time of the LP-based 
  and RegretNet methods, while varying the number of variables in the LP  and  number of parameters in RegretNet.}  \dcpadd{For the LP, this is done by varying the discretization and for RegretNet, this is done by varying the network structure.} In  Appendix~\ref{APP:lp_comparison}, we include the complete set of results for varying the  discretization in the LP-based method and varying the  number of hidden layers and hidden units configurations in RegretNet. \dcpadd{Introducing an increasingly fine discretization into the LP-based method provides an initial increase in revenue in return for a modest increase in run time, but this gives way to a huge increase in run time with no effect on revenue. For RegretNet, the training time is relatively stable as the number of hidden layers and units per layer is varied, while larger networks bring a substantive increase in revenue. We only plot the results for RegretNet that lie on the efficient frontier, and refer to  Figure~\ref{fig:2x3_comp} for the full details.}
  \pdadd{Taken together these results show that RegretNet's performance substantially extends the  revenue-time Pareto frontier available from the LP method, obtaining higher revenue for a relatively modest training time.}
  


%
\section{Conclusion}\label{sec:conclusion}


In this paper, we have introduced a  new framework of differentiable economics for using neural networks
for economic discovery, specifically for the discovery of
revenue-optimal, multi-bidder and multi-item auctions.
%
%
We have demonstrated that standard machine learning pipelines can be
used to essentially re-discover all known, optimal auction designs,
and  to discover the design of auctions for settings out of reach
of theory and settings that are orders of magnitude larger than those
that can be solved through other computational approaches.
  \new{We also see promise for the framework in advancing
  economic theory, for example in supporting or refuting conjectures
  and as an assistant in guiding new economic discovery.}

  This framework has already inspired a great deal of follow-on work,
  in taking differentiable economics to additional domains and
  in scaling-up the methods to support  networks that 
 simultaneously handle multiple sizes of markets (number of bidders and number of items). 
\dcpadd{Looking ahead, there remain a number of interesting challenges. Beyond expanding the domains that are studied by differentiable economics, the methodological challenges include the interpretability of learned mechanisms, integrating additional structural regularities from economic theory, scaling up to larger economic systems, and providing robustness guarantees in the form of certificates for economic properties.  
Combinatorial auctions  (CAs) presents an especially important domain,  and one whose study we have only initiated here (see Appendix~\ref{sec:ca-architecture} and~\ref{sec:experCA}, for theory and experiment results for the case of CAs with two items). CAs are important to practice~\citep{huerta22}, and yet 
concerns around low revenue and their vulnerability to collusion~\citep{ausubel06x,ausubel06,day08,levin16,goeree16} 
mean that we lack a complete understanding even for the design of efficient auctions, never mind finding revenue-optimizing designs.}
%
%

      %

      \bibliographystyle{apalike}
       \bibliography{deep}

\begin{thebibliography}{}

\bibitem[Allen-Zhu et~al., 2019]{pmlr-v97-allen-zhu19a}
Allen-Zhu, Z., Li, Y., and Song, Z. (2019).
\newblock A convergence theory for deep learning via over-parameterization.
\newblock In {\em Proceedings of the 36th International Conference on Machine
  Learning}, pages 242--252.

\bibitem[Anthony and Bartlett, 2009]{AnthonyP09}
Anthony, M. and Bartlett, P.~L. (2009).
\newblock {\em Neural Network Learning: {Theoretical} Foundations}.
\newblock Cambridge University Press, 1st edition.

\bibitem[{Areyan Viqueira} et~al., 2019]{DBLP:conf/uai/ViqueiraCMG19}
{Areyan Viqueira}, E., Cousins, C., Mohammad, Y., and Greenwald, A. (2019).
\newblock Empirical mechanism design: {Designing} mechanisms from data.
\newblock In {\em Proceedings of the Thirty-Fifth Conference on Uncertainty in
  Artificial Intelligence, {UAI}}, volume 115 of {\em Proceedings of Machine
  Learning Research}, pages 1094--1104.

\bibitem[Ausubel et~al., 2006]{ausubel06}
Ausubel, L., Cramton, P., and Milgrom, P. (2006).
\newblock The clock-proxy auction: {A} practical combinatorial auction design.
\newblock In Cramton, P., Shoham, Y., and Steinberg, R., editors, {\em
  Combinatorial Auctions}, chapter~5, pages 115--138. MIT Press.

\bibitem[Ausubel and Milgrom, 2006]{ausubel06x}
Ausubel, L. and Milgrom, P. (2006).
\newblock The lovely but lonely {Vickrey} auction.
\newblock In Cramton, P., Shoham, Y., and Steinberg, R., editors, {\em
  Combinatorial Auctions}, chapter~1, pages 17--40. MIT Press.

\bibitem[Babaioff et~al., 2014]{BabaioffILW14}
Babaioff, M., Immorlica, N., Lucier, B., and Weinberg, S.~M. (2014).
\newblock A simple and approximately optimal mechanism for an additive buyer.
\newblock In {\em Proceedings of the 55th IEEE Symposium on Foundations of
  Computer Science}, pages 21--30.

\bibitem[Balaguer et~al., 2022]{balaguer22}
Balaguer, J., Koster, R., Summerfield, C., and Tacchetti, A. (2022).
\newblock The good shepherd: {An} oracle agent for mechanism design.
\newblock {\em CoRR}, arXiv:2202.10135.

\bibitem[Balcan et~al., 2016]{BalcanSV16}
Balcan, M.-F., Sandholm, T., and Vitercik, E. (2016).
\newblock Sample complexity of automated mechanism design.
\newblock In {\em Proceedings of the 29th Conference on Neural Information
  Processing Systems}, pages 2083--2091.

\bibitem[Bichler et~al., 2021]{DBLP:journals/natmi/BichlerFHKS21}
Bichler, M., Fichtl, M., Heidekr{\"{u}}ger, S., Kohring, N., and Sutterer, P.
  (2021).
\newblock Learning equilibria in symmetric auction games using artificial
  neural networks.
\newblock {\em Nat. Mach. Intell.}, 3(8):687--695.

\bibitem[Birkhoff, 1946]{Birkhoff46}
Birkhoff, G. (1946).
\newblock Tres observaciones sobre el algebra lineal.
\newblock {\em Universidad Nacional de Tucum\'an, Revista A}, 5:147--151.

\bibitem[Boutilier and Hoos, 2001]{BH01}
Boutilier, C. and Hoos, H.~H. (2001).
\newblock Bidding languages for combinatorial auctions.
\newblock In {\em Proceedings of the 17th International Joint Conference on
  Artificial Intelligence}, pages 1211--1217.

\bibitem[Brero et~al., 2021a]{brerolearning}
Brero, G., Chakrabarti, D., Eden, A., Gerstgrasser, M., Li, V., and Parkes, D.
  (2021a).
\newblock Learning {Stackelberg} equilibria in sequential price mechanisms.
\newblock In {\em ICML Workshop for Reinforcement Learning Theory}.

\bibitem[Brero et~al., 2021b]{brero2021reinforcement}
Brero, G., Eden, A., Gerstgrasser, M., Parkes, D.~C., and Rheingans{-}Yoo, D.
  (2021b).
\newblock Reinforcement learning of sequential price mechanisms.
\newblock In {\em Thirty-Fifth {AAAI} Conference on Artificial Intelligence,
  {AAAI}}, pages 5219--5227.

\bibitem[Brero et~al., 2022]{breromibuari22}
Brero, G., Lepore, N., Mibuari, E., and Parkes, D.~C. (2022).
\newblock Learning to mitigate {AI} collusion on economic platforms.
\newblock {\em CoRR}, abs/2202.07106.

\bibitem[Brinkman and Wellman, 2017]{DBLP:conf/sigecom/BrinkmanW17}
Brinkman, E. and Wellman, M.~P. (2017).
\newblock Empirical mechanism design for optimizing clearing interval in
  frequent call markets.
\newblock In {\em Proceedings of the 2017 {ACM} Conference on Economics and
  Computation, {EC}}, pages 205--221.

\bibitem[Budish et~al., 2013]{BudishAER13}
Budish, E., Che, Y.-K., Kojima, F., and Milgrom, P. (2013).
\newblock Designing random allocation mechanisms: {Theory} and applications.
\newblock {\em American Economic Review}, 103(2):585--623.

\bibitem[Cai et~al., 2012a]{CaiDW12b}
Cai, Y., Daskalakis, C., and Weinberg, M.~S. (2012a).
\newblock Optimal multi-dimensional mechanism design: {Reducing} revenue to
  welfare maximization.
\newblock In {\em Proceedings of the 53rd IEEE Symposium on Foundations of
  Computer Science}, pages 130--139.

\bibitem[Cai et~al., 2012b]{CaiDW12a}
Cai, Y., Daskalakis, C., and Weinberg, S.~M. (2012b).
\newblock An algorithmic characterization of multi-dimensional mechanisms.
\newblock In {\em Proceedings of the 44th ACM Symposium on Theory of
  Computing}, pages 459--478.

\bibitem[Cai et~al., 2013]{CaiDW13}
Cai, Y., Daskalakis, C., and Weinberg, S.~M. (2013).
\newblock Understanding incentives: {Mechanism} design becomes algorithm
  design.
\newblock In {\em Proceedings of the 54th IEEE Symposium on Foundations of
  Computer Science}, pages 618--627.

\bibitem[Cai and Zhao, 2017]{CaiZhao17}
Cai, Y. and Zhao, M. (2017).
\newblock Simple mechanisms for subadditive buyers via duality.
\newblock In {\em Proceedings of the 49th ACM Symposium on Theory of
  Computing}, pages 170--183.

\bibitem[Chawla et~al., 2010]{ChawlaHMS10}
Chawla, S., Hartline, J.~D., Malec, D.~L., and Sivan, B. (2010).
\newblock Multi-parameter mechanism design and sequential posted pricing.
\newblock In {\em Proceedings of the 42th ACM Symposium on Theory of
  Computing}, pages 311--320.

\bibitem[Choromanska et~al., 2015]{ChoromanskaLA15}
Choromanska, A., LeCun, Y., and Arous, G.~B. (2015).
\newblock The landscape of the loss surfaces of multilayer networks.
\newblock In {\em Proceedings of The 28th Conference on Learning Theory}, pages
  1756--1760.

\bibitem[Cole and Roughgarden, 2014]{ColeR14}
Cole, R. and Roughgarden, T. (2014).
\newblock The sample complexity of revenue maximization.
\newblock In {\em Proceedings of the 46th ACM Symposium on Theory of
  Computing}, pages 243--252.

\bibitem[Conitzer and Sandholm, 2002]{ConitzerS02}
Conitzer, V. and Sandholm, T. (2002).
\newblock Complexity of mechanism design.
\newblock In {\em Proceedings of the 18th Conference on Uncertainty in
  Artificial Intelligence}, pages 103--110.

\bibitem[Conitzer and Sandholm, 2004]{ConitzerS04a}
Conitzer, V. and Sandholm, T. (2004).
\newblock Self-interested automated mechanism design and implications for
  optimal combinatorial auctions.
\newblock In {\em Proceedings of the 5th ACM Conference on Electronic
  Commerce}, pages 132--141.

\bibitem[Curry et~al., 2020]{curry20a}
Curry, M.~J., Chiang, P., Goldstein, T., and Dickerson, J. (2020).
\newblock Certifying strategyproof auction networks.
\newblock In {\em Advances in Neural Information Processing Systems 33}.

\bibitem[Curry et~al., 2022]{curry22}
Curry, M.~J., Lyi, U., Goldstein, T., and Dickerson, J. (2022).
\newblock Learning revenue-maximizing auctions with differentiable matching.
\newblock In {\em 25th International Conference on Artificial Intelligence and
  Statistics (AISTATS)}.

\bibitem[Daskalakis et~al., 2013]{DaskalakisDT13}
Daskalakis, C., Deckelbaum, A., and Tzamos, C. (2013).
\newblock Mechanism design via optimal transport.
\newblock In {\em Proceedings of the 14th {ACM} Conference on Electronic
  Commerce}, pages 269--286.

\bibitem[Daskalakis et~al., 2017]{DaskalakisEtAl17}
Daskalakis, C., Deckelbaum, A., and Tzamos, C. (2017).
\newblock Strong duality for a multiple-good monopolist.
\newblock {\em Econometrica}, 85:735--767.

\bibitem[Daskalakis and Weinberg, 2012]{Daskalakis12}
Daskalakis, C. and Weinberg, S.~M. (2012).
\newblock Symmetries and optimal multi-dimensional mechanism design.
\newblock In {\em Proceedings of the 13th ACM Conference on Electronic
  Commerce}, pages 370--387.

\bibitem[Day and Milgrom, 2008]{day08}
Day, R. and Milgrom, P. (2008).
\newblock Core-selecting package auctions.
\newblock {\em International Journal of Game Theory}, 36(3):393--407.

\bibitem[Du et~al., 2019]{pmlr-v97-du19c}
Du, S., Lee, J., Li, H., Wang, L., and Zhai, X. (2019).
\newblock Gradient descent finds global minima of deep neural networks.
\newblock In {\em Proceedings of the 36th International Conference on Machine
  Learning}, volume~97, pages 1675--1685.

\bibitem[Duan et~al., 2022]{pmlr-v162-duan22a}
Duan, Z., Tang, J., Yin, Y., Feng, Z., Yan, X., Zaheer, M., and Deng, X.
  (2022).
\newblock A context-integrated transformer-based neural network for auction
  design.
\newblock In {\em Proceedings of the 39th International Conference on Machine
  Learning}, volume 162, pages 5609--5626. PMLR.

\bibitem[Duan et~al., 2021]{zhijian+2021}
Duan, Z., Zhang, D., Huang, W., Du, Y., Wang, J., Yang, Y., and Deng, X.
  (2021).
\newblock Towards the pac learnability of nash equilibrium.
\newblock {\em arXiv preprint arXiv:2108.07472}.

\bibitem[D\"{u}tting et~al., 2019]{deep19}
D\"{u}tting, P., Feng, Z., Narasimhan, H., Parkes, D.~C., and Ravindranath,
  S.~S. (2019).
\newblock Optimal auctions through deep learning.
\newblock In {\em Proceedings of the 36th International Conference on Machine
  Learning, {ICML}}, volume~97 of {\em Proceedings of Machine Learning
  Research}, pages 1706--1715.

\bibitem[D{\"{u}}tting et~al., 2021]{cacm21}
D{\"{u}}tting, P., Feng, Z., Narasimhan, H., Parkes, D.~C., and Ravindranath,
  S.~S. (2021).
\newblock Optimal auctions through deep learning.
\newblock {\em Commun. {ACM}}, 64(8):109--116.

\bibitem[D\"utting et~al., 2014]{DuettingFJLLP12}
D\"utting, P., Fischer, F., Jirapinyo, P., Lai, J., Lubin, B., and Parkes,
  D.~C. (2014).
\newblock Payment rules through discriminant-based classifiers.
\newblock {\em ACM Transactions on Economics and Computation}, 3(1):5.

\bibitem[Feng et~al., 2018]{FengEtAl18}
Feng, Z., Narasimhan, H., and Parkes, D.~C. (2018).
\newblock Deep learning for revenue-optimal auctions with budgets.
\newblock In {\em Proceedings of the 17th International Conference on
  Autonomous Agents and Multiagent Systems}, pages 354--362.

\bibitem[Feng et~al., 2022]{feng22}
Feng, Z., Parkes, D.~C., and Ravindranath, S.~S. (2022).
\newblock Differentiable economics.
\newblock In Echenique, F., Immorlica, N., and Vazirani, V., editors, {\em
  Online matching theory and market design}. Cambridge University Press.

\bibitem[Fudenberg and Liang, 2019]{FudenbergLiang18}
Fudenberg, D. and Liang, A. (2019).
\newblock Predicting and understanding initial play.
\newblock {\em American Economic Review}, 109:4112--4141.

\bibitem[Giannakopoulos and Koutsoupias, 2018]{GK18}
Giannakopoulos, Y. and Koutsoupias, E. (2018).
\newblock Duality and optimality of auctions for uniform distributions.
\newblock In {\em SIAM Journal on Computing}, volume~47, pages 121--165.

\bibitem[Glorot and Bengio, 2010]{GlorotY10}
Glorot, X. and Bengio, Y. (2010).
\newblock Understanding the difficulty of training deep feedforward neural
  networks.
\newblock In {\em Proceedings of the 13th International Conference on
  Artificial Intelligence and Statistics}.

\bibitem[Goeree and Lien, 2016]{goeree16}
Goeree, J.~K. and Lien, Y. (2016).
\newblock On the impossibility of core-selecting auctions.
\newblock {\em Theoretical Economics}, 11:41--52.

\bibitem[Golowich et~al., 2018]{GolowichEtAl18}
Golowich, N., Narasimhan, H., and Parkes, D.~C. (2018).
\newblock Deep learning for multi-facility location mechanism design.
\newblock In {\em Proceedings of the 27th International Joint Conference on
  Artificial Intelligence}, pages 261--267.

\bibitem[Gonczarowski and Nisan, 2017]{GonczarowskiN17}
Gonczarowski, Y.~A. and Nisan, N. (2017).
\newblock Efficient empirical revenue maximization in single-parameter auction
  environments.
\newblock In {\em Proceedings of the 49th Annual {ACM} Symposium on Theory of
  Computing}, pages 856--868.

\bibitem[Gonczarowski and Weinberg, 2018]{GonczarowskiW18}
Gonczarowski, Y.~A. and Weinberg, S.~M. (2018).
\newblock The sample complexity of up-to-{\(\epsilon\)} multi-dimensional
  revenue maximization.
\newblock In {\em 59th {IEEE} Annual Symposium on Foundations of Computer
  Science}, pages 416--426.

\bibitem[Guesnerie, 1995]{guesnerie95}
Guesnerie, R. (1995).
\newblock {\em A Contribution to the Pure Theory of Taxation}.
\newblock Cambridge University Press.

\bibitem[Guo and Conitzer, 2010]{GuoC10}
Guo, M. and Conitzer, V. (2010).
\newblock Computationally feasible automated mechanism design: {General}
  approach and case studies.
\newblock In {\em Proceedings of the 24th AAAI Conference on Artificial
  Intelligence}.

\bibitem[Haghpanah and Hartline, 2019]{HaghpanahH15}
Haghpanah, N. and Hartline, J. (2019).
\newblock When is pure bundling optimal?
\newblock {\em Review of Economic Studies}.
\newblock Revise and resubmit.

\bibitem[Hammond, 1979]{hammond79}
Hammond, P. (1979).
\newblock Straightforward individual incentive compatibility in large
  economies.
\newblock {\em Review of Economic Studies}, 46:263--282.

\bibitem[Hart and Nisan, 2017]{HartNisan17}
Hart, S. and Nisan, N. (2017).
\newblock Approximate revenue maximization with multiple items.
\newblock {\em Journal of Economic Theory}, 172:313--347.

\bibitem[Hartford et~al., 2017]{DBLP:conf/icml/HartfordLLT17}
Hartford, J.~S., Lewis, G., {Leyton{-}Brown}, K., and Taddy, M. (2017).
\newblock Deep {IV:} {A} flexible approach for counterfactual prediction.
\newblock In {\em Proceedings of the 34th International Conference on Machine
  Learning}, pages 1414--1423.

\bibitem[Hartford et~al., 2016]{HartfordWL16}
Hartford, J.~S., Wright, J.~R., and Leyton-Brown, K. (2016).
\newblock Deep learning for predicting human strategic behavior.
\newblock In {\em Proceedings of the 29th Conference on Neural Information
  Processing Systems}, pages 2424--2432.

\bibitem[Huang et~al., 2018]{HuangMR15}
Huang, Z., Mansour, Y., and Roughgarden, T. (2018).
\newblock Making the most of your samples.
\newblock {\em {SIAM} Journal on Computing}, 47(3):651--674.

\bibitem[Ivanov et~al., 2022]{ivanov2022optimal}
Ivanov, D., Safiulin, I., Balabaeva, K., and Filippov, I. (2022).
\newblock Optimal-er auctions through attention.
\newblock {\em arXiv preprint arXiv:2202.13110}.

\bibitem[Jehiel et~al., 2007]{JehielMM07}
Jehiel, P., Meyer{-}ter{-}Vehn, M., and Moldovanu, B. (2007).
\newblock Mixed bundling auctions.
\newblock {\em Journal of Economic Theory}, 134(1):494--512.

\bibitem[Jordan et~al., 2010]{DBLP:conf/atal/JordanSW10}
Jordan, P.~R., Schvartzman, L.~J., and Wellman, M.~P. (2010).
\newblock Strategy exploration in empirical games.
\newblock In {\em 9th International Conference on Autonomous Agents and
  Multiagent Systems {(AAMAS})}, pages 1131--1138.

\bibitem[Kawaguchi, 2016]{Kawaguchi16}
Kawaguchi, K. (2016).
\newblock Deep learning without poor local minima.
\newblock In {\em Proceedings of the 30th Conference on Neural Information
  Processing Systems}, pages 586--594.

\bibitem[Kiekintveld and Wellman, 2008]{DBLP:conf/atal/KiekintveldW08}
Kiekintveld, C. and Wellman, M.~P. (2008).
\newblock Selecting strategies using empirical game models: {An} experimental
  analysis of meta-strategies.
\newblock In {\em 7th International Joint Conference on Autonomous Agents and
  Multiagent Systems {(AAMAS)}}, pages 1095--1101.

\bibitem[Kingma and Ba, 2014]{KingmaB15}
Kingma, D.~P. and Ba, J. (2014).
\newblock Adam: {A} method for stochastic optimization.
\newblock {\em CoRR}, abs/1412.6980.

\bibitem[Krizhevsky et~al., 2012]{KSH12}
Krizhevsky, A., Sutskever, I., and Hinton, G.~E. (2012).
\newblock {ImageNet} classification with deep convolutional neural networks.
\newblock In {\em Advances in Neural Information Processing Systems 25}, pages
  1097--1105.

\bibitem[Kuo et~al., 2020]{kuo20}
Kuo, K., Ostuni, A., Horishny, E., Curry, M.~J., Dooley, S., Chiang, P.,
  Goldstein, T., and Dickerson, J.~P. (2020).
\newblock {ProportionNet: Balancing} fairness and revenue for auction design
  with deep learning.
\newblock {\em CoRR}, abs/2010.06398.

\bibitem[Lahaie, 2011]{Lahaie11}
Lahaie, S. (2011).
\newblock A kernel-based iterative combinatorial auction.
\newblock In {\em Proceedings of the 25th AAAI Conference on Artificial
  Intelligence}, pages 695--700.

\bibitem[Lanctot et~al., 2017]{lanctot17}
Lanctot, M., Zambaldi, V.~F., Gruslys, A., Lazaridou, A., Tuyls, K.,
  P{\'{e}}rolat, J., Silver, D., and Graepel, T. (2017).
\newblock A unified game-theoretic approach to multiagent reinforcement
  learning.
\newblock In {\em Advances in Neural Information Processing Systems 30}, pages
  4190--4203.

\bibitem[Levin and Skrzypacz, 2016]{levin16}
Levin, J. and Skrzypacz, A. (2016).
\newblock Properties of the combinatorial clock auction.
\newblock {\em American Economic Review}, 106:2528--2551.

\bibitem[Louizos et~al., 2017]{DBLP:conf/nips/LouizosSMSZW17}
Louizos, C., Shalit, U., Mooij, J.~M., Sontag, D., Zemel, R.~S., and Welling,
  M. (2017).
\newblock Causal effect inference with deep latent-variable models.
\newblock In {\em Proceedings of the 30th Conference on Neural Information
  Processing Systems}, pages 6449--6459.

\bibitem[Manelli and Vincent, 2006]{ManelliVincent06}
Manelli, A. and Vincent, D. (2006).
\newblock Bundling as an optimal selling mechanism for a multiple-good
  monopolist.
\newblock {\em Journal of Economic Theory}, 127(1):1--35.

\bibitem[Mohri and Medina, 2016]{MohriM14}
Mohri, M. and Medina, A.~M. (2016).
\newblock Learning algorithms for second-price auctions with reserve.
\newblock {\em Journal of Machine Learning Research}, 17:74:1--74:25.

\bibitem[Morgenstern and Roughgarden, 2015]{MorgensternR15}
Morgenstern, J. and Roughgarden, T. (2015).
\newblock On the pseudo-dimension of nearly optimal auctions.
\newblock In {\em Proceedings of the 28th Conference on Neural Information
  Processing Systems}, pages 136--144.

\bibitem[Morgenstern and Roughgarden, 2016]{MorgensternR16}
Morgenstern, J. and Roughgarden, T. (2016).
\newblock Learning simple auctions.
\newblock In {\em Proceedings of the 29th Conference on Learning Theory}, pages
  1298--1318.

\bibitem[Myerson, 1981]{Myerson81}
Myerson, R. (1981).
\newblock Optimal auction design.
\newblock {\em Mathematics of Operations Research}, 6:58--73.

\bibitem[Narasimhan et~al., 2016]{Narasimhan_ijcai16}
Narasimhan, H., Agarwal, S., and Parkes, D.~C. (2016).
\newblock Automated mechanism design without money via machine learning.
\newblock In {\em Proceedings of the 25th International Joint Conference on
  Artificial Intelligence}, pages 433--439.

\bibitem[Palacios-Huerta et~al., 2022]{huerta22}
Palacios-Huerta, I., Parkes, D.~C., and Steinberg, R. (2022).
\newblock Combinatorial auctions in practice.
\newblock {\em J. of Economic Literature}.
\newblock forthcoming.

\bibitem[Palfrey, 1983]{Palfrey83}
Palfrey, T. (1983).
\newblock Bundling decisions by a multiproduct monopolist with incomplete
  information.
\newblock {\em Econometrica}, 51(2):463--83.

\bibitem[Pavlov, 2011]{Pavlov11}
Pavlov, G. (2011).
\newblock Optimal mechanism for selling two goods.
\newblock {\em B.E. Journal of Theoretical Economics}, 11:1--35.

\bibitem[Peri et~al., 2021]{peri21}
Peri, N., Curry, M.~J., Dooley, S., and Dickerson, J.~P. (2021).
\newblock {PreferenceNet: Encoding} human preferences in auction design with
  deep learning.
\newblock In {\em Proceedings of the 35th Conference on Neural Information
  Processing Systems (NeurIPS)}.

\bibitem[Peterson et~al., 2021]{doi:10.1126/science.abe2629}
Peterson, J.~C., Bourgin, D.~D., Agrawal, M., Reichman, D., and Griffiths,
  T.~L. (2021).
\newblock Using large-scale experiments and machine learning to discover
  theories of human decision-making.
\newblock {\em Science}, 372(6547):1209--1214.

\bibitem[Raghu et~al., 2018]{DBLP:journals/corr/abs-1711-02301}
Raghu, M., Irpan, A., Andreas, J., Kleinberg, R., Le, Q.~V., and Kleinberg,
  J.~M. (2018).
\newblock Can deep reinforcement learning solve {E}rdos-{S}elfridge-{S}pencer
  games?
\newblock In {\em Proceedings of the 35th International Conference on Machine
  Learning}, pages 4235--4243.

\bibitem[Rahme et~al., 2021a]{rahme21}
Rahme, J., Jelassi, S., Bruna, J., and Weinberg, S.~M. (2021a).
\newblock A permutation-equivariant neural network architecture for auction
  design.
\newblock In {\em Thirty-Fifth {AAAI} Conference on Artificial Intelligence,
  {AAAI}}, pages 5664--5672.

\bibitem[Rahme et~al., 2021b]{rahme21a}
Rahme, J., Jelassi, S., and Weinberg, S.~M. (2021b).
\newblock Auction learning as a two-player game.
\newblock In {\em 9th International Conference on Learning Representations,
  {ICLR}}.

\bibitem[Ravindranath et~al., 2021]{sai21}
Ravindranath, S.~S., Feng, Z., Li, S., Ma, J., Kominers, S.~D., and Parkes,
  D.~C. (2021).
\newblock Deep learning for two-sided matching.
\newblock {\em CoRR}, abs/2107.03427.

\bibitem[Rochet, 1987]{Rochet87}
Rochet, J.-C. (1987).
\newblock A necessary and sufficient condition for rationalizability in a
  quasilinear context.
\newblock {\em Journal of Mathematical Economics}, 16:191--200.

\bibitem[Rudin and Schapire, 2009]{Rudin09a}
Rudin, C. and Schapire, R.~E. (2009).
\newblock Margin-based ranking and an equivalence between {AdaBoost and
  RankBoost}.
\newblock {\em Journal of Machine Learning Research}, 10:2193--2232.

\bibitem[Sandholm and Likhodedov, 2015]{SandholmL15}
Sandholm, T. and Likhodedov, A. (2015).
\newblock Automated design of revenue-maximizing combinatorial auctions.
\newblock {\em Operations Research}, 63(5):1000--1025.

\bibitem[Shalev-Shwartz and Ben-David, 2014]{Shalev-Shwartz14}
Shalev-Shwartz, S. and Ben-David, S. (2014).
\newblock {\em Understanding Machine Learning: {From} Theory to Algorithms}.
\newblock Cambridge University Press, New York, NY, USA.

\bibitem[Shen et~al., 2020]{shen2020reinforcement}
Shen, W., Peng, B., Liu, H., Zhang, M., Qian, R., Hong, Y., Guo, Z., Ding, Z.,
  Lu, P., and Tang, P. (2020).
\newblock Reinforcement mechanism design: {With} applications to dynamic
  pricing in sponsored search auctions.
\newblock In {\em Proceedings of the AAAI Conference on Artificial
  Intelligence}, volume~34, pages 2236--2243.

\bibitem[Shen et~al., 2019]{STZ18}
Shen, W., Tang, P., and Zuo, S. (2019).
\newblock Automated mechanism design via neural networks.
\newblock In {\em Proceedings of the 18th International Conference on
  Autonomous Agents and Multiagent Systems}.

\bibitem[Sill, 1998]{Sill98}
Sill, J. (1998).
\newblock Monotonic networks.
\newblock In {\em Proceedings of the 12th Conference on Neural Information
  Processing Systems}, pages 661--667.

\bibitem[Syrgkanis, 2017]{Syrgkanis17}
Syrgkanis, V. (2017).
\newblock A sample complexity measure with applications to learning optimal
  auctions.
\newblock In {\em Proceedings of the 20th Conference on Neural Information
  Processing Systems}, pages 5358--5365.

\bibitem[Tacchetti et~al., 2019]{Tacchetti19}
Tacchetti, A., Strouse, D., Garnelo, M., Graepel, T., and Bachrach, Y. (2019).
\newblock A neural architecture for designing truthful and efficient auctions.
\newblock {\em CoRR}, abs/1907.05181.

\bibitem[Thompson et~al., 2017]{ThompsonEtAl17}
Thompson, D., Newman, N., and Leyton-Brown, K. (2017).
\newblock The {Positronic Economist: A} computational system for analyzing
  economic mechanisms.
\newblock In {\em Proceedings of the 31st AAAI Conference on Artificial
  Intelligence}, pages 720--727.

\bibitem[Vorobeychik et~al., 2006]{DBLP:conf/sigecom/VorobeychikKW06}
Vorobeychik, Y., Kiekintveld, C., and Wellman, M.~P. (2006).
\newblock Empirical mechanism design:{Methods}, with application to a
  supply-chain scenario.
\newblock In {\em Proceedings 7th {ACM} Conference on Electronic Commerce
  (EC-2006)}, pages 306--315.

\bibitem[Vorobeychik et~al., 2012]{DBLP:journals/aamas/VorobeychikRW12}
Vorobeychik, Y., Reeves, D.~M., and Wellman, M.~P. (2012).
\newblock Constrained automated mechanism design for infinite games of
  incomplete information.
\newblock {\em Auton. Agents Multi Agent Syst.}, 25(2):313--351.

\bibitem[Wang et~al., 2022a]{wang22}
Wang, K., Xu, L., Perrault, A., Reiter, M.~K., and Tambe, M. (2022a).
\newblock Coordinating followers to reach better equilibria: {End-to-end}
  gradient descent for {Stackelberg} games.
\newblock In {\em AAAI Conference on Artificial Intelligence}.

\bibitem[Wang et~al., 2022b]{wangma22}
Wang, X., Ma, G.~Q., Eden, A., Li, C., Trott, A., Zheng, S., and Parkes, D.~C.
  (2022b).
\newblock Using reinforcement learning to study platform economies under market
  shocks.
\newblock {\em CoRR}, arXiv:2203.13395.

\bibitem[Wellman, 2006]{wellman06a}
Wellman, M.~P. (2006).
\newblock Methods for empirical game-theoretic analysis.
\newblock In {\em Proceedings of Twenty-First National Conference on Artificial
  Intelligence}, pages 1552--1556.

\bibitem[Yao, 2015]{Yao15}
Yao, A. C.-C. (2015).
\newblock An \emph{n}-to-1 bidder reduction for multi-item auctions and its
  applications.
\newblock In {\em Proceedings of the 26th ACM-SIAM Symposium on Discrete
  Algorithms}, pages 92--109.

\bibitem[Yao, 2017]{Yao17}
Yao, A. C.-C. (2017).
\newblock Dominant-strategy versus {Bayesian} multi-item auctions: {Maximum}
  revenue determination and comparison.
\newblock In {\em Proceedings of the 18th ACM Conference on Economics and
  Computation}, pages 3--20.

\bibitem[Zheng et~al., 2022]{zheng2022}
Zheng, S., Trott, A., Srinivasa, S., Parkes, D.~C., and Socher, R. (2022).
\newblock The {AI Economist}: {Optimal} economic policy design via two-level
  deep reinforcement learning.
\newblock {\em Science Advances}, 8.

\end{thebibliography}

\appendix

\section{Additional Architectures}\label{APP:ADDITIONAL-ARCHITECTURES}

\new{In this appendix we present additional  network architectures, for 
  a multi-bidder single-item setting,
  and for a general multi-bidder
  multi-item setting with combinatorial valuations.}

\subsection{The MyersonNet Approach}
\label{app:myerson}

\new{We start by describing an architecture that yields optimal DSIC auction for selling a single item to multiple buyers.}

In the single-item setting,
each bidder holds a private value $v_i \in \R_{\geq 0}$ for the item.
We consider a randomized auction $(g,p)$ that maps a reported 
bid profile $b \in \R_{\geq 0}^n$ to a
vector of allocation probabilities $g(b) \in \R_{\geq 0}^n$, where
$g_i(b) \in \R_{\geq 0}$ denotes the probability that bidder $i$ is
allocated the item and $\sum_{i=1}^n g_i(b) \leq 1$. We shall represent 
the payment rule $p_i$ via a price conditioned on the item being
allocated to bidder $i$, i.e. $p_i(b) = g_i(b)\,t_i(b)$ for
some conditional payment function 
$t_i: \R_{\geq 0}^n \rightarrow \R_{\geq 0}$. 
The expected revenue of the auction, when bidders
are truthful, is  given by:
\begin{equation}
rev(g, p) \,=\, \E_{v \sim F}\bigg[\sum_{i=1}^n g_i(v)\,t_i(v)\bigg].
\label{eq:revenue-single-item}
\end{equation}  

The structure of the revenue-optimal auction is well understood for this setting.

\begin{thm}[\citet{Myerson81}]
There exist a collection of monotonically \new{non-decreasing} functions, $\bar{\phi_i}: \mathbb{R}_{\geq 0} \rightarrow \mathbb{R}$ called the ironed virtual valuation functions such that the optimal BIC auction for selling a single item is the DSIC auction that assigns the item to the buyer with the highest ironed virtual value $\bar{\phi}_i(v_i)$ %
provided that this value is non-negative, \new{with ties broken in an arbitrary value-independent manner, and charges the bidders according to $p_i(v_i) = v_i g_i(v_i) - \int_{0}^{v_i} g_i(t) \;dt$.} %
\end{thm}

\new{For distribution $F_i$ with density $f_i$ the virtual valuation function is $\psi_i(v_i) = v_i - (1-F(v_i))/f(v_i)$. A distribution $F_i$ with density $f_i$ is regular if $\psi_i$ is monotonically non-decreasing. For regular distributions $F_1, \dots, F_n$ no ironing is required and $\bar{\phi}_i = \psi_i$ for all $i$.}

\new{If the virtual valuation functions $\psi_1, \dots, \psi_n$ are furthermore monotonically increasing and not only monotonically non-decreasing,} the optimal auction 
can be viewed as applying the monotone transformations 
to the input bids
 $\bar{b}_i = \bar{\phi}_i(b_i)$, feeding the computed virtual values
 to a second price auction (SPA) with zero reserve price, denoted $(g^0, p^0)$, 
 making an allocation according to $g^0(\bar{b})$, and
 charging a payment $\bar{\phi}^{-1}_i(p^0_i(\bar{b}))$ for winning bidder $i$.
In fact, this auction is DSIC for any choice of strictly monotone \new{transformations of the values:} %
\begin{thm}
For any set of strictly monotonically increasing functions 
$\bar{\phi}_1, \ldots, \bar{\phi}_n$, an auction defined by
outcome rule $g_i = g^0_i \, \circ \, \bar{\phi}$ and payment rule $p_i = \bar{\phi}_i^{-1}\, \circ\, p^0_i \, \circ\, \bar{\phi}$ is DSIC and IR, where $(g^0, p^0)$ is the allocation and payment rule of a second price auction with zero reserve.
\end{thm}

 \new{For regular distributions with monotonically increasing virtual value functions designing an optimal DSIC auction thus reduces to finding the right strictly monotone transformations and corresponding inverses, and modeling a second price auction with zero reserve.} 
 
 \new{We present a high-level overview of a neural network architecture that achieves this in Figure~\ref{fig:advanced-auction-network}(a), and describe the components of this network in more detail in Section~\ref{sec:mon-trafo} and Section~\ref{sec:spa} below.}
 
 \new{The MyersonNet is tailored to monotonically increasing virtual
   value functions. For regular distributions with virtual value
   functions that are not strictly increasing and for irregular
   distributions this approach only yields approximately optimal
   auctions.}

 \begin{figure}[t]
\begin{center}
\vspace{-5pt}
\hspace{-15pt}
\begin{subfigure}[b]{0.45\textwidth}
\centering
\begin{tikzpicture}[scale=0.8, transform shape, shorten >=1pt,->,draw=black!100, node distance=\layersep, thick]
    \tikzstyle{neuron}=[circle,draw=black!100,minimum size=17pt,inner sep=0pt,thick]
    \tikzstyle{itext}=[draw=white,minimum size=17pt,inner sep=0pt]
    \tikzstyle{unit}=[draw=black!100,minimum size=22pt,inner sep=0pt,thick]
    \tikzstyle{spa}=[draw=black!100,minimum size=32pt,inner sep=0pt,thick]
    \tikzstyle{annot} = [text width=4em, text centered]

	\node[unit, pin={[pin edge={<-}]left:$b_1$}] (P-1) at (0,-1) {$\bar{\phi}_1$};
    \node[itext] (P-2) at (0,-1.9) {\vdots};
	\node[unit, pin={[pin edge={<-}]left:$b_n$}] (P-3) at (0,-3) {$\bar{\phi}_n$};

	\node[spa, pin={[pin edge={->}]right:$(z_1,\ldots,z_n)$}] (S-1) at (2,-1.3) {$g^0$};
	\node[spa] (S-2) at (2,-2.7) {$p^0$};
	
	\draw[thick,dotted] ($(S-1.north west)+(-0.1,0.1)$)  rectangle ($(S-2.south east)+(0.1,-0.1)$) ;	
	\node at  ($(S-1.north west) + (0.6,0.35)$) {SPA-0};
	
	\node[itext] (I-2) at (4,-2.6) {\vdots};
	\node[unit, pin={[pin edge={->}]right:$t_1$}] (I-1) at (4,-2.05) {$\bar{\phi}^{-1}_1$};
	\node[unit, pin={[pin edge={->}]right:$t_n$}] (I-3) at (4,-3.3) {$\bar{\phi}^{-1}_n$};
	
	\path (P-1) edge (S-1); 
	\path (P-1) edge (S-2); 
	
	\path (P-3) edge (S-1); 
	\path (P-3) edge (S-2);
	
	\path (S-2) edge (I-1); 
	\path (S-2) edge (I-3);
	
\end{tikzpicture}
\caption{}
\end{subfigure}
\begin{subfigure}[b]{0.45\textwidth}
\centering
\begin{tikzpicture}[scale=0.7, transform shape, shorten >=1pt,->,draw=black!100, node distance=\layersep, thick]
    \tikzstyle{input text}=[draw=white,minimum size=17pt,inner sep=0pt]
    \tikzstyle{input neuron}=[circle,draw=black!100,minimum size=17pt,inner sep=0pt,thick]
    \tikzstyle{hidden neuron}=[circle,draw=black!100,minimum size=17pt,inner sep=0pt,thick]
    \tikzstyle{hidden text}=[draw=white,minimum size=22pt,inner sep=0pt,thick]
    \tikzstyle{unit}=[draw=black!100,minimum size=22pt,inner sep=0pt,thick]

    \node[input neuron, pin={[pin edge={<-}]left:$b_i$}] (I-1) at (0,-2.5) {};

    \path node[hidden text] (H-0) at (1.5,-1.05) {\vdots};
    \path node[hidden neuron] (H-1) at (1.5,-0.5) {$h_{1,1}$};
    \path node[hidden neuron] (H-2) at (1.5,-1.7) {$h_{1,J}$};

    \path node[hidden text] (G-0) at (1.5,-4.05) {\vdots};
	\path node[hidden neuron] (G-1) at (1.5,-3.5) {$h_{K,1}$};
    \path node[hidden neuron] (G-2) at (1.5,-4.8) {$h_{K,J}$};

    \foreach \dest in {1,...,2}
		\path (I-1) edge (H-\dest); 

	\node[unit] (M-1) at (2.75,-1.1) {max};
	
	\foreach \src in {1,...,2}
		\path (H-\src) edge (M-1); 

    \foreach \dest in {1,...,2}
		\path (I-1) edge (G-\dest); 

	\node[unit] (M-2) at (2.75,-4.1) {max};
	
	\foreach \src in {1,...,2}
		\path (G-\src) edge (M-2); 
	
	\path node[hidden text] (T) at (1.5,-2.55) {\vdots};
	\path node[hidden text] (TT) at (2.75,-2.65) {\vdots};
	
	\node[unit, pin={[pin edge={->}]right:$\bar{b}_i$}] (M-3) at (4,-2.75) {min};
	
	\path (M-1) edge (M-3); 
	\path (M-2) edge (M-3); 

\end{tikzpicture}
\caption{}
\end{subfigure}
\vspace{-5pt}
\caption{(a) MyersonNet: The network applies monotone transformations $\bar{\phi}_{1}, \ldots, \bar{\phi}_{n}$ to the input bids, passes the virtual values to the SPA-0 network in Figure \ref{fig:basic-auction-network}, and applies the inverse transformations $\bar{\phi}^{-1}_{1}, \ldots, \bar{\phi}^{-1}_{n}$ to the payment outputs. (b) Monotone virtual
value function $\bar{\phi}_i$, where 
$h_{kj}(b_i) = e^{\alpha^i_{kj}}b_i + \beta^i_{kj}$.
}
\vspace{-15pt}
\label{fig:advanced-auction-network}
\end{center}
\end{figure}

\subsubsection{\new{Modeling Monotone Transforms}} \label{sec:mon-trafo}
We model each virtual value function $\bar{\phi}_i$ 
as a two-layer feed-forward network with min and 
max operations over linear functions.
For $K$ groups of $J$ linear functions, 
with strictly positive slopes $w^i_{kj} \in \R_{>0},~k = 1,\ldots,K,~ j = 1,\ldots, J$ and intercepts $\beta^i_{kj} \in \R,~k = 1,\ldots,K,~ j = 1,\ldots, J$,
we define:
\[
\bar{\phi}_{i}(b_i) \,=\, \min_{k \in [K]} \max_{j \in [J]}\, w^i_{kj}\,b_i + \beta^i_{kj}.
\]
Since each of the above linear function is strictly non-decreasing, so is
 $\bar{\phi}_{i}$. In practice, we can set each $w^i_{kj} = e^{\alpha^i_{kj}}$ for parameters
 $\alpha^i_{kj} \in [-B,B]$ in a bounded range. 
A graphical representation of the 
neural network used for this transform is shown in Figure \ref{fig:advanced-auction-network}(b). 
For sufficiently large $K$ and $J$, 
this neural network 
can be used to approximate any 
continuous, bounded monotone function (that
satisfies a mild regularity condition)
to an arbitrary degree of accuracy \citep{Sill98}.
A particular advantage of this representation is that the inverse transform 
$\bar{\phi}^{-1}$ can be directly obtained from the parameters for the forward transform:
\[
\bar{\phi}^{-1}_{i}(y) \,=\, \max_{k \in [K]} \min_{j \in [J]}\, e^{-\alpha^i_{kj}}(y - \beta^i_{kj}).
\]
\subsubsection{\new{Modeling SPA with Zero Reserve}} \label{sec:spa}
We also need to model a SPA with zero reserve (SPA-0) within the neural network structure. 
For the purpose of training, 
we  employ a smooth approximation to the allocation rule 
using a neural network. Once we learn value functions using this approximate allocation rule, 
we  use them together with an exact SPA with zero reserve to construct the final auction.

\begin{figure}[t]
\begin{center}
\def\layersep{2.5cm}

\begin{subfigure}[b]{0.5\textwidth}
\centering
\begin{tikzpicture}[scale=0.7, transform shape, shorten >=1pt,->,draw=black!100, node distance=\layersep, thick]
    \tikzstyle{input text}=[draw=white,minimum size=17pt,inner sep=0pt]
    \tikzstyle{input neuron}=[circle,draw=black!100,minimum size=17pt,inner sep=0pt,thick]
    \tikzstyle{hidden neuron}=[circle,draw=black!100,minimum size=17pt,inner sep=0pt,thick]
    \tikzstyle{hidden text}=[draw=white,minimum size=22pt,inner sep=0pt,thick]

    \foreach \name / \y in {1,...,2}
    	\node[input neuron, pin={[pin edge={<-}]left:$\bar{b}_\y$}] (I-\name) at (0,-\y) {};
    \node[input text] (I-0) at (0,-3) {$\vdots$};
    \node[input neuron, pin={[pin edge={<-}]left:$\bar{b}_n$}] (I-3) at (0,-4) {};
    \node[input neuron, pin={[pin edge={<-}]left:$0$}] (I-4) at (0,-5) {};

    \foreach \name / \y in {1,...,2}
        \path
            node[hidden neuron,pin={[pin edge={->}]right:$z_\y$}] (H-\name) at (\layersep,-\y) {};
    \path
            node[hidden neuron,pin={[pin edge={->}]right:$z_{n}$}] (H-4) at (\layersep,-5) {};
    \path
            node[hidden neuron,pin={[pin edge={->}]right:$z_{n-1}$}] (H-3) at (\layersep,-4) {};
    \path
            node[hidden text] (H-5) at (\layersep,-3 cm) {\vdots};

    \foreach \dest in {1,...,4}
		\path (I-1) edge (H-\dest); 
    \foreach \dest in {1,...,4}
		\path (I-2) edge (H-\dest);  
	\foreach \dest in {1,...,4}
		\path (I-3) edge (H-\dest);  
	\foreach \dest in {1,...,4}
		\path (I-4) edge (H-\dest);

	 \draw[thick,dotted] ($(H-1.north west)+(-0.2,0.3)$)  rectangle ($(H-4.south east)+(0.2,-0.3)$) ;
	 \node at  ($(H-1.north west) + (0.2,0.5)$) {$softmax$};
\end{tikzpicture}
\caption{Allocation rule $g^0$}
\end{subfigure}
\hspace{-1cm}
\begin{subfigure}[b]{0.5\textwidth}
\hspace{0.5cm}
\centering
\begin{tikzpicture}[scale=0.7, transform shape, shorten >=1pt,->,draw=black!100, node distance=\layersep, thick]
    \tikzstyle{input text}=[draw=white,minimum size=17pt,inner sep=0pt]
    \tikzstyle{input neuron}=[circle,draw=black!100,minimum size=17pt,inner sep=0pt,thick]
    \tikzstyle{hidden neuron}=[draw=black!100,minimum size=22pt,inner sep=0pt,thick]
    \tikzstyle{hidden text}=[draw=white,minimum size=22pt,inner sep=0pt,thick]

    \foreach \name / \y in {1,...,2}
    	\node[input neuron, pin={[pin edge={<-}]left:$\bar{b}_\y$}] (I-\name) at (0,-\y) {};
    \node[input text] (I-0) at (0,-3) {$\vdots$};
    \node[input neuron, pin={[pin edge={<-}]left:$\bar{b}_n$}] (I-3) at (0,-4) {};
    \node[input neuron, pin={[pin edge={<-}]left:$0$}] (I-4) at (0,-5) {};

    \foreach \name / \y in {1,...,2}
        \path
            node[hidden neuron,pin={[pin edge={->}]right:$t^0_\y$}] (H-\name) at (\layersep,-\y) {$max$};
    \path
            node[hidden neuron,pin={[pin edge={->}]right:$t^0_{n-1}$}] (H-3) at (\layersep,-4) {$max$};
    \path
            node[hidden neuron,pin={[pin edge={->}]right:$t^0_n$}] (H-4) at (\layersep,-5) {$max$};
    \path
            node[hidden text] (H-5) at (\layersep,-3) {\vdots};

    \foreach \dest in {2,3,4}
		\path (I-1) edge (H-\dest); 
    \foreach \dest in {1,3,4}
		\path (I-2) edge (H-\dest);  
	\foreach \dest in {1,2,4}
		\path (I-3) edge (H-\dest);  
	\foreach \dest in {1,2,3,4}
		\path (I-4) edge (H-\dest);  

\end{tikzpicture}
\caption{Payment rule $t^0$}
\end{subfigure}
\vspace{-5pt}
\caption{ 
MyersonNet: SPA-0 network for (approximately) modeling a second price auction with zero reserve price.
The inputs are (virtual) bids $\bar{b}_1,\ldots,\bar{b}_n$
and the output is a vector of assignment probabilities $z_1,\ldots,z_n$ and
prices (conditioned on allocation) $t^0_1,\ldots,t^0_n$.
\label{fig:basic-auction-network}}
\end{center}
\vspace{-5pt}
\end{figure}

The  SPA-0 allocation rule $g^0$ can be approximated using a
`softmax'  function on the virtual values $\bar{b}_1, \ldots, \bar{b}_n$ 
and an additional dummy input $\bar{b}_{n+1} = 0$: 
\begin{align}
g^0_i(\bar{b}) \,=\, \frac{e^{\kappa \bar{b}_i}}{\sum_{j=1}^{n+1} e^{\kappa \bar{b}_j}},
~ i \in N,
\end{align}
where $\kappa > 0$ is a constant fixed a priori, and determines the quality 
of the approximation. The higher the value of $\kappa$, the better the approximation
but the less smooth the resulting allocation function.

The SPA-0 payment to bidder $i$, conditioned on being allocated, is the maximum of the
 virtual values from the other bidders and zero:
\begin{align}
t^0_i(\bar{b})\,=\, \max\big\{\max_{j \ne i} \bar{b}_j, \, 0\big\},~ i \in N.
\end{align}

Let $g^{\alpha,\beta}$ and $t^{\alpha,\beta}$ denote the
 allocation and conditional payment rules for the overall auction in Figure \ref{fig:advanced-auction-network}(a), where
$(\alpha,\beta)$ are the parameters of the forward monotone transform. 
Given a sample of valuation profiles $\mathcal{S} =\{v^{(1)}, \ldots, v^{(L)}\}$ drawn i.i.d.\ from $F$, 
we optimize the parameters using the negated revenue on $\mathcal{S}$ as the error function, where the revenue is approximated as:
\begin{equation}
\widehat{rev}(g,t) \,=\, 
\frac{1}{L}\sum_{\ell=1}^L\sum_{i=1}^n g^{\alpha,\beta}_i(v^{(\ell)})\,t^{\alpha,\beta}_i(v^{(\ell)}).
\end{equation}

We solve this training problem using a minibatch stochastic gradient descent solver. 

\subsection{RegretNet for Combinatorial Valuations}\label{sec:ca-architecture}

\new{We next show how to adjust the RegretNet architecture to handle}
bidders with general, combinatorial valuations. \footnote{In the present work,
we develop this architecture only for small number of
items.
With more items, combinatorial valuations can be
  succinctly represented using appropriate bidding languages; see,
  e.g.\ \citep{BH01}.}

In this case, each bidder $i$ reports a bid $b_{i, S}$ for
every bundle of items $S \subseteq M$ (except the  empty bundle, for
which her valuation is taken as zero).  The allocation component of
the network has an output
$z_{i, S} \in [0,1]$ for each bidder $i$ and bundle $S$,
denoting the probability that the bidder is allocated the
bundle.

{To prevent the items from being over-allocated, we require that the probability that an item appears 
in a bundle allocated to some bidder is at most one. We also require that the total allocation to a bidder is at most one:}
\begin{align}
&\sum_{i \in N}\sum_{S \subseteq M: j \in S } z_{i,S} \,\leq\, 1,\, \forall j \in M;
\label{eq:combinatorial-constraints-1}
\\
&\sum_{S \subseteq M} z_{i,S} \,\leq\, 1,\, \forall i \in N.
\label{eq:combinatorial-constraints-2}
\end{align}

We refer to an allocation that satisfies constraints \eqref{eq:combinatorial-constraints-1}--\eqref{eq:combinatorial-constraints-2} as being \textit{combinatorial feasible}.
To enforce these constraints, the allocation component of the network computes a set of scores for each bidder and a set of scores for each item. Specifically, there is a group of bidder-wise scores $s_{i,S}, \forall S\subseteq M$ for each bidder $i \in N$, and a group of item-wise scores $\smash{s^{(j)}_{i,S}, \forall i \in N,\, S\subseteq M}$ for each item $j \in M$. 

\new{Let $s, s^{(1)}, \dots, s^{(m)} \in \mathbb{R}^{n \times 2^m}$ denote these bidder scores and item scores.}   
Each group of scores is normalized using a softmax function:
$
\bar{s}_{i, S} = {\exp({s_{i,S}})}/{\sum_{S'} \exp({s_{i,S'}})}$ and $\bar{s}^{(j)}_{i, S} = {\exp({s^{(j)}_{i,S}})}/{\sum_{i',S'} \exp({s^{(j)}_{i',S'}})}.
$
The allocation for bidder $i$ and bundle $S \subseteq M$ is defined as the minimum of the normalized bidder-wise score $\bar{s}_{i,S}$ and the normalized item-wise scores $\smash{\bar{s}^{(j)}_{i, S}}$ for each $j \in S$: %
\begin{align}
{z_{i,S} \,=\, \varphi^{CF}_{i,S}({s}, {s}^{(1)},\ldots,{s}^{(m)}) \,=\, \min\big\{\bar{s}_{i,S}, \, \bar{s}^{(j)}_{i,S}:\, j \in S\big\}}.
\end{align}

\new{Similar to the unit-demand setting, we first show that $\varphi^{CF}({s}, {s}^{(1)},\ldots,{s}^{(m)})$ is combinatorial feasible and that our constructive approach is without loss of generality. See Appendix~\ref{APP:CA_DS} for a proof.}

\begin{lem}\label{LEM:CA_DS}
\new{The matrix} $\varphi^{CF}({s}, {s}^{(1)},\ldots,{s}^{(m)})$ is combinatorial feasible $\forall\, {s},$ ${s}^{(1)}, \ldots,{s}^{(m)} \in \R^{n \times 2^m}$.  For any combinatorial feasible \new{matrix} $z \in [0,1]^{n \times 2^m}$, $\exists\, {s}, {s}^{(1)},\ldots,{s}^{(m)} \in \R^{n \times 2^m}$, for which $z = \varphi^{CF}({s}, {s}^{(1)},\ldots,{s}^{(m)})$.
\end{lem}

In addition, we want to understand whether a \emph{combinatorial feasible} allocation $z$ can be \emph{implementable}, defined in the following way.

\begin{definition} 
A fractional combinatorial allocation $z$ is implementable if and only if $z$ can be represented as a convex combination of combinatorial feasible,  deterministic allocations.
\end{definition}

Unfortunately, Example~\ref{ex:1} shows that a combinatorial feasible allocation may not have an integer
decomposition, even for the case of two bidders and two items.
\begin{example}
	\label{ex:1}
	Consider a setting with two bidders and two items, and the following fractional, combinatorial feasible allocation:
	
	$$z = \begin{bmatrix}
	z_{1,\{1\}} & z_{1,\{2\}} & z_{1, \{1,2\}}\\
	z_{2,\{1\}} & z_{2,\{2\}} & z_{2, \{1,2\}}
	\end{bmatrix}=\begin{bmatrix}
	3/8 & 3/8 & 1/4\\
	1/8 & 1/8 & 1/4
	\end{bmatrix}
	$$
	
	Any integer decomposition of this allocation $z$ would need to have the following structure:
	\begin{eqnarray*}
	z &=& a\begin{bmatrix}
	0 & 0 & 1\\
	0& 0 & 0
	\end{bmatrix} + b\begin{bmatrix}
	0 & 0 & 0\\
	0& 0 & 1
	\end{bmatrix} + c\begin{bmatrix}
	1 & 0 & 0\\
	0& 1 & 0
	\end{bmatrix} + d\begin{bmatrix}
	1 & 0 & 0\\
	0& 0 & 0
	\end{bmatrix} + e\begin{bmatrix}
	0 & 0 & 0\\
	0 & 1 & 0
	\end{bmatrix}\\
    && + f\begin{bmatrix}
	0 & 1 & 0\\
	1 & 0 & 0
	\end{bmatrix} + g\begin{bmatrix}
	0 & 1 & 0\\
	0 & 0 & 0
	\end{bmatrix} + h\begin{bmatrix}
	0 & 0 & 0\\
	1 & 0 & 0
	\end{bmatrix}
	\end{eqnarray*}
	
	where the coefficients sum to at most 1.
	Firstly, it is straightforward to see that $a = b = 1/4$. Given the construction, we must have $c + d = 3/8, e\geq 0$ and $f + g = 3/8, h \geq 0$. Thus, $a+b+c+d+e+f+g+h \geq 1/2 + 3/4 = 5/4$ for any decomposition. Hence, $z$ is not implementable.
\end{example}

To ensure that a combinatorial feasible allocation has an integer
decomposition
we need to introduce additional constraints. For the two items case,
we introduce the following constraint:
\begin{eqnarray}\label{eq:additional-constraints-2-items}
\forall i, z_{i, \{1\}} + z_{i, \{2\}} \leq 1 -\sum_{i'=1}^{n} z_{i', \{1, 2\}}.
\end{eqnarray}

\begin{thm}
	For $m=2$, any combinatorial feasible allocation $z$ with additional
	constraints~(\ref{eq:additional-constraints-2-items}) can be
	represented as a convex combination of matrices $B^1, \dots, B^k$
	where each $B^\ell$ is a combinatorial feasible, 0-1 allocation.
\end{thm}

\begin{proof}
	Firstly, we observe in any deterministic allocation $B^\ell$, if there exists an $i$, s.t. $B^\ell_{i, \{1, 2\}} = 1$, then $\forall j \neq i, S: B^\ell_{j, S} = 0$. Therefore, we first decompose $z$ into the following components,
	\begin{eqnarray*}
		z = \sum_{i=1}^n z_{i, \{1, 2\}}\cdot B^{i} + C,
	\end{eqnarray*}
	and
	$$
	B^i_{j, S} = \left\{
	\begin{array}{ll}
	1  & \text{if  $j=i, S =\{1, 2\}$, and} \\
	0 & \text{otherwise.}
	\end{array}
	\right.
	$$
	
	Then we want to argue that $C$ can be represented as $\sum_{\ell = i+1}^{k} p_\ell \cdot B^\ell$, where $\sum_{\ell = i+1}^{k} p_\ell \leq 1 - \sum_{i=1}^n z_{i, \{1, 2\}}$ and each $B^\ell$ is a feasible 0-1 allocation. Matrix $C$ has all zeros in the last (items $\{1, 2\}$) column, $\sum_{i} C_{i, \{1\}} \leq 1 - \sum_{i=1}^n z_{i, \{1, 2\}}$, and $\sum_{i} C_{i, \{2\}} \leq 1 - \sum_{i=1}^n z_{i, \{1, 2\}}$. 
	
	In addition, based on constraint~(\ref{eq:additional-constraints-2-items}), for each bidder $i$,
	\begin{eqnarray*}
		C_{i, \{1\}} + C_{i, \{2\}} =  z_{i, \{1\}} + z_{i, \{2\}} \leq 1 - \sum_{i'=1}^n z_{i', \{1, 2\}}.
	\end{eqnarray*}
	
	Thus $C$ is a doubly stochastic matrix with scaling factor $1 - \sum_{i'=1}^n z_{i', \{1, 2\}}$. Therefore, we can always decompose $C$ into a linear combination $\sum_{\ell = i+1}^{k} p_\ell \cdot B^\ell$, where $\sum_{\ell = i+1}^{k} p_\ell \leq 1 - \sum_{i'=1}^n z_{i', \{1, 2\}}$ and each $B^\ell$ is a feasible 0-1 allocation.
\end{proof}

We leave to future work to characterize the additional constraints
needed for the multi-item ($m>2$) case.

\subsubsection{RegretNet for Two-item Auctions with Implementable Allocations}
To accommodate the additional
constraint~(\ref{eq:additional-constraints-2-items}) for the two items
case we add an additional softmax layer for each bidder. In addition
to the original (unnormalized) bidder-wise scores $s_{i, S}, \forall i
\in N, S\subseteq M$ and item-wise scores $s^{(j)}_{i, S}, \forall i
\in N, S\subseteq M, j\in M$ and their normalized counterparts
$\bar{s}_{i, S}, \forall i \in N, S\subseteq M$ and $\bar{s}^{(j)}_{i,
  S}, \forall i \in N, S\subseteq M, j\in M$, the allocation component
of the network computes an additional set of scores for each bidder $i$, ${s'}^{(i)}_{i, \{1\}}, {s'}^{(i)}_{i, \{2\}}, {s'}^{(i)}_{1, \{1, 2\}}, \cdots, {s'}^{(i)}_{n, \{1, 2\}}$. These additional scores are then normalized using a softmax function as follows,
\begin{eqnarray*}
	\forall i, k \in N, S\subseteq M, &
	\bar{s'}^{(i)}_{k, S} = \frac{\exp\left({s'}^{(i)}_{k, S} \right)}{\exp\left({s'}^{(i)}_{i, \{1\}}\right) + \exp\left({s'}^{(i)}_{i, \{2\}}\right) + \sum_k \exp\left({s'}^{(i)}_{k, \{1, 2\}} \right)}.
\end{eqnarray*}

To satisfy constraint~(\ref{eq:additional-constraints-2-items}) for each bidder $i$, we compute the normalized score $\bar{s'}_{i, S}$ for each $i, S$ as,
$$
\bar{s'}_{i, S} = \left\{
\begin{array}{ll}
\bar{s'}^{(i)}_{i, S} & \text{ if } S=\{1\} \text{ or } \{2\}, \text{and}\\
\min\left\{\bar{s'}^{(k)}_{i, S}: k\in N\right\} & \text{ if } S=\{1, 2\}.
\end{array}
\right.
$$

Then the final allocation for each bidder $i$ is:
\begin{eqnarray*}
z_{i, S} =  \min\left\{\bar{s}_{i, S}, \bar{s'}_{i, S}, \bar{s}^{(j)}_{i, S}: j\in S\right\}.
\end{eqnarray*}

The payment component of the network for combinatorial bidders has the same structure as the one in Figure \ref{fig:gen-net}, 
computing a fractional payment $\tilde{p}_i \in [0,1]$ for each bidder $i$ using a sigmoidal unit, and outputting  a payment $p_i = \tilde{p}_{i}\, \sum_{S\subseteq M} z_{i,S}\, b_{i,S}$. %

\section{Additional Experiments}\label{APP:ADDITIONAL-EXPERIMENTS}

\new{We present a broad range of additional experiments for the two main architectures used in the body of the paper, and additional ones for the
  architectures presented in Appendix~\ref{APP:ADDITIONAL-ARCHITECTURES}}
\subsection{Experiments with MyersonNet}\label{app:expt-myersonnet}

We \new{first} evaluate the MyersonNet \new{architecture introduced in Appendix~\ref{app:myerson}} for designing single-item auctions. \new{We focus on settings with a small number of bidders because this is where
revenue-optimal auctions are meaningfully
different from efficient auctions.}  
\new{We present experimental results for the following four settings:}
\begin{enumerate}[label=\Alph*.,ref=\Alph*,start=7]
\item \new{Three bidders with independent, regular, and symmetrically distributed valuations $v_i \sim U[0,1]$. \label{exp:myer-1}}
\item \new{Five bidders with independent, regular, and asymmetrically distributed valuations $v_i \sim U[0,i]$. \label{exp:myer-2}}
\item \new{Three bidders with independent, regular, and symmetrically distributed valuations $v_i \sim Exp(3)$.\label{exp:myer-3}}
\item \new{Three bidders with independent irregular distributions $F_\text{irregular}$, where each $v_i$ is drawn from $U[0,3]$ with probability 3/4 and from $U[3,8]$ with probability 1/4. \label{exp:myer-4}}
\end{enumerate}

\new{We note that the} optimal auctions for the first three distributions involve virtual value
functions $\bar{\phi}_i$ that are strictly monotone. For the fourth and final distribution
the optimal auction uses ironed virtual value functions that are not strictly monotone.

For the training set and test set \new{we used} 1,000 valuation
profiles sampled i.i.d.~from the respective valuation distribution. 
We \new{modeled} each
transform $\bar{\phi}_i$ in the MyersonNet architecture
using 5 sets of 10 linear functions, and we used $\kappa = 10^3$.

\begin{figure}[t]
\centering
\begin{tabular}{|l|c|c|c|c|}
\hline
\multirow{2}{*}{Distribution} & 
\multirow{2}{*}{$n$} & 
Opt & 
SPA 
& MyersonNet
\\
\cline{3-5}
&
& $\mathit{rev}$  
& $\mathit{rev}$
& $\mathit{rev}$\\
\hline
Setting \ref{exp:myer-1} & 3
 & 0.531 & 0.500 & 0.531\\
Setting \ref{exp:myer-2}  & 5
& 2.314 & 2.025  & 2.305\\
Setting \ref{exp:myer-3}  & 3
& 2.749 & 2.500 & 2.747\\
Setting \ref{exp:myer-4}  & 3
& 2.368 & 2.210 & 2.355 \\
\hline
\end{tabular}
\caption{The test revenue of the single-item auctions obtained with MyersonNet.
\label{tab:single-item}}
\vspace{-15pt}
\end{figure}

The results are summarized in \new{Figure~\ref{tab:single-item}.} For
comparison, we also report the revenue obtained by the optimal Myerson
auction and the second price auction (SPA) without reserve. The auctions learned by the neural network yield revenue close to the optimal.

\subsection{\new{Additional Experiments with RochetNet and RegretNet}}

\new{In addition to the experiments with RochetNet and RegretNet on the single bidder, multi-item settings in Section~\ref{sec:manelliandvincent} we also considered the following settings:}

\begin{enumerate}[label=\Alph*.,ref=\Alph*, start=11]
\item Single additive bidder with independent preferences over two non-identically distributed items, where
$v_1 \sim U[4,16]$ and $v_2 \sim U[4,7]$. The optimal mechanism is given by \citet{DaskalakisEtAl17}. \label{SII}
\item Single additive bidder with preferences over two items, where
$(v_1, v_2)$ are drawn jointly and uniformly from a unit triangle with vertices $(0,0), (0,1)$ and $(1,0)$. The optimal mechanism is due to \citet{HaghpanahH15}. \label{SIII}
\item Single unit-demand bidder with independent preferences over two items, where the item values $v_1, v_2 \sim U[0,1]$. See~\cite{Pavlov11} for the optimal mechanism. \label{SIV}
\end{enumerate}

\new{We used RegretNet architectures with two hidden layers with 100 nodes each. The optimal allocation rules as well as a side-by-side comparison of those found by RochetNet and RegretNet are given in Figure~\ref{fig:alloc-extended}. Figure~\ref{fig:regretnet-numbers} gives the revenue and regret achieved by RegretNet and the revenue achieved by RochetNet.}

\new{We find that in all three settings RochetNet recovers the optimal mechanism basically exactly, while RegretNet finds an auction that matches the optimal design to surprising accuracy.}

\begin{figure}[t]
\centering
\centering
	\begin{subfigure}{0.49\textwidth}
		\centering
		\hspace*{-10pt}
		\includegraphics[scale=0.4]{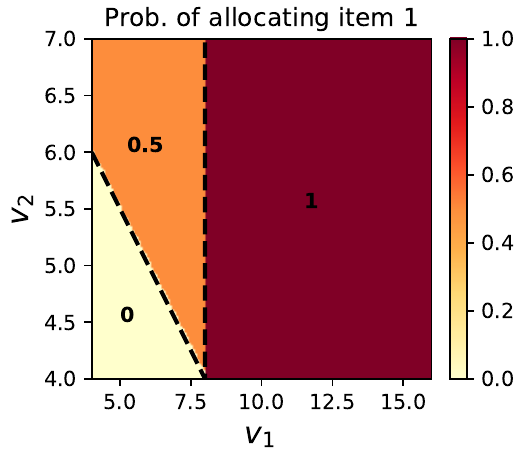}
		\hspace*{-10pt}{\scriptsize (a)}\hspace*{-2pt}
		\includegraphics[scale=0.4]{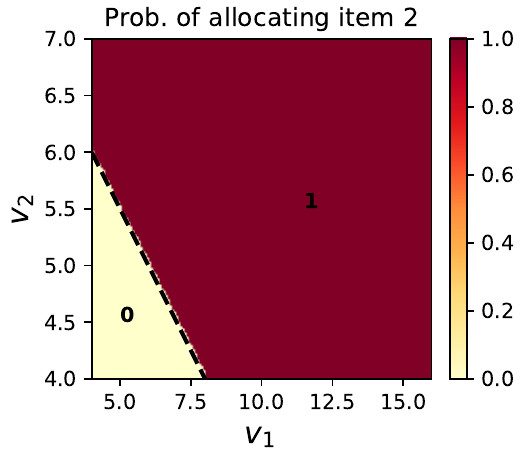}
	\end{subfigure}
	\begin{subfigure}{0.49\textwidth}
	\centering
	\hspace*{-10pt}
	\includegraphics[scale=0.4]{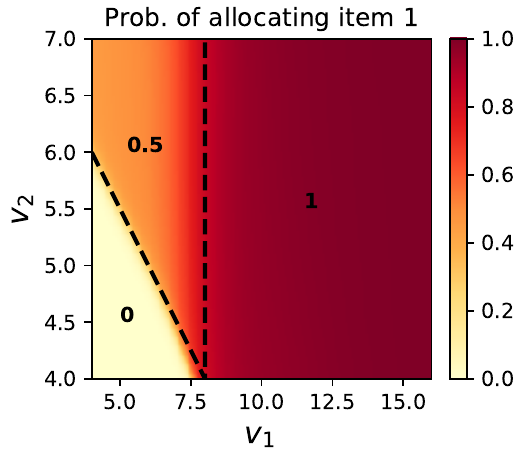}
	\hspace*{-8pt}{\footnotesize (b)}%
	\includegraphics[scale=0.4]{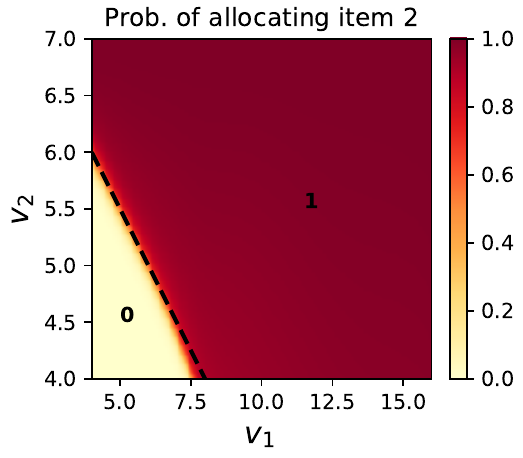}
	\end{subfigure}
	\begin{minipage}{0.49\textwidth}
		\centering
		\hspace*{-10pt}
		\includegraphics[scale=0.4]{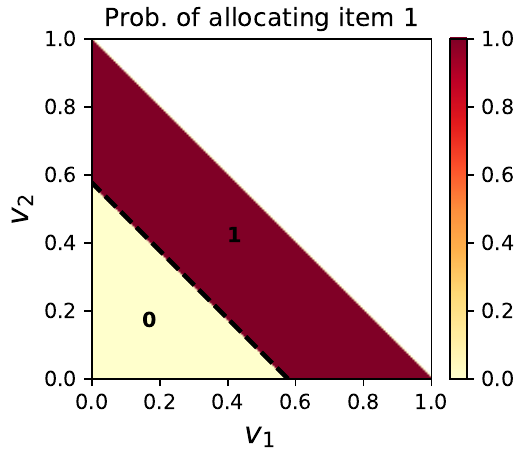}
		\hspace*{-10pt}{\scriptsize (c)}\hspace*{-2pt}
		\includegraphics[scale=0.4]{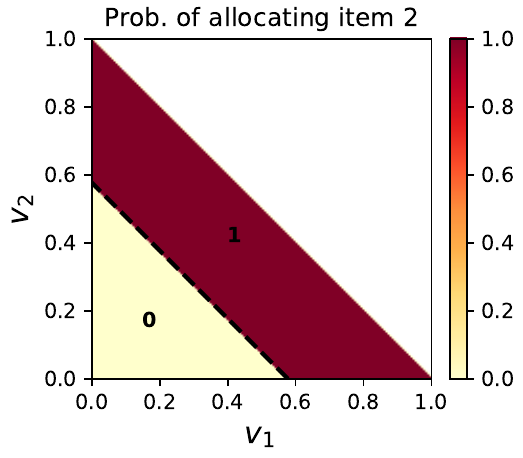}
	\end{minipage}
	\begin{subfigure}{0.49\textwidth}
	\centering
	\hspace*{-10pt}
	\includegraphics[scale=0.4]{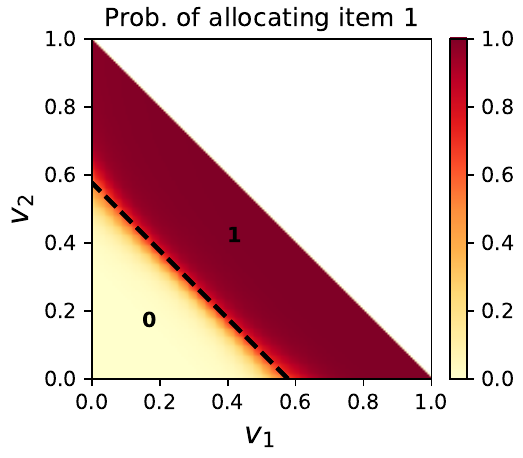}
	\hspace*{-6pt}{\footnotesize (d)}%
	\includegraphics[scale=0.4]{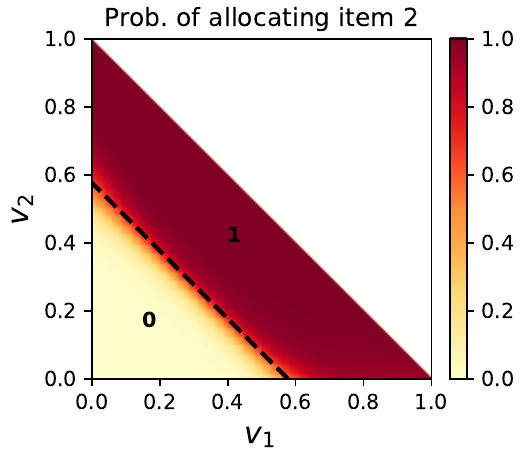}
	\end{subfigure}
	\begin{subfigure}{0.49\textwidth}
		\centering
		\hspace*{-10pt}
		\includegraphics[scale=0.4]{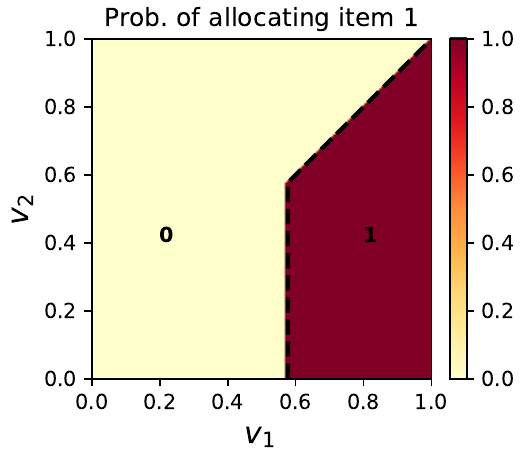}
		\hspace*{-10pt}{\scriptsize (e)}\hspace*{-2pt}
		\includegraphics[scale=0.4]{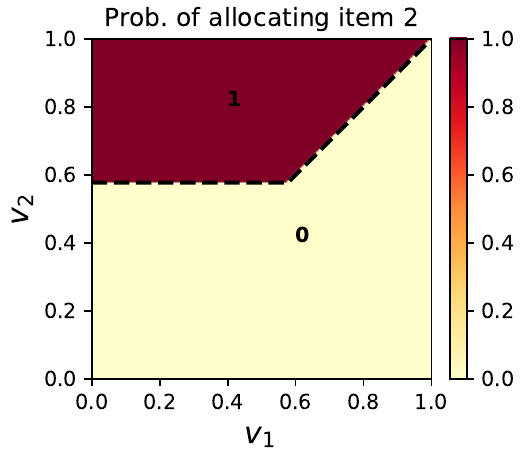}
	\end{subfigure}
	\begin{subfigure}{0.49\textwidth}
	\centering
	\hspace*{-10pt}
	\includegraphics[scale=0.4]{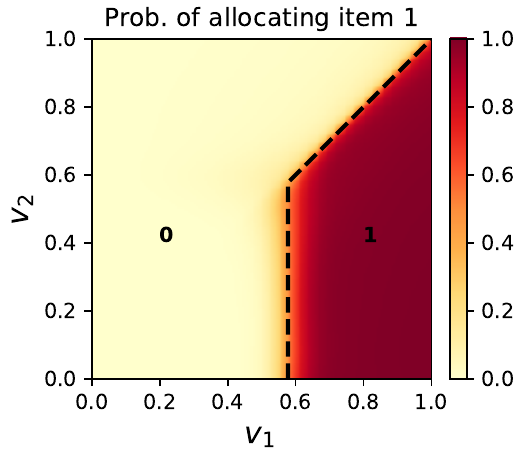}
	\hspace*{-6pt}{\footnotesize (f)}%
	\includegraphics[scale=0.4]{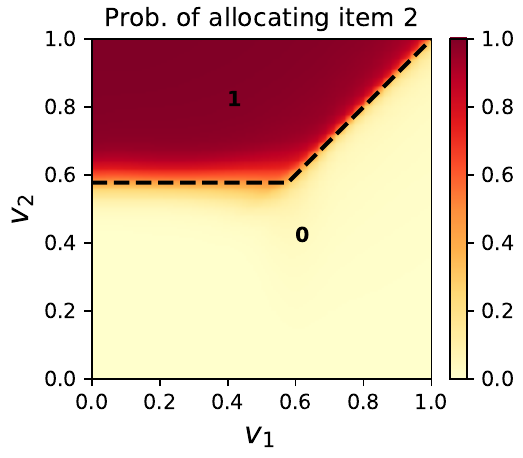}
	\end{subfigure}
\caption{%
\new{Side-by-side comparison of the allocation rules learned by RochetNet and RegretNet for single bidder, two items settings. Panels (a) and (b) are for Setting~\ref{SII}, Panels (c) and (d) are for Setting~\ref{SIII}, and Panels (e) and (f) for Setting~\ref{SIV}. Panels describe the learned allocations for the two items (item 1 on the left, item 2 on the right). Optimal mechanisms are indicated via dashed lines and allocation probabilities in each region.}
\label{fig:alloc-extended}}
\vspace{-17pt}
\end{figure}

\begin{figure}[t]
\begin{center}
\begin{tabular}{|l|c|c|c|c|c|c|c|}
\hline
\multirow{2}{*}{Distribution} & Opt & \multicolumn{5}{|c|}{RegretNet} & RochetNet \\
\cline{2-8} & $\mathit{rev}$ & $\mathit{rev}$ & $\mathit{rgt (mean)}$& $\mathit{rgt (90\%)}$ &$\mathit{rgt(95\%)}$& $\mathit{rgt(99\%)}$ & $\mathit{rev}$ \\
\hline
Setting \ref{SII}
& 9.781  & 9.734 & $< 0.001$ & $< 0.001$ & $< 0.001$ &  0.001 & 9.779 \\
Setting \ref{SIII}
& 0.388  & 0.392  & $< 0.001$ & $< 0.001$ & $< 0.001$ & 0.001 & 0.388 \\
Setting \ref{SIV}
& 0.384 & 0.384 & $< 0.001$ & $< 0.001$ & $< 0.001$ &  0.001 & 0.384 \\
\hline
\end{tabular}
\end{center}
\caption{Test revenue and test regret achieved by RegretNet  and test revenue achieved by RochetNet for \new{Settings \ref{SII}--\ref{SIV}.}}\label{fig:regretnet-numbers}
\vspace{-5pt}
\end{figure}

\subsection{\new{Experiments with RegretNet with Combinatorial Valuations}}
\label{sec:experCA}
We next compare \new{our RegretNet architecture for combinatorial valuations described in Section~\ref{sec:ca-architecture} to} the computational results of \citet{SandholmL15}
for the following settings for which the optimal auction is not known:
\begin{enumerate}[label=\Alph*.,start=14,ref=\Alph*]
\item %
\new{Two} additive bidders and \new{two} items, where  bidders draw their value for each item \new{independently} from $U[0,1]$.\footnote{\new{This setting can be handled by the non-combinatorial RegretNet architecture and is included here for comparison to \cite{SandholmL15}.}}\label{exp:comb-1}%
\item %
\new{Two} bidders and \new{two} items, with item valuations $v_{1,1}, v_{1,2}, v_{2,1}, v_{2,2}$ \new{drawn independently from} $U[1,2]$ and set valuations $v_{1,\{1,2\}} = v_{1,1}+v_{1,2} + C_1$ and $v_{2,\{1,2\}} = v_{2,1}+ v_{2,2} + C_2$, where $C_1, C_2$ are drawn independently from $U[-1,1]$. \label{exp:comb-2}%
 \item \new{Two} bidders and \new{two} items, with item valuations
 $v_{1,1}, v_{1,2}$ drawn independently from $U[1,2]$, item valuations $v_{2,1}, v_{2,2}$ drawn independently from $U[1,5]$, and set valuations $v_{1,\{1,2\}} = v_{1,1} + v_{1,2} + C_1$ and $v_{2,\{1,2\}} = v_{2,1}+ v_{2,2} + C_2$, where $C_1, C_2$ are drawn independently from $U[-1,1]$. \label{exp:comb-3}
\end{enumerate}

\new{These settings correspond to Settings I.-III.~described in Section 3.4 of \cite{SandholmL15}. These authors conducted extensive experiments with several different classes of incentive compatible mechanisms, and different heuristics for setting the parameters of these auctions. They observed the highest revenue for two classes of mechanisms that generalize mixed bundling auctions and $\lambda$-auctions \citep{JehielMM07}.}

\new{These two classes of mechanisms are the \emph{Virtual Value Combinatorial Auctions} ($\text{VVCA}$) and \emph{Affine Maximizer Auctions} ($\text{AMA}$). They also considered a restriction of $\text{AMA}$ to bidder-symmetric auction ($\text{AMA}_{\text{bsym}}$). We use   $\text{VVCA}^*$, $\text{AMA}^*$, and~$\text{AMA}^*_{\text{bsym}}$ to denote the best mechanism in the respective class, as reported by~\citeauthor{SandholmL15} and found using a heuristic grid search technique.}

\new{For Setting~\ref{exp:comb-1} and \ref{exp:comb-2}, \citeauthor{SandholmL15} observed the highest revenue for $\text{AMA}^*_{\text{bsym}}$, and for Setting~\ref{exp:comb-3} the best performing mechanism was $\text{VVCA}^*$.}
\new{Figure \ref{fig:comb} compares the performance of RegretNet to that of these best performing, benchmark mechanisms. To compute the revenue of the benchmark mechanisms we used the parameters reported in~\citet{SandholmL15} (Table 2, p.\ 1011), and evaluated the respective mechanisms on the same test set %
used for RegretNet.}
\new{Note that RegretNet is able to learn new auctions with improved revenue and tiny regret.}

\begin{figure}
	\begin{subfigure}{0.99\textwidth}
		\centering
		\includegraphics[scale=0.5]{./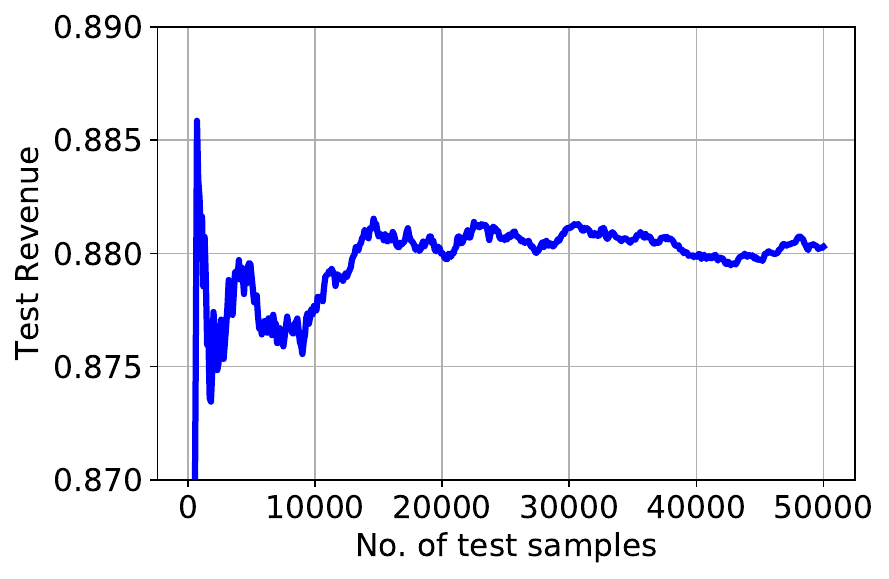}
		\includegraphics[scale=0.5]{./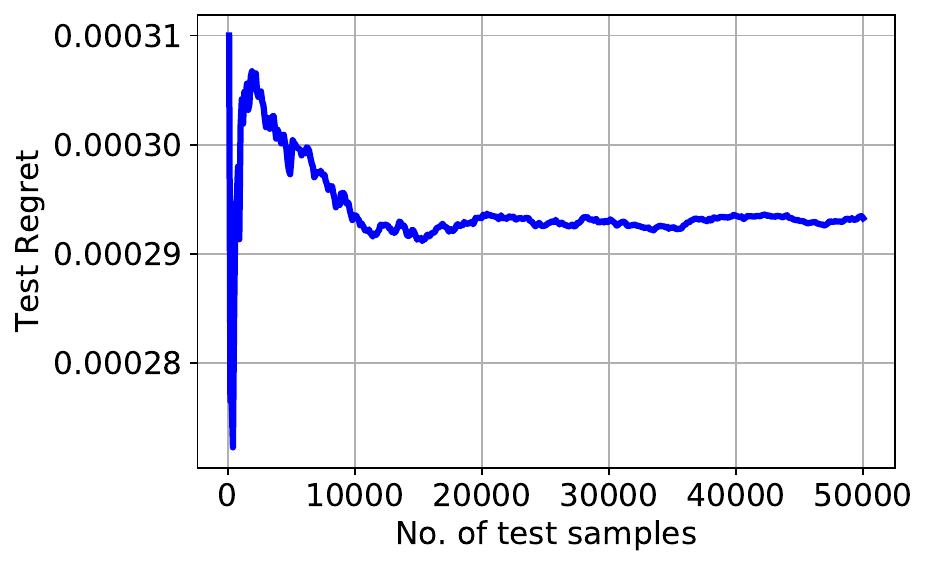}
		~\\[-5pt]
	\end{subfigure}
	\caption{Test revenue and test regret achieved by RegretNet for Setting~\ref{exp:comb-1} as we increase the size of the test set. \label{fig:large-test}}
\end{figure}

\ssredit{In order to make sure we are using sufficient data to report our results, we re-ran our evaluation for setting~\ref{exp:comb-1} on a bigger test set with up to 50,000 samples and computed the regret using 5,000 gradient ascent steps. The estimated revenue and regret remained approximately the same as that observed on our regular test set with 10,000 samples with regret computed using 2,000 gradient ascent steps. Figure~\ref{fig:large-test} shows how the revenue and regret varies as we increase the size of the test set.} 
\begin{figure}[t]
\centering
\begin{tabular}{|c|c|c|c|c|c|c|c|}
\hline
\multirow{2}{*}{Distribution} & 
\multicolumn{5}{|c|}{RegretNet} & $\text{VVCA}^*$ & $\text{AMA}^*_\text{bsym}$ \\

\cline{2-6} & $\mathit{rev}$ & $\mathit{rgt (mean)}$& $\mathit{rgt (90\%)}$ & $\mathit{rgt(95\%)}$ & $\mathit{rgt(99\%)}$ & $\mathit{rev}$ & $\mathit{rev}$ \\
\hline

Setting~\ref{exp:comb-1} & 0.878 & $<0.001$ & $<0.001$ & $<0.001$ &  0.001 & --- & 0.862 \\
Setting~\ref{exp:comb-2} & 2.860 & $<0.001$ & $<0.001$ & $<0.001$ &  $<0.001$ & --- & 2.765 \\
Setting~\ref{exp:comb-3} & 4.269 & $<0.001$ & $<0.001$ & $<0.001$ &  $<0.001$ & 4.209 & --- \\

\hline
\end{tabular}
\vspace{-2pt}
\caption{\new{Test revenue and test regret for RegretNet for Settings~\ref{exp:comb-1}--\ref{exp:comb-3} and a comparison with the best performing VVCA and $\text{AMA}_\text{bsym}$ auctions as reported by~\citet{SandholmL15}.} \label{fig:comb}}
\vspace{-4pt}
\end{figure}

\subsection{Experiments with RochetNet with varying linear units}\label{app:rochet-overparam}

\ssredit{In Figure~\ref{fig:6items}, we show how the performance of RochetNet varies as we increase the number of initialized menu choices (i.e., number of units in the network). We consider here a single bidder and six items, where the bidder's valuation is sampled independently $U[0,1]$ for each item. The optimal mechanism is given by the Straight-Jacket Auction (SJA). We observe that RochetNet recovers the optimal design with increasing accuracy as we increase the number of menu choices (units in the network) and even while only a small fraction of the menu choices are active ( $<3\%$ active when the number of initialized menu choices are over a 1000).} \ssredit{When we also impose item-symmetry, we observed that the performance of RochetNet is relatively invariant to increasing the number of initialized menu choices.} 
\begin{figure}
	\begin{subfigure}{0.99\textwidth}
		\centering
		\includegraphics[scale=0.5]{./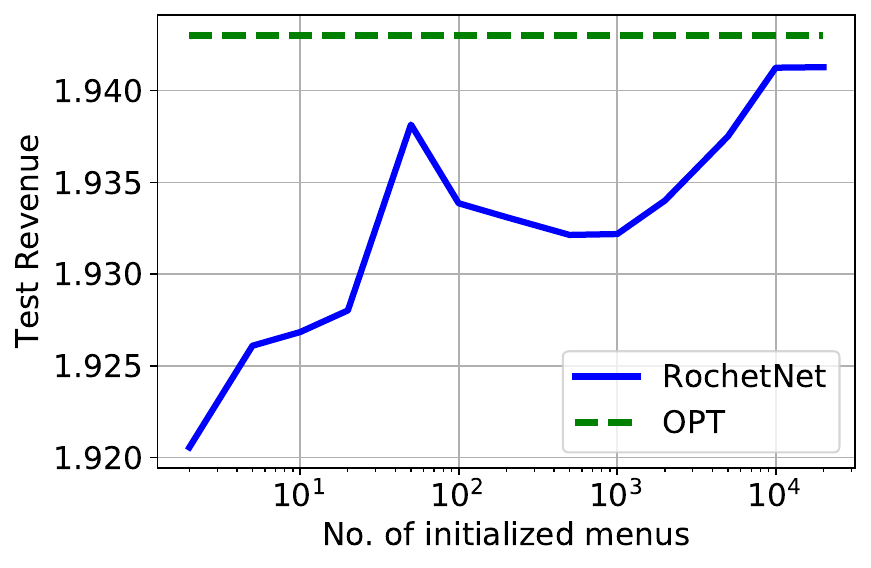}
		~\\[-5pt]
	\end{subfigure}
	\caption{Test revenue achieved by RochetNet for different numbers of initialized menu choices (units in the neural network). \label{fig:6items}}
\end{figure}

\subsection{Additional Experiments with discovering new analytical results }\label{app:additional-tri}

\ssredit{In Section~\ref{sec:experiments}, we described how RochetNet can be used to discover new analytical results for optimal auctions. In this section, we give analogous computational results, again suggestive of the structure of theoretically-optimal auction designs, for two such additional settings:}

\begin{enumerate}[label=\Alph*.,start=17,ref=\Alph*]
		\item One additive bidder and two items, where the bidder's valuation is drawn uniformly from the triangle $T=\{(v_1, v_2)|v_1 +v_2(c-1) \leq 2c - 1, v_1\geq 1, v_2\geq 1\}$ where $c\geq1$ is a free parameter \label{exp:triangle-2}
		\item One additive bidder and two items, where the bidder's valuation is drawn uniformly from the triangle $T=\{(v_1, v_2)|v_1 + v_2 \leq c + 1, v_1\geq 1, v_2\geq 1\}$ where $c\geq1$ is a free parameter. \label{exp:triangle-3}
\end{enumerate}

\ssredit{The mechanisms learned by RochetNet for setting~\ref{exp:triangle-2} and setting~\ref{exp:triangle-3} for various values of $c$ are shown in Figure~\ref{fig:alloc-triangle-setting-2} and Figure~\ref{fig:alloc-triangle-setting-3} respectively}

\begin{figure}[t]
	\centering
	\begin{minipage}{0.49\textwidth}
		\centering
		\hspace*{-10pt}
		\includegraphics[scale=0.38]{./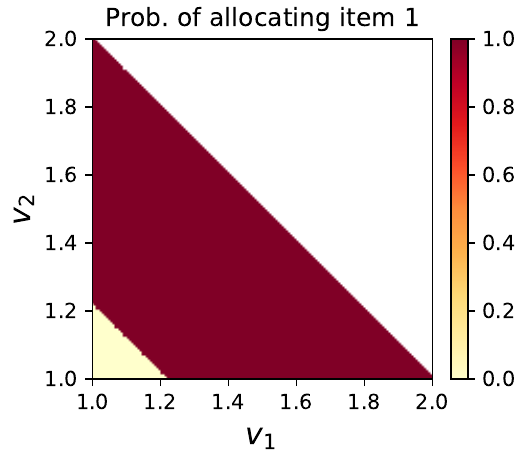}
		\hspace*{-10pt}{\scriptsize (a)}\hspace*{-2pt}
		\includegraphics[scale=0.38]{./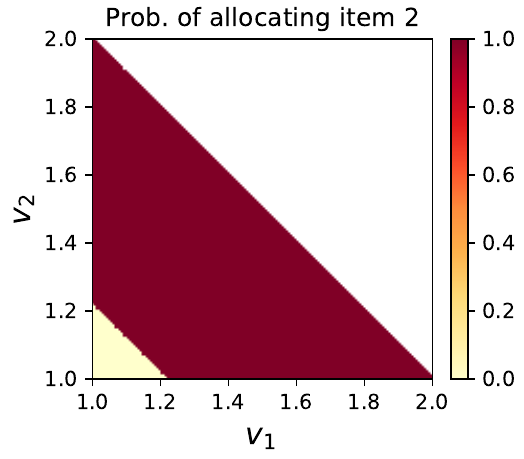}
	\end{minipage}
	\begin{subfigure}{0.49\textwidth}
		\centering
		\hspace*{-10pt}
		\includegraphics[scale=0.38]{./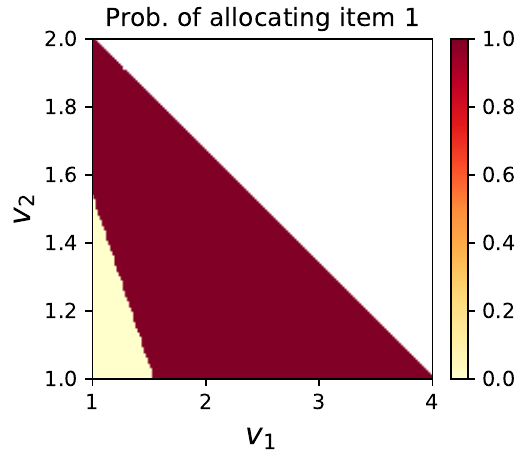}
		\hspace*{-10pt}{\scriptsize (b)}\hspace*{-2pt}
		\includegraphics[scale=0.38]{./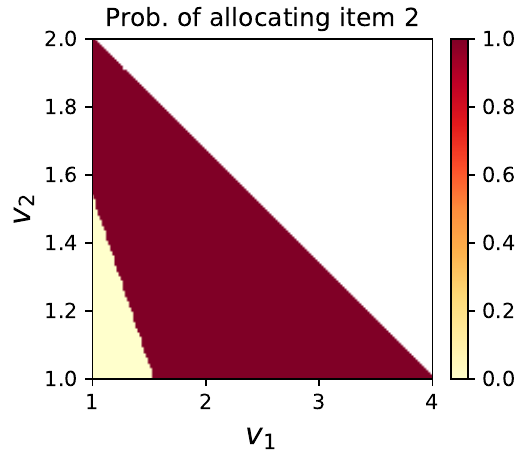}
	\end{subfigure}
	
	\centering
	\begin{minipage}{0.49\textwidth}
		\centering
		\hspace*{-10pt}
		\includegraphics[scale=0.38]{./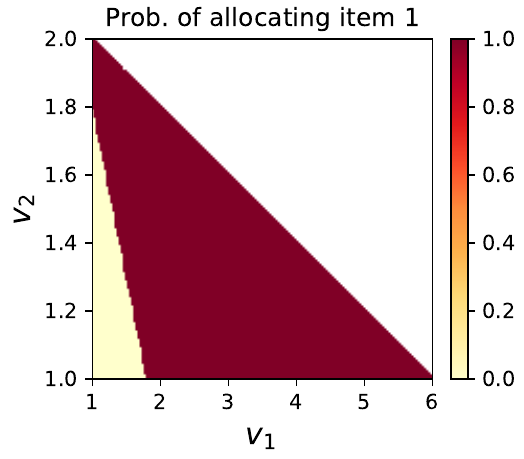}
		\hspace*{-10pt}{\scriptsize (c)}\hspace*{-2pt}
		\includegraphics[scale=0.38]{./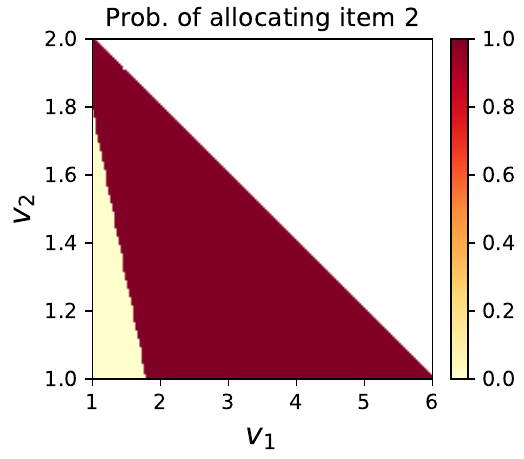}
	\end{minipage}
	\begin{subfigure}{0.49\textwidth}
		\centering
		\hspace*{-10pt}
		\includegraphics[scale=0.38]{./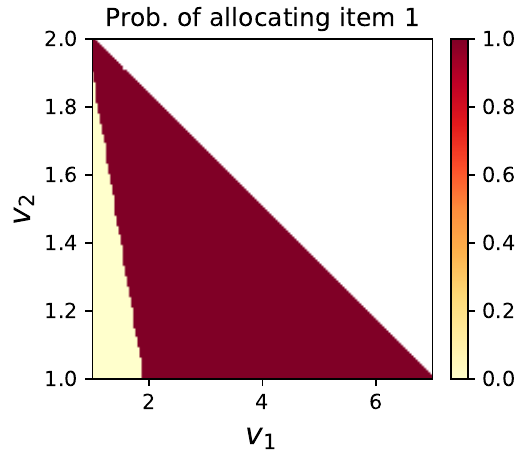}
		\hspace*{-10pt}{\scriptsize (d)}\hspace*{-2pt}
		\includegraphics[scale=0.38]{./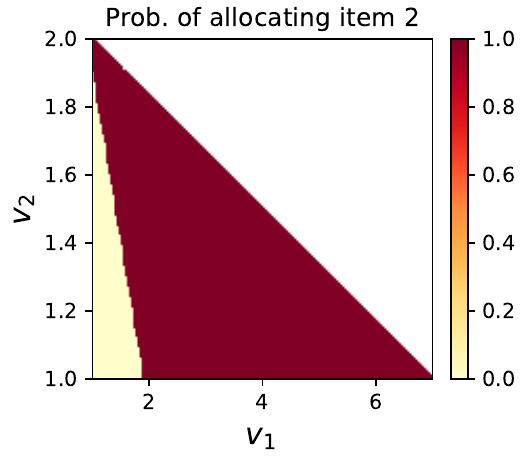}
	\end{subfigure}
	
	\centering
	\begin{minipage}{0.49\textwidth}
		\centering
		\hspace*{-10pt}
		\includegraphics[scale=0.38]{./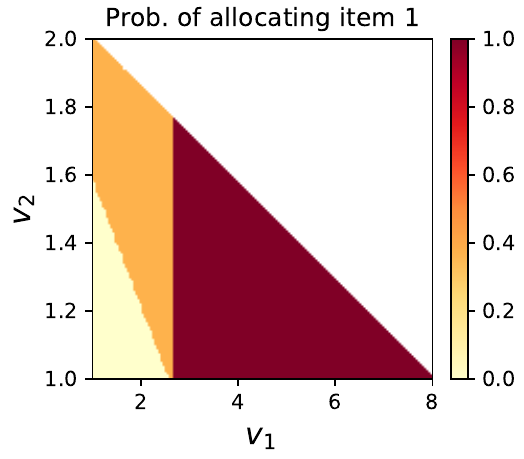}
		\hspace*{-10pt}{\scriptsize (e)}\hspace*{-2pt}
		\includegraphics[scale=0.38]{./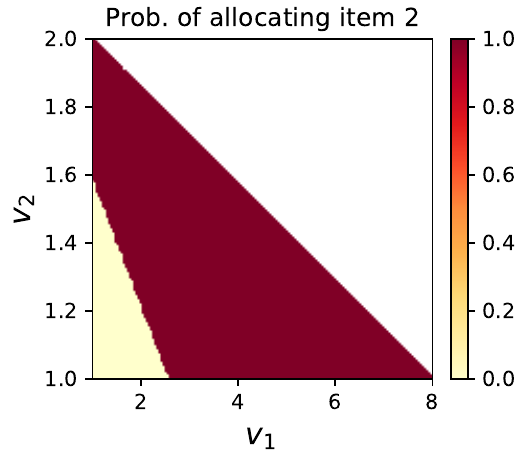}
	\end{minipage}
	\begin{subfigure}{0.49\textwidth}
		\centering
		\hspace*{-10pt}
		\includegraphics[scale=0.38]{./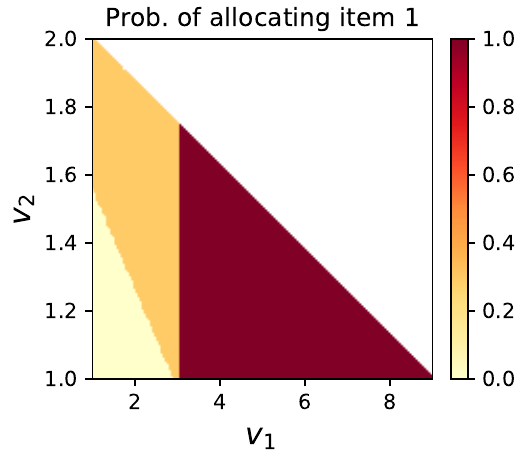}
		\hspace*{-10pt}{\scriptsize (f)}\hspace*{-2pt}
		\includegraphics[scale=0.38]{./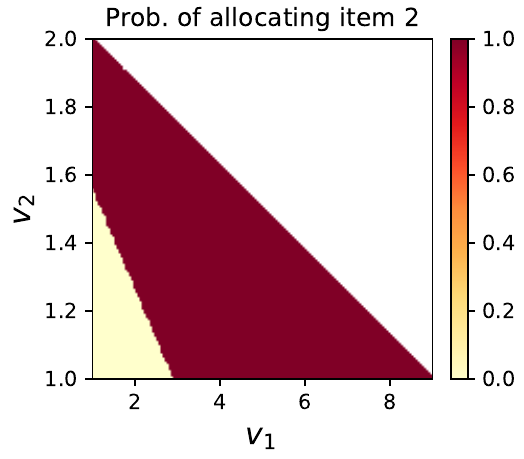}
	\end{subfigure}
	
	\centering
	\begin{minipage}{0.49\textwidth}
		\centering
		\hspace*{-10pt}
		\includegraphics[scale=0.38]{./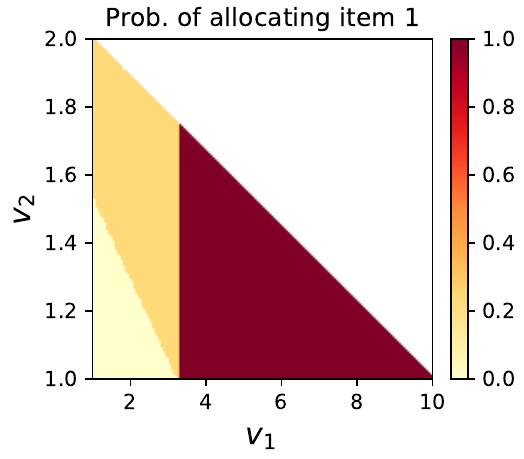}
		\hspace*{-10pt}{\scriptsize (g)}\hspace*{-2pt}
		\includegraphics[scale=0.38]{./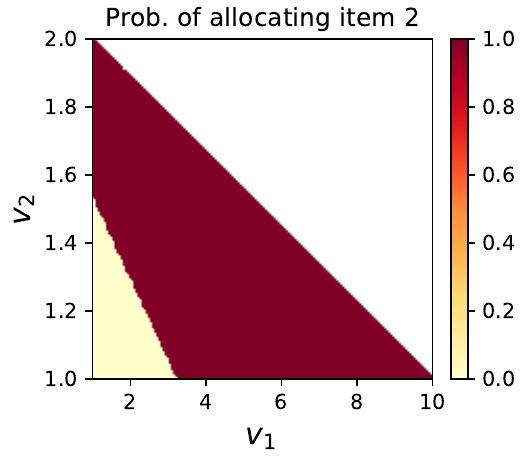}
	\end{minipage}
	\begin{subfigure}{0.49\textwidth}
		\centering
		\hspace*{-10pt}
		\includegraphics[scale=0.38]{./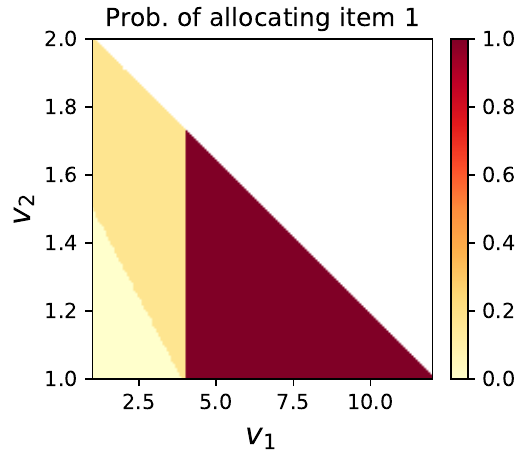}
		\hspace*{-10pt}{\scriptsize (h)}\hspace*{-2pt}
		\includegraphics[scale=0.38]{./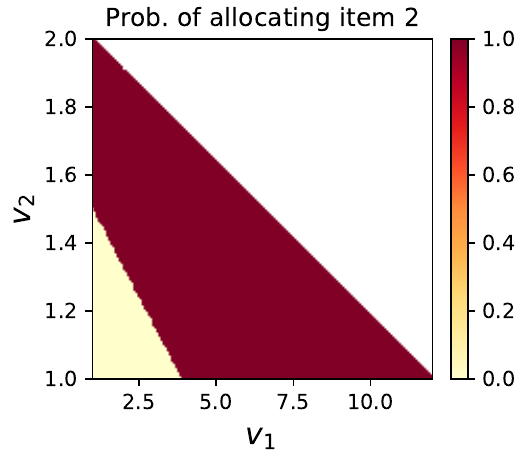}
	\end{subfigure}

	\caption{%
		Allocation rules learned by RochetNet for Setting~\ref{exp:triangle-2}. 
		The panels describe the probability that the bidder is allocated item 1 (left) and item 2 (right) for different values $c = 2.0, 4.0, 6.0, 7.0, 8.0, 9.0, 10.0$ and $12$.
		\label{fig:alloc-triangle-setting-2}}
	\vspace{-10pt}
\end{figure}

\begin{figure}[t]
	\centering
	\begin{minipage}{0.49\textwidth}
		\centering
		\hspace*{-10pt}
		\includegraphics[scale=0.38]{./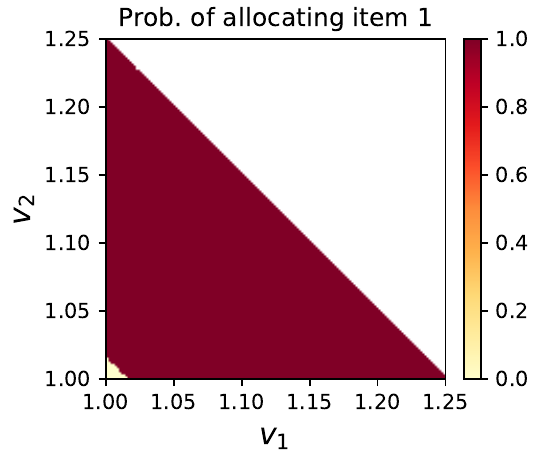}
		\hspace*{-10pt}{\scriptsize (a)}\hspace*{-2pt}
		\includegraphics[scale=0.38]{./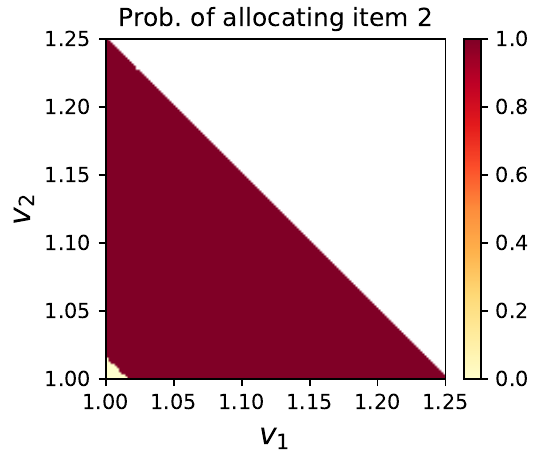}
	\end{minipage}
	\begin{subfigure}{0.49\textwidth}
		\centering
		\hspace*{-10pt}
		\includegraphics[scale=0.38]{./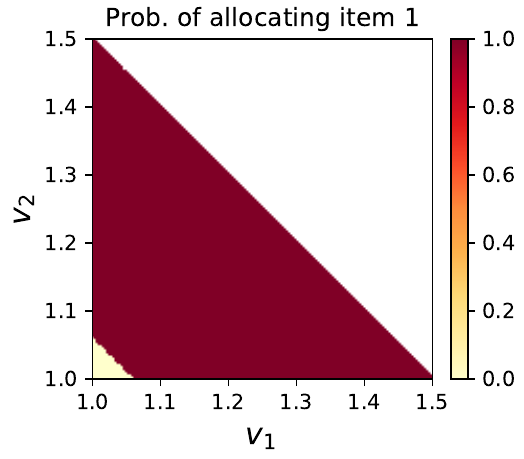}
		\hspace*{-10pt}{\scriptsize (b)}\hspace*{-2pt}
		\includegraphics[scale=0.38]{./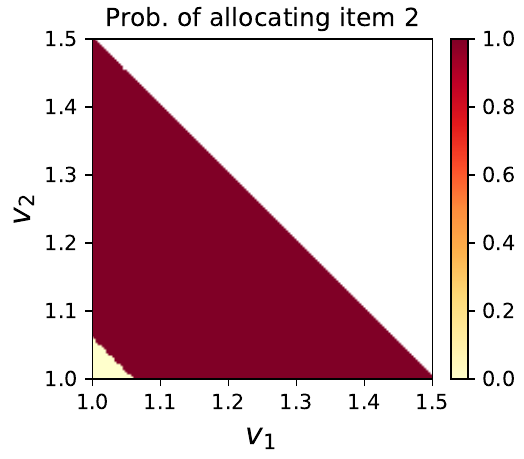}
	\end{subfigure}
	
	\centering
	\begin{minipage}{0.49\textwidth}
		\centering
		\hspace*{-10pt}
		\includegraphics[scale=0.38]{./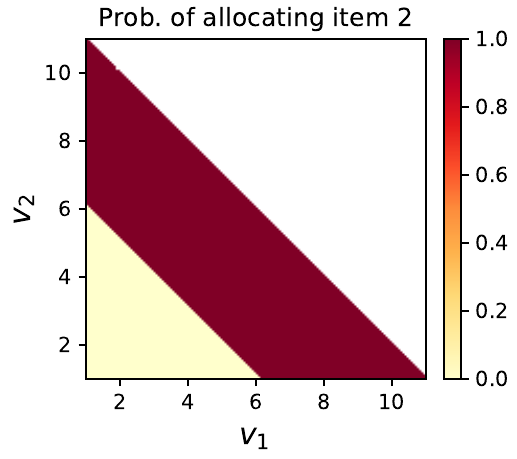}
		\hspace*{-10pt}{\scriptsize (c)}\hspace*{-2pt}
		\includegraphics[scale=0.38]{./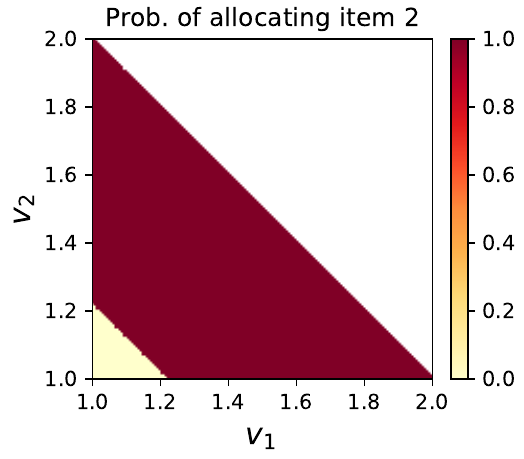}
	\end{minipage}
	\begin{subfigure}{0.49\textwidth}
		\centering
		\hspace*{-10pt}
		\includegraphics[scale=0.38]{./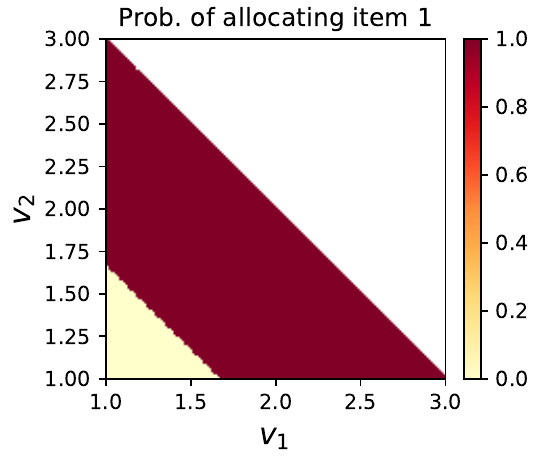}
		\hspace*{-10pt}{\scriptsize (d)}\hspace*{-2pt}
		\includegraphics[scale=0.38]{./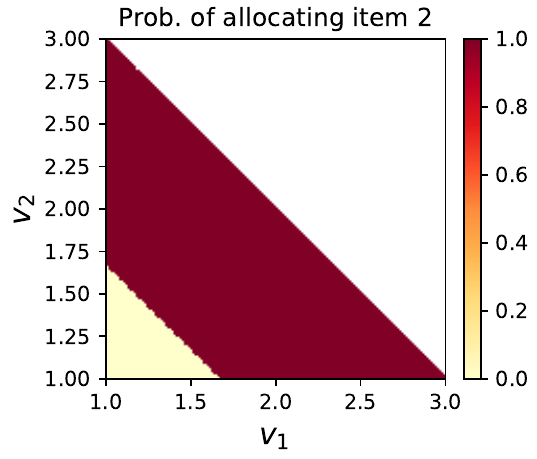}
	\end{subfigure}
	
	\centering
	\begin{minipage}{0.49\textwidth}
		\centering
		\hspace*{-10pt}
		\includegraphics[scale=0.38]{./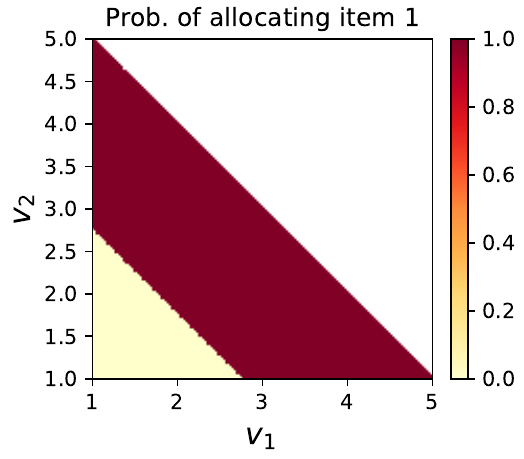}
		\hspace*{-10pt}{\scriptsize (e)}\hspace*{-2pt}
		\includegraphics[scale=0.38]{./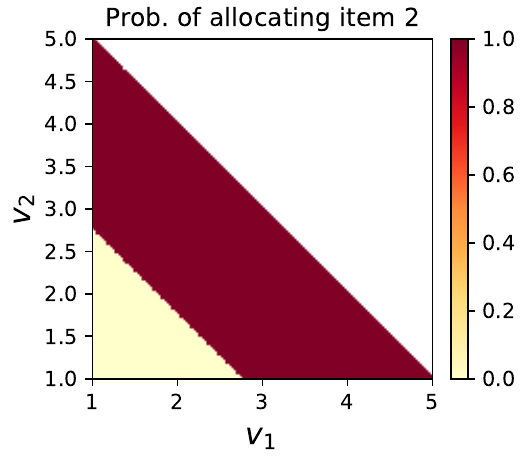}
	\end{minipage}
	\begin{subfigure}{0.49\textwidth}
		\centering
		\hspace*{-10pt}
		\includegraphics[scale=0.38]{./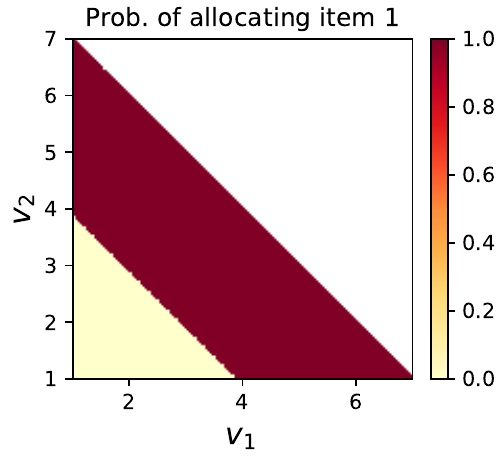}
		\hspace*{-10pt}{\scriptsize (f)}\hspace*{-2pt}
		\includegraphics[scale=0.38]{./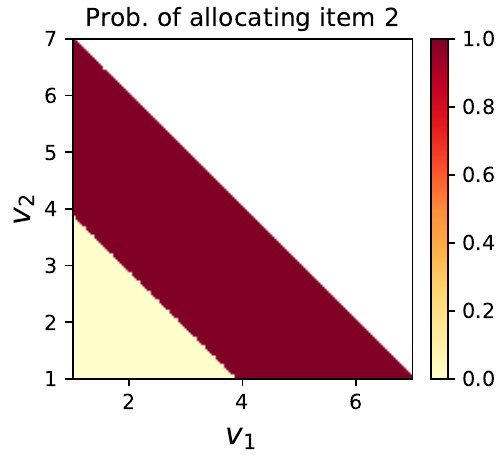}
	\end{subfigure}
	
	\centering
	\begin{minipage}{0.49\textwidth}
		\centering
		\hspace*{-10pt}
		\includegraphics[scale=0.38]{./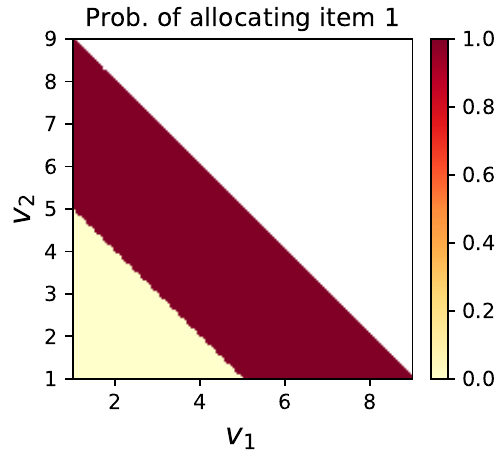}
		\hspace*{-10pt}{\scriptsize (g)}\hspace*{-2pt}
		\includegraphics[scale=0.38]{./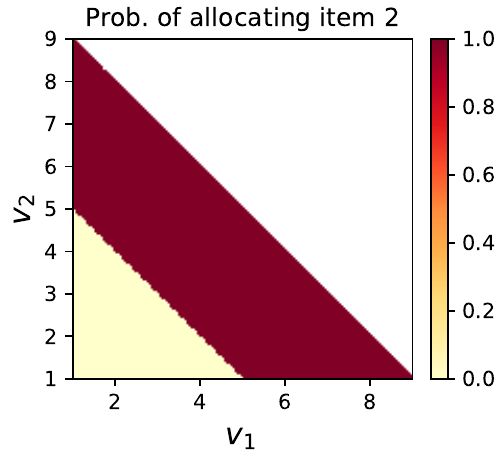}
	\end{minipage}
	\begin{subfigure}{0.49\textwidth}
		\centering
		\hspace*{-10pt}
		\includegraphics[scale=0.38]{./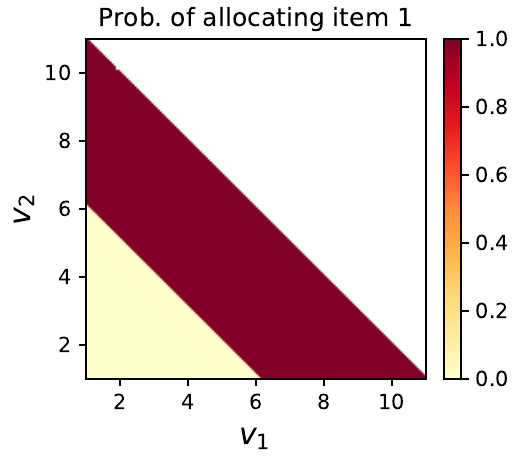}
		\hspace*{-10pt}{\scriptsize (h)}\hspace*{-2pt}
		\includegraphics[scale=0.38]{./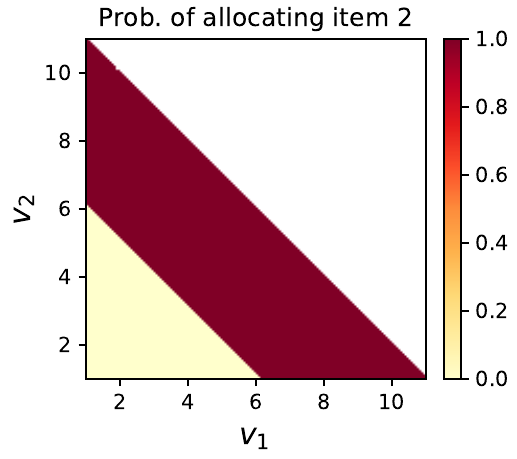}
	\end{subfigure}

	\caption{%
		Allocation rules learned by RochetNet for Setting~\ref{exp:triangle-3}. 
		The panels describe the probability that the bidder is allocated item 1 (left) and item 2 (right) for different values $c = 1.25, 1.5, 2.0, 3.0, 5.0, 7.0, 9.0$ and $11.0$.
		\label{fig:alloc-triangle-setting-3}}
	\vspace{-10pt}
\end{figure}

\if 0
\begin{figure}[t]
\centering
\includegraphics[page=1,scale=0.3,trim={2.5cm 0 2.5cm 0},clip]{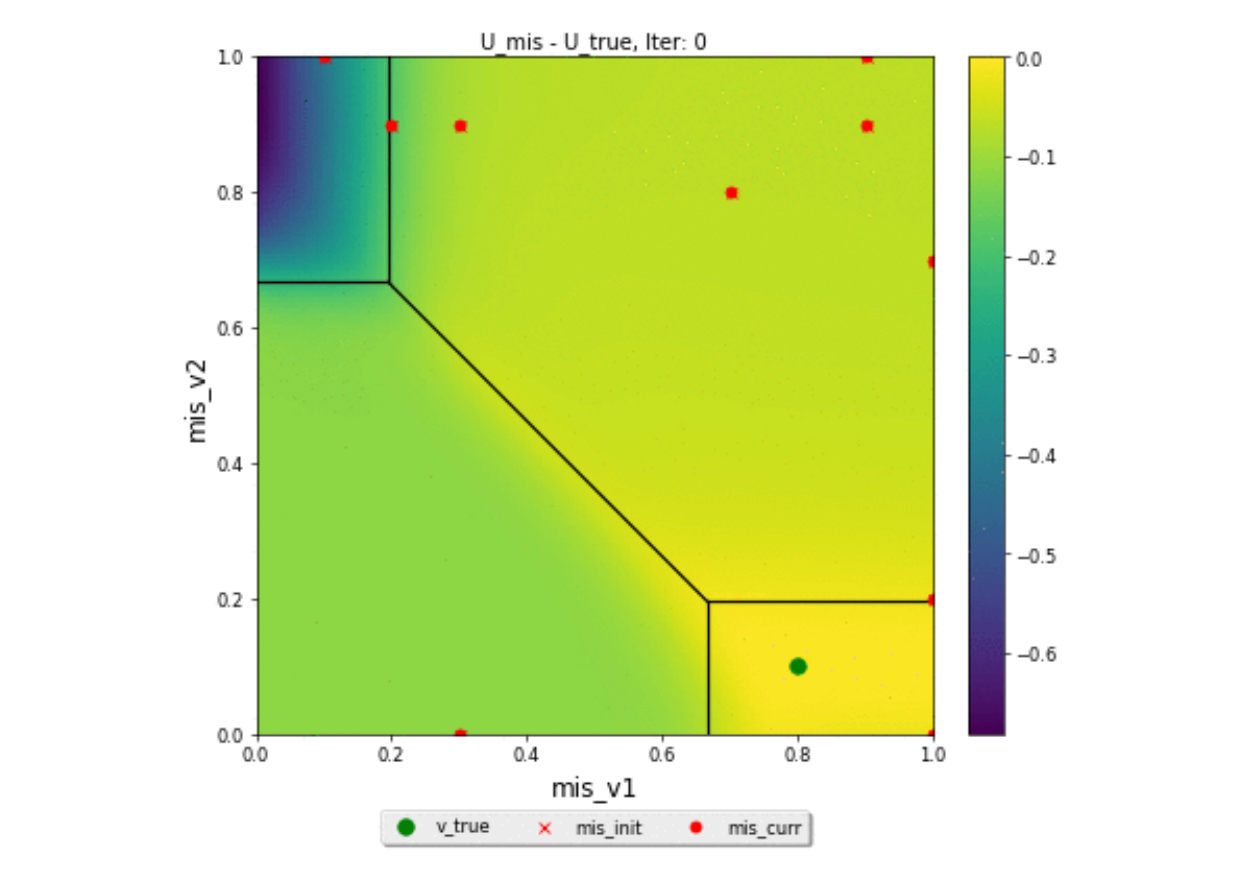}
\includegraphics[page=3,scale=0.3,trim={2.5cm 0 2.5cm 0},clip]{plots_paul/vis0801}%
\includegraphics[page=5,scale=0.3,trim={2.5cm 0 2.5cm 0},clip]{plots_paul/vis0801}\\
\includegraphics[page=7,scale=0.3,trim={2.5cm 0 2.5cm 0},clip]{plots_paul/vis0801}
\includegraphics[page=9,scale=0.3,trim={2.5cm 0 2.5cm 0},clip]{plots_paul/vis0801}%
\includegraphics[page=11,scale=0.3,trim={2.5cm 0 2.5cm 0},clip]{plots_paul/vis0801}\\
\caption{Visualization of the gradient-based approach to regret approximation for a well-trained auction for Setting \ref{SI}. The top left figure shows the true valuation (green dot) and ten random initial misreports (red dots). %
The remaining plots give snapshots of the progress of gradient ascent, showing an updatye for every four gradient steps.
\label{fig:gradient}}
\vspace{-10pt}
\end{figure}
\fi

\section{Comparison to Linear Programming}\label{APP:lp_comparison}

\begin{figure}

\begin{subfigure}{0.99\textwidth}
\centering
\begin{tabular}{|cc|c|c|c|c|c|}
\hline
\multicolumn{2}{|c|}{\multirow{2}{*}{bins / value}} & \multirow{2}{*}{Parameters} & \multirow{2}{*}{$\mathit{rev}$} & \multirow{2}{*}{$\mathit{rgt(mean)}$} & \multirow{2}{*}{\textit{IR viol.}} & Run-time \\
 & &  &  &  &  & (in hours) \\ \hline
\multicolumn{2}{|c|}{\multirow{2}{*}{D = 2}} & \multirow{2}{*}{96} & 1.495 & 0.187 & 0.374 & \multirow{2}{*}{8e-6} \\
 &&  & 0 & 0 & 0 &  \\ \hline
\multicolumn{2}{|c|}{\multirow{2}{*}{D = 3}} & \multirow{2}{*}{486} & 0.994 & 0.046 & 0.090 & \multirow{2}{*}{2e-5} \\
 &&  & 0 & 0.020 & 0 &  \\ \hline
\multicolumn{2}{|c|}{\multirow{2}{*}{D = 4}} & \multirow{2}{*}{1536} & 1.073 & 0.024 & 0.040 & \multirow{2}{*}{9e-5} \\
 &&  & 0.762 & 0.005 & 0 &  \\ \hline
\multicolumn{2}{|c|}{\multirow{2}{*}{D = 5}} & \multirow{2}{*}{3750} & 0.978 & 0.013 & 0.022 & \multirow{2}{*}{3e-4} \\
 &&  & 0.706 & 0.002 & 0 &  \\ \hline
 \multicolumn{2}{|c|}{\multirow{2}{*}{D = 6}} & \multirow{2}{*}{7776} & 0.987 & 0.008 & 0.012 & \multirow{2}{*}{3e-4} \\
 &&  & 0.799 & 0.002 & 0 &  \\ \hline
 \multicolumn{2}{|c|}{\multirow{2}{*}{D = 7}} & \multirow{2}{*}{14406} & 0.967 & 0.006 & 0.009 & \multirow{2}{*}{0.003} \\
 &&  & 0.816 & 0.002 & 0 &  \\ \hline
\multicolumn{2}{|c|}{\multirow{2}{*}{D = 8}} & \multirow{2}{*}{24576} & 0.953 & 0.004 & 0.006 & \multirow{2}{*}{0.007} \\
 &&  & 0.825 & 0.002 & 0 &  \\ \hline
\multicolumn{2}{|c|}{\multirow{2}{*}{D = 9}} & \multirow{2}{*}{39366} & 0.941 & 0.003 & 0.005 & \multirow{2}{*}{ 0.016} \\
 &&  & 0.827 & 0.001 & 0 &  \\ \hline
\multicolumn{2}{|c|}{\multirow{2}{*}{D = 10}} & \multirow{2}{*}{60000} & 0.936 & 0.003 & 0.004 & \multirow{2}{*}{0.45} \\
 &&  & 0.839 & 0.001 & 0 &  \\ \hline
 \multicolumn{2}{|c|}{\multirow{2}{*}{D = 11}} & \multirow{2}{*}{87846} & 0.927 & 0.002 & 0.003 & \multirow{2}{*}{61.9} \\
 &&  & 0.837 & 0.001 & 0 &  \\ \hline
\multicolumn{2}{|c|}{\multirow{2}{*}{D = 12}} & \multirow{2}{*}{124416} & \multirow{2}{*}{$-$} & \multirow{2}{*}{$-$} & \multirow{2}{*}{$-$} & \multirow{2}{*}{\textgreater~216} \\
 &&  &  &  &  &  \\ \hline
\multicolumn{7}{c}{\scriptsize(a)} \\ [-3pt] 
\multicolumn{7}{c}{} \\ \hline
\multicolumn{1}{|c|}{\multirow{2}{*}{$R$}} & \multirow{2}{*}{$K$} & \multirow{2}{*}{Parameters} & \multirow{2}{*}{$\mathit{rev}$} & \multirow{2}{*}{$\mathit{rgt(mean)}$} & \multirow{2}{*}{\textit{IR viol.}} & Training time \\ 
\multicolumn{1}{|c|}{}& & & & & & (in hours) \\
\hline
\multicolumn{1}{|c|}{1} & 100 & 2111 & 0.843 & \textless{}0.0005 & 0 & 3.4 \\ 
\multicolumn{1}{|c|}{1} & 50 & 1061 & 0.844 & \textless{}0.0005 & 0 & 3.5 \\ 
\multicolumn{1}{|c|}{1} & 200 & 4211 & 0.841 & \textless{}0.0005 & 0 & 3.5 \\ 
\multicolumn{1}{|c|}{2} & 100 & 22311 & 0.878 & \textless{}0.0005 & 0 & 3.7 \\ 
\multicolumn{1}{|c|}{2} & 200 & 84611 & 0.874 & \textless{}0.0005 & 0 & 3.7 \\ 
\multicolumn{1}{|c|}{1} & 500 & 10511 & 0.841 & \textless{}0.0005 & 0 & 3.8 \\ 
\multicolumn{1}{|c|}{2} & 50 & 6161 & 0.874 & \textless{}0.0005 & 0 & 3.8 \\ 
\multicolumn{1}{|c|}{2} & 500 & 511511 & 0.87 & \textless{}0.0005 & 0 & 3.8 \\ 
\multicolumn{1}{|c|}{4} & 200 & 245411 & 0.88 & \textless{}0.0005 & 0 & 4.2 \\ 
\multicolumn{1}{|c|}{4} & 50 & 16361 & 0.882 & \textless{}0.0005 & 0 & 4.4 \\ 
\multicolumn{1}{|c|}{4} & 100 & 62711 & 0.884 & \textless{}0.0005 & 0 & 4.4 \\ 
\multicolumn{1}{|c|}{4} & 500 & 1513511 & 0.887 & \textless{}0.0005 & 0 & 5.1 \\ 
\multicolumn{1}{|c|}{8} & 200 & 567011 & 0.896 & \textless{}0.0005 & 0 & 5.2 \\
\multicolumn{1}{|c|}{8} & 50 & 36761 & 0.885 & \textless{}0.0005 & 0 & 5.6 \\
\multicolumn{1}{|c|}{8} & 100 & 143511 & 0.877 & \textless{}0.001 & 0 & 5.6 \\\hline
\multicolumn{7}{c}{\scriptsize(b)} \\ [-5pt]
\end{tabular}
\end{subfigure}
\caption{Test revenue, test regret,  test IR violation, and running-time for the setting of Section~\ref{sec:lpdcp},
with two additive bidders and two items, with bidder item values sampled independently $U[0,1]$.
(a)  LP-based method, with varying levels of  discretization  ($D$), first row for {\tt -nearest} and second for {\tt -down} rounding;and (b)  RegretNet, with varying number of hidden layers ($R$) and hidden units ($K$).
\label{fig:2x3_comp}}
\end{figure}

\ssredit{ In Figure~\ref{fig:2x3_comp}, we report additional details on
the performance of the LP-based approach as we vary the discretization in the LP
and the number of parameters in RegretNet (varying the number of hidden layers and hidden units). For LP, the number of parameters is given by the number of output variables used to define the objective and the constraints. For RegretNet, the number of parameters are computed by counting the number of learnable weights in the allocation and payment network. The results are reported for the  setting  in Section~\ref{sec:lpdcp},
with two additive bidders and two items, with bidder item values sampled independently $U[0,1]$.  For the {\tt -nearest} rounding strategy, the LP-based approach yields a higher revenue than RegretNet, but this is misleading and would not be attainable in practice because it has higher regret and suffers from substantial IR violations. If we instead compute the allocation and payment  in the LP through {\tt -down} rounding, the IR violation is zero but the revenue is much lower. Increasing the amount of discretization in the LP leads to more accurate results with lower regret (and lower IR violations with {\tt -nearest}),  but the number of parameters and the run time also increase exponentially. For the setting with 12 bins per value, the LP did not terminate despite running for 9 days on an AWS EC2 instance with 48 cores and 96GB memory.} 
\ssredit{In contrast, RegretNet learns a mechanism in this setting with negligible regret and zero IR violations in} \dcpadd{at most six hours} \ssredit{for most configurations. In Figure~\ref{fig:2x3_comp}, we report the test revenue and test regret achieved by RegretNet for different hidden layer $R$ and hidden units $K$ configurations.} 

\section{Omitted Proofs}

\new{We present formal proofs for all theorems and lemmas that are stated in the body of the paper or in other appendices.} 
\new{We first introduce some notation.}
We denote the inner product between vectors $a,b \in \R^d$ as $\langle a, b \rangle \,=\, \sum_{i=1}^d a_i b_i$. \new{We denote the $\ell_1$ norm for a vector $x$ by $\Vert x \Vert_1$ and  the induced $\ell_{1}$ norm
for a matrix $A \in \R^{k\times t}$ 
 by $\|A\|_{1} \,=\, \max_{1\leq j \leq t} \sum_{i=1}^k A_{ij}$.}

\subsection{Proof of Lemma~\ref{lem:quantile-regret}}\label{app:quantile-regret}

Let $f_i(v;w) := \max_{v'_i \in V_i}\,u^w_i(v_i; (v'_i, v_{-i})) - u^w_i(v_i;(v_i, v_{-i}))$. Then we have $\mathit{rgt}_i(w) = \E_{v\sim F}[f_i(v;w)]$. Rewriting the expected value, we have
\begin{align*}
\mathit{rgt}_i(w) &= \int_{0}^\infty \PP(f_i(v;w)\geq x) dx \geq \int_{0}^{\mathit{rgt}^q_i(w)} \PP(f_i(v;w)\geq x) dx \geq q\cdot\mathit{rgt}^q_i(w),
\end{align*}
where the last inequality holds because for any $0<x<\mathit{rgt}^q_i(w)$, $\PP(f_i(v;w) \geq x) \geq \PP(f_i(v;w) \geq \mathit{rgt}^q_i(w)) = q$. \qed

\subsection{Proof of Theorem~\ref{THM:GBOUND}}\label{app:gbound}

We present the proof for auctions with general, randomized allocation rules. 
A randomized allocation rule  $g_i: V \rightarrow [0,1]^{2^M}$ maps valuation profiles to a vector of allocation probabilities for bidder $i$, where $g_{i,S}(v) \in [0,1]$ denotes the probability that the allocation rule assigns subset of items $S \subseteq M$ to bidder $i$, and 
$\sum_{S \subseteq M}g_{i,S}(v) \leq 1$. This encompasses both the allocation rules for the combinatorial setting, and the  allocation rules for the additive and unit-demand settings, which only output allocation probabilities for individual items.  The payment function  $p: V \rightarrow R^n$ maps valuation profiles to 
a payment for each bidder $p_i(v) \in \R$. 
For ease of exposition, we omit the superscripts ``$w$''. Recall that $\M$ is a class of auctions consisting of allocation and payment rules $(g,p)$. As noted in the theorem statement, we will assume w.l.o.g.~that  for each bidder $i$, $v_i(S) \leq 1,\, \forall S \subseteq M$.

\subsubsection{\new{Definitions}}
Let  $\mathcal{U}_i$ be the class of utility functions for bidder $i$ defined on auctions in $\M$, i.e.,
\begin{align*}
\mathcal{U}_i \,=\, \big\{
&u_i: V_i \times V \rightarrow \R\,\big|\,u_i(v_i, b) \,=\, v_i(g(b)) \,-\, p_i(b) \text{ for some }(g,p) \in \M
\big\}.
\end{align*}
and let $\mathcal U$ be the class of profile of utility functions defined on $\M$, i.e., the class of tuples $(u_1,\ldots,u_n)$
where each $u_i: V_i\times V \rightarrow \R$
and 
$u_i(v_i, b) \,=\, v_i(g(b)) \,-\, p_i(b), 
\forall i \in N$
for some $(g,p) \in \M$. 

We will sometimes find it useful to represent the utility function as an inner product, i.e., treating $v_i$ as a real-valued vector of length $2^M$, we may write ${u}_i(v_i, b) =  \langle v_i, {g}_{i}(b)\rangle - {p}_i(b)$.

Let $\mathrm{rgt}\circ \mathcal{U}_i$ be the class of all regret functions for bidder $i$ 
defined  on utility functions in $\mathcal{U}_i$, i.e.,
\begin{align*}
\mathrm{rgt}\,\circ\, \mathcal{U}_i =
\Big\{
 &f_i: V \rightarrow \R \,\Big|\, f_i(v) \,=\, 
 \max_{v'_i} u_i(v_i, (v'_i, v_{-i})) \,-\, u_i(v_i, v)\text{ for some  }
u_i \in \mathcal{U}_i
\Big\},
\end{align*}
and as before, let $\mathrm{rgt}\,\circ\, \mathcal{U}$ be defined as 
the class of profiles of regret functions.

Define the $\ell_{\infty, 1}$ distance between two utility functions $u$ and $u'$ as 
\begin{align*}
\max_{v, v'} \sum_i \vert u_i(v_i,(v'_i,v_{-i})) - u_i(v_i,(v'_i,v_{-i}))\vert
\end{align*}
and \new{let} $\mathcal{N}_\infty(\mathcal{U}, \epsilon)$ \new{denote} the minimum number of balls of radius $\epsilon$ to cover $\mathcal{U}$ under this distance. Similarly, define the distance between $u_i$ and $u'_i$ as $\max_{v, v'_i}\vert u_i(v_i,(v'_i, v_{-i})) - u'_i(v_i,(v'_i, v_{-i}))\vert$,  and let $\mathcal{N}_\infty(\mathcal{U}_i, \epsilon)$ denote the minimum number of balls of radius $\epsilon$ to cover $\mathcal{U}_i$ under this distance. Similarly, we define covering numbers $\mathcal{N}_\infty(\mathrm{rgt}\,\circ\,\mathcal{U}_i, \epsilon)$ and $\mathcal{N}_\infty(\mathrm{rgt}\,\circ\,\mathcal{U}, \epsilon)$ for the function classes $\mathrm{rgt}\,\circ\,\mathcal{U}_i$ and $\mathrm{rgt}\,\circ\,\mathcal{U}$ respectively.

Moreover, we denote the class of allocation functions as $\mathcal{G}$ and for
each bidder $i$,
$\mathcal{G}_i \,=\, \{g_i: V\rightarrow 2^M\,|\,g \in  \mathcal{G}\}$. 
Similarly, we denote the class of payment functions by $\mathcal{P}$ and 
$\mathcal{P}_i
\,=\,
 \{p_i: V\rightarrow \R\,|\,p \in  \mathcal{P}\}$. We denote the covering number of $\mathcal{P}$ as $\mathcal{N}_\infty(\mathcal{P}, \epsilon)$  under the $\ell_{\infty,1}$ distance and the covering number for $\mathcal{P}_i$ using $\mathcal{N}_\infty(\mathcal{P}_i, \epsilon)$ under the \new{$\ell_{\infty, 1}$} distance.
 
 \subsubsection{\new{Auxiliary Lemma}}

 \new{We will use  a lemma from~\citet{Shalev-Shwartz14}. 
 Let  $\mathcal{F}$ denote a class of bounded functions $f: {Z} \rightarrow [-c,c]$ defined on an input space $Z$, for some $c>0$. Let $D$ be a distribution over $Z$ and $\mathcal{S} \,=\, \{z_1,\ldots,z_L\}$ be a sample drawn i.i.d.\ from $D$. We are interested in the gap between the  expected value of a  function $f$  and the average value of the function on sample $S$, and would like to bound this gap uniformly for all functions in $\mathcal{F}$. For this, we measure  the capacity of the function class $\mathcal{F}$
using the empirical Rademacher complexity on  sample $S$, defined below:}
\[
 \hat{\mathcal{R}}_{L}(\mathcal{F}):= \frac{1}{L}\E_{\sigma}\left[\sup_{f\in\mathcal{F}}\sum_{z_i\in S}\sigma_i f(z_i)\right],
 \]
 where $\sigma \in \{-1,1\}^L$ and each $\sigma_i$
 is drawn i.i.d  from a uniform distribution on $\{-1,1\}$. \new{We then have:}
\begin{lem}
[\cite{Shalev-Shwartz14}]
\label{lem:rademacher+gbound}
 Let $\mathcal{S} \,=\, \{z_1,\ldots,z_L\}$ be a sample drawn i.i.d. from some distribution $D$ over $Z$. Then with probability of at least $1-\delta$ over draw of $\mathcal{S}$ from $D$, for all $f\in\mathcal{F}$, 
\begin{align*}
\E_{z\sim D}[f(z)] \leq \frac{1}{L}\sum_{\ell=1}^L f(z_\ell) \,+\, 2\hat{\mathcal{R}}_{L}(\mathcal{F}) \,+\, 4c\sqrt{\frac{2\log(4/\delta)}{L}},
\end{align*}
\end{lem}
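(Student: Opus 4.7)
The plan is to follow the standard three-step template: (i) concentration of the uniform deviation, (ii) symmetrization with a ghost sample to relate the expected uniform deviation to the expected Rademacher complexity, and (iii) concentration of the empirical Rademacher complexity around its expectation. Introduce the supremum deviation
\begin{equation*}
\Phi(\mathcal{S}) \,=\, \sup_{f\in\mathcal{F}}\Big(\E_{z\sim D}[f(z)] \,-\, \tfrac{1}{L}\sum_{i=1}^L f(z_i)\Big),
\end{equation*}
so that the target inequality is equivalent to bounding $\Phi(\mathcal{S})$ by $2\hat{\mathcal{R}}_L(\mathcal{F}) + O(c\sqrt{\log(1/\delta)/L})$ with high probability over $\mathcal{S}$.

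First I would apply McDiarmid's bounded-difference inequality to $\Phi$. Because each $f\in\mathcal{F}$ takes values in $[-c,c]$, replacing any single $z_i$ by an independent copy $z_i'$ changes $\Phi(\mathcal{S})$ by at most $2c/L$, so McDiarmid gives $\Phi(\mathcal{S}) \leq \E_{\mathcal{S}}[\Phi(\mathcal{S})] + c\sqrt{2\log(2/\delta)/L}$ with probability at least $1-\delta/2$. Next I would bound $\E_{\mathcal{S}}[\Phi(\mathcal{S})]$ by symmetrization: write $\E_{z}[f(z)]$ as $\E_{\mathcal{S}'}[\tfrac{1}{L}\sum_i f(z_i')]$ for an independent ghost sample $\mathcal{S}'$, pull the expectation outside the supremum (using convexity of the sup), and then introduce i.i.d.\ Rademacher variables $\sigma_i\in\{-1,+1\}$: since $f(z_i')-f(z_i)$ is symmetrically distributed in $z_i,z_i'$, it has the same distribution as $\sigma_i(f(z_i')-f(z_i))$. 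Splitting the supremum yields
\begin{equation*}
\E_{\mathcal{S}}[\Phi(\mathcal{S})] \,\leq\, 2\,\E_{\mathcal{S},\sigma}\Big[\sup_{f\in\mathcal{F}}\tfrac{1}{L}\sum_{i=1}^L \sigma_i f(z_i)\Big] \,=\, 2\,\E_{\mathcal{S}}[\hat{\mathcal{R}}_L(\mathcal{F})].
\end{equation*}

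Finally I would apply McDiarmid a second time, now to the empirical Rademacher complexity $\hat{\mathcal{R}}_L(\mathcal{F})$, viewed as a function of $\mathcal{S}$. Swapping one $z_i$ alters the inner sum $\sigma_i f(z_i)$ by at most $2c$, so $\hat{\mathcal{R}}_L(\mathcal{F})$ has bounded differences $2c/L$, and with probability at least $1-\delta/2$, $\E_{\mathcal{S}}[\hat{\mathcal{R}}_L(\mathcal{F})] \leq \hat{\mathcal{R}}_L(\mathcal{F}) + c\sqrt{2\log(2/\delta)/L}$. Combining the two high-probability events by a union bound gives the claimed inequality, up to the absorption of absolute constants into the $4c\sqrt{2\log(4/\delta)/L}$ term.

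The main subtlety is the symmetrization step: the supremum-of-expectations versus expectation-of-supremum manipulation, together with the injection of Rademacher signs, is what links the data-dependent quantity $\hat{\mathcal{R}}_L(\mathcal{F})$ to the population quantity $\E[\Phi]$; the two McDiarmid applications are routine but require care to check the bounded-difference constant (which is $2c/L$ in both cases because $f$ is bounded in $[-c,c]$). The rest is bookkeeping of constants in the union bound.
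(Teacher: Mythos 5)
Your proposal is correct: the standard three-step argument (McDiarmid on the uniform deviation, symmetrization with a ghost sample and Rademacher signs, then McDiarmid again on the empirical Rademacher complexity) is exactly the textbook proof of this bound, and your bounded-difference constants of $2c/L$ are right; in fact your accounting yields an additive term of $3c\sqrt{2\log(2/\delta)/L}$, which is dominated by the stated $4c\sqrt{2\log(4/\delta)/L}$. Note that the paper itself does not prove this lemma at all --- it imports it verbatim from Shalev-Shwartz and Ben-David (2014) --- so there is no in-paper proof to compare against; your argument is precisely the one given in that reference.
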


\subsubsection{\new{Generalization Bound for Revenue}}
\label{sec:revenue-gbound}
\allowdisplaybreaks
\new{We first prove the generalization bound for revenue.  } %
\new{
For this, we define the following {auxiliary} function class, where each
$f: V \rightarrow \R_{\geq 0}$measures the total payments from some mechanism in $\M$:
$$
\mathrm{rev}\circ \M \,=\, \big\{f: V \rightarrow \R_{\geq 0}\,\big|\,
f(v) \,=\, \textstyle\sum_{i=1}^n p_i(v), \text{ for some }(g,p) \in \M
\big\}.
$$
Note each function $f$ in this class corresponds to a mechanism $(g,p)$  in $\M$, and the expected value $\E_{v \sim D}[f(v)]$ gives the expected revenue from that mechanism. The proof then follows by an application of the uniform convergence bound in Lemma \ref{lem:rademacher+gbound} to the above function class, and by further bounding the Rademacher complexity term in this bound by the covering number of the auction class $\M$.}

\new{
Applying Lemma \ref{lem:rademacher+gbound} to  the auxiliary function class $\mathrm{rev}\circ \M$}, we get with probability of at least $1-\delta$ over draw of $L$ valuation profiles $S$ from $D$, for any $f \in \mathrm{rev}\circ \M$, \ZFadd{there exists a distribution-independent constant $C > 0$ such that,}
\begin{align}
\E_{v\sim F}\Big[-\sum_{i\in N}p_i(v)\Big] 
\leq  &~-\frac{1}{L}\sum_{\ell=1}^L \sum_{i=1}^n p_i(v^{(\ell)})\allowdisplaybreaks[0]
\nonumber
\\
&~+ 2\hat{R}_L(\mathrm{rev}\circ \M)
~+ Cn\sqrt{\frac{\log(1/\delta)}{L}}.
\label{eq:gen-bound-revenue-class}
\end{align}

\new{All that remains is to bound the above empirical Rademacher \new{complexity}  $\hat{R}_L(\mathrm{rev}\circ \M)$ in terms of the covering number of the payment class $\mathcal{P}$ and in turn in terms of the covering number of the auction class $\M$.} %
Since we assume that the auctions in $\M$ satisfy individual rationality
and $v(S) \leq 1, \forall S  \subseteq M$, we have for any $v$, $p_i(v)\leq 1$.

\new{By the definition of the covering number for the payment class, there exists a cover $\hat{\mathcal{P}}$ for $\mathcal{P}$ of size $|\hat{\mathcal{P}}|\leq \mathcal{N}_\infty(\mathcal{P}, \epsilon)$} such that for any $p\in\mathcal{P}$, there is a ${f_p}\in\hat{\mathcal{P}}$ with $\max_v\sum_i\vert p_i(v)-{f_p}_i(v)\vert\leq \epsilon$. We thus have:  
\begin{align}
\hat{\mathcal{R}}_L(\mathrm{rev}\circ \M	)
=&~\frac{1}{L}\E_\sigma\left[\sup_p\sum_{\ell=1}^{L}\sigma_\ell\cdot\sum_i p_i(v^{(\ell)})\right]\allowdisplaybreaks[0]
\nonumber
\\
= &~\frac{1}{L}\E_\sigma\left[\sup_p\sum_{\ell=1}^{L}\sigma_\ell\cdot\sum_i {f_p}_i(v^{(\ell)})\right] + 
\frac{1}{L}\E_\sigma\left[\sup_p\sum_{\ell=1}^{L}\sigma_\ell\cdot\sum_i p_i(v^{(\ell)}) - {f_p}_i(v^{(\ell)})\right]\allowdisplaybreaks[0]
\nonumber\\
\leq &~\frac{1}{L}\E_\sigma\left[\sup_{\hat{p}\in\hat{\mathcal{P}}}\sum_{\ell=1}^{L}\sigma_\ell\cdot\sum_i \hat{p}_i(v^{(\ell)})\right] + \frac{1}{L}\E_\sigma \Vert\sigma\Vert_1 \epsilon\allowdisplaybreaks[0]
\nonumber\\
\leq &~\sqrt{\sum_{\ell} (\sum_i \hat{p}_i(v^{\ell}))^2} \sqrt{\frac{2\log(\mathcal{N}_\infty(\mathcal{P}, \epsilon))}{L}} + \epsilon
\nonumber
\\ 
\leq &~2n\sqrt{\frac{2\log(\mathcal{N}_\infty(\mathcal{P}, \epsilon))}{L}} + \epsilon,
\label{eq:Radamacher-covering-bound}
\end{align}
\new{where the second-last inequality follows from Massart's lemma, and}
the last inequality \new{holds} because
$$\sqrt{\sum_{\ell} \left(\sum_i \hat{p}_i(v^{\ell})\right)^2}\leq \sqrt{\sum_{\ell} \left(\sum_i p_i(v^{\ell}) + n\epsilon\right)^2} \leq 2n\sqrt{L}.$$

\new{We further observe that $\mathcal{N}_\infty(\mathcal{P}, \epsilon) \leq \mathcal{N}_\infty(\mathcal{M}, \epsilon)$. By the definition of the covering number for the auction class $\M$, there exists a cover $\hat{\M}$ for $\mathcal{M}$ of size $|\hat{\M}| \leq  \mathcal{N}_\infty(\mathcal{M}, \epsilon)$}
such that for
 any $(g, p)\in \mathcal{M}$, there is a $(\hat{g}, \hat{p}) \in \hat{\M}$ \new{such that} for all $v$, %
\begin{align*}
\sum_{i,j} \vert g_{ij}(v) - \hat{g}_{ij}(v)\vert + \sum_i\vert p_i(v) - \hat{p}_i(v)\vert \leq \epsilon.
\end{align*} 
\new{This also implies that  $\sum_i\vert p_i(v) - \hat{p}_i(v)\vert\leq \epsilon$, and shows the existence of a cover for $\mathcal{P}$ of size at most  $\mathcal{N}_\infty(\mathcal{M}, \epsilon)$. }

\new{
Substituting the bound on the Radamacher complexity term in \eqref{eq:Radamacher-covering-bound} in \eqref{eq:gen-bound-revenue-class} and using the fact that $\mathcal{N}_\infty(\mathcal{P}, \epsilon) \leq \mathcal{N}_\infty(\mathcal{M}, \epsilon)$, we get:}
\begin{align*}
\E_{v\sim F}\Big[-\sum_{i\in N}p_i(v)\Big] 
\leq  &~-\frac{1}{L}\sum_{\ell=1}^L \sum_{i=1}^n p_i(v^{(\ell)})\allowdisplaybreaks[0]+ 2\cdot\inf_{\epsilon>0}\Big\{\epsilon +\,2n\sqrt{\frac{2\log(\cN_\infty(\M, \,\epsilon))}{L}}\Big\} \allowdisplaybreaks[0] + Cn\sqrt{\frac{\log(1/\delta)}{L}}\,,
\end{align*}
\new{which completes the proof.}

\subsubsection{\new{Generalization Bound for Regret}}\label{sec:regret-gbound}

We move to the second part, namely a generalization bound for regret, which
is the more challenging  part of the proof. 
We first define  the class of sum regret functions:
\[
\overline{\mathrm{rgt}}\,\circ\,\mathcal{U} = 
\left\{f: V \rightarrow \R
\,\bigg|\,
f(v) \,=\, \sum_{i=1}^n r_i(v)
\text{ for some }
(r_1, \ldots, r_n) \in 
\mathrm{rgt} \circ \mathcal{U}
\right\}.
\]
The proof then proceeds in three steps:

(1) bounding the covering number for each regret class $\mathrm{rgt}\circ\mathcal{U}_i$ in terms of the covering number for individual utility classes $\mathcal{U}_i$

(2) bounding the covering number for the combined utility class $\mathcal{U}$ in terms of the covering number for $\M$

(3) bounding the covering number for the sum regret class $\overline{\mathrm{rgt}}\,\circ\,\mathcal{U}$ in terms of the covering number for the (combined) utility class $\M$.

An application of Lemma \ref{lem:rademacher+gbound} then completes the proof. We prove each of the above steps below.

\begin{step}\label{step:one}
$\mathcal{N}_\infty(\mathrm{rgt}\circ\mathcal{U}_i, \epsilon) \leq \mathcal{N}_\infty(\mathcal{U}_i, \epsilon/2)$.
\end{step}

\begin{proof}
By the definition of covering number $\mathcal{N}_\infty(\mathcal{U}_i, \epsilon)$, 
there exists  a cover $\hat{\U}_i$ with size at most $\mathcal{N}_\infty(\mathcal{U}_i, \epsilon/2)$ such that
for any $u_i\in \mathcal{U}_i$, there is a $\hat{u}_i \in \hat{\mathcal{U}}_i$ with
\begin{align*}
\sup_{v, v'_i} \vert u_i(v_i, (v'_i, v_{-i})) - \hat{u}_i(v_i, (v'_i, v_{-i}))\vert\leq \epsilon/2.
\end{align*}

For any $u_i \in \U_i$, taking $\hat{u}_i\in\hat{\mathcal{U}}_i$ satisfying the above condition, then for any $v$,
\begin{align*}
&\bigg\vert\max_{v'_i\in V}\big(u_i(v_i, (v'_i, v_{-i})) - u_i(v_i, (v_i, v_{-i}))\big) - \max_{\bar{v}_i\in V}\big(\hat{u}_i(v_i, (\bar{v}_i, v_{-i})) - \hat{u}_i(v_i, (v_i, v_{-i}))\big)\bigg\vert\\
\leq~& \bigg\vert\max_{v'_i}u_i(v_i, (v'_i, v_{-i})) - \max_{\bar{v}_i}\hat{u}_i(v_i, (\bar{v}_i, v_{-i})) + \hat{u}_i(v_i, (v_i, v_{-i})) - u_i(v_i, (v_i, v_{-i}))\bigg\vert\\
\leq~&\left\vert\max_{v'_i}u_i(v_i, (v'_i, v_{-i})) - \max_{\bar{v}_i}\hat{u}_i(v_i, (\bar{v}_i, v_{-i}))\right\vert + \bigg\vert \hat{u}_i(v_i, (v_i, v_{-i})) - u_i(v_i, (v_i, v_{-i}))\bigg\vert\\
\leq~&\left\vert\max_{v'_i}u_i(v_i, (v'_i, v_{-i})) - \max_{\bar{v}_i}\hat{u}_i(v_i, (\bar{v}_i, v_{-i}))\right\vert + \epsilon/2
\end{align*}

Let $v^*_i \in \arg\max_{v'_i} u_i(v_i, (v'_i, v_{-i}))$ and $\hat{v}^*_i \in \arg\max_{\bar{v}_i} \hat{u}_i(v_i, (\bar{v}_i, v_{-i}))$, then
\begin{equation*}%
\begin{aligned}
\max_{v'_i}u_i(v_i, (v'_i, v_{-i})) &
= u_i(v^*_i, v_{-i}) \leq \hat{u}_i(v^*_i, v_{-i}) + \epsilon/2 \leq \hat{u}_i(\hat{v}^*_i, v_{-i}) + \epsilon/2 = \max_{\bar{v}_i}\hat{u}_i(v_i, (\bar{v}_i, v_{-i})) + \epsilon,\\
\max_{\bar{v}_i}\hat{u}_i(v_i, (\bar{v}_i, v_{-i})) 
&= \hat{u}_i(\hat{v}^*_i, v_{-i}) \leq u_i(\hat{v}^*_i, v_{-i}) + \epsilon/2 \leq  u_i(v^*_i, v_{-i}) + \epsilon/2 = \max_{v'_i}u_i(v_i, (v'_i, v_{-i}))  + \epsilon/2\,.
\end{aligned}
\end{equation*}

Thus, for all $u_i \in \mathcal{U}_i$, there exists $\hat{u}_i \in \hat{\mathcal{U}}_i$ such that for any valuation profile $v$,
\begin{align*}
&\bigg\vert\max_{v'_i}\big(u_i(v_i, (v'_i, v_{-i})) - u_i(v_i, (v_i, v_{-i}))\big) - \max_{\bar{v}_i}\big(\hat{u}_i(v_i, (\bar{v}_i, v_{-i})) - \hat{u}_i(v_i, (v_i, v_{-i}))\big)\bigg\vert \leq \epsilon,
\end{align*}
which implies $\mathcal{N}_\infty(\mathrm{rgt}\circ \mathcal{U}_i, \epsilon) \leq \mathcal{N}_\infty(\mathcal{U}_i, \epsilon/2)$.

This completes the proof \new{of} Step~\ref{step:one}.
\end{proof}

\begin{step}\label{step:two}
\new{For all  $i\in N$,} $\mathcal{N}_\infty(\mathcal{U}, \epsilon)\leq \mathcal{N}_\infty(\mathcal{M}, \epsilon/n)$.
\end{step}

\begin{proof}
Recall that the utility function of bidder $i$ is $u_i(v_i, (v'_i, v_{-i})) = \langle v_i, g_{i}(v'_i, v_{-i})\rangle - p_i(v'_i, v_{-i})$.
There exists a set $\hat{\mathcal{M}}$ with $|\hat{\mathcal{M}}|\leq\mathcal{N}_\infty(\mathcal{M}, \epsilon/n)$ 
such that, %
there exists 
 $(\hat{g}, \hat{p}) \in \hat{M}$ such that
\begin{align*}
\sup_{v\in  V}\sum_{i,j} \vert g_{ij}(v) - \hat{g}_{ij}(v)\vert + \Vert p(v) - \hat{p}(v)\Vert_1 \leq \epsilon/n.
\end{align*}
We denote $\hat{u}_i(v_i, (v'_i, v_{-i})) =  \langle v_i, \hat{g}_{i}(v'_i, v_{-i})\rangle - \hat{p}_i(v'_i, v_{-i})$, where we treat $v_i$ as a real-valued vector  of length $2^M$. %

For all $v \in V, v'_i \in V_i$,
\begin{eqnarray*}
\lefteqn{\left\vert u_i(v_i, (v'_i, v_{-i})) - \hat{u}_i(v_i, (v'_i, v_{-i}))\right\vert}\\
& \leq& \left\vert \langle v_i, g_{i}(v'_i, v_{-i})\rangle -  \langle v_i, \hat{g}_{i}(v'_i, v_{-i})\rangle \right\vert + \left\vert p_i(v'_i, v_{-i}) -\hat{p}_i(v'_i, v_{-i}) \right\vert\\
& \leq& \Vert v_i\Vert_\infty \cdot \Vert g_{i}(v'_i, v_{-i}) -  \hat{g}_{i}(v'_i, v_{-i})\Vert_1 + \left\vert p_i(v'_i, v_{-i}) -\hat{p}_i(v'_i, v_{-i}) \right\vert\\
& \leq& \sum_j \vert g_{ij}(v'_i, v_{-i}) - \hat{g}_{ij}(v'_i, v_{-i})\vert + \left\vert p_i(v'_i, v_{-i}) -\hat{p}_i(v'_i, v_{-i}) \right\vert\\
&\leq& \ZFadd{\epsilon / n}
\end{eqnarray*}

Therefore, for any $u\in \mathcal{U}$, take $\hat{u} =(\hat{g}, \hat{p})\in \hat{\mathcal{M}}$, for all $v, v'$,
\begin{align*}
&\sum_i \vert u_i(v_i, (v'_i, v_{-i})) - \hat{u}_i(v_i, (v'_i, v_{-i}))\vert \\
&~\leq  \sum_{ij} \vert g_{ij}(v'_i, v_{-i}) - \hat{g}_{ij}(v'_i, v_{-i})\vert + \sum_i\left\vert p_i(v'_i, v_{-i}) -\hat{p}_i(v'_i, v_{-i}) \right\vert\\
&~\leq \epsilon
\end{align*}


This completes the proof \new{of} Step~\ref{step:two}.
\end{proof}

\begin{step}\label{step:three}
$\mathcal{N}_\infty(\overline{\mathrm{rgt}}\circ \mathcal{U}, \epsilon)\leq \mathcal{N}_\infty(\M, \frac{\epsilon}{2n})$.
\end{step}

\begin{proof}
By definition of $\mathcal{N}_\infty(\mathcal{U}, \epsilon)$, 
there exists $\hat{\U}$ with size at most $\mathcal{N}_\infty(\mathcal{U}, \epsilon)$,
such that,  for any $u\in \mathcal{U}$, there exists $\hat{u}$ \new{such that} for all $v, v' \in V$,
\begin{align*}
\sum_{i} \vert u_i(v_i, (v'_i, v_{-i})) - \hat{u}_i(v_i, (v'_i, v_{-i}))\vert \leq \epsilon.
\end{align*}
Therefore for all $v \in V$, $\vert\sum_i u_i(v_i, (v'_i, v_{-i})) - \sum_i \hat{u}_i(v_i, (v'_i, v_{-i}))\vert \leq \epsilon$, from which it follows that $\mathcal{N}_\infty(\overline{\mathrm{rgt}}\circ \mathcal{U}, \epsilon)\leq \mathcal{N}_\infty(\mathrm{rgt}\circ \mathcal{U}, \epsilon)$. Following Step 1, it is easy to show $\mathcal{N}_\infty(\mathrm{rgt}\circ\mathcal{U}, \epsilon)\leq\mathcal{N}_\infty(\mathcal{U}, \epsilon/2)$. 

\new{Together} with Step~\ref{step:two} \new{this} completes the proof of Step~\ref{step:three}.
\end{proof}

Based on the same arguments \new{as} in Section~\ref{sec:revenue-gbound}, \new{we can thus bound} the empirical Rademacher \new{complexity} \new{as:}
\begin{align*}
\hat{\mathcal{R}}_L(\overline{\mathrm{rgt}}\,\circ\,\mathcal{U})
&\leq 
\inf_{\epsilon>0} \left(\epsilon + 
2n\sqrt{\frac{2\log\mathcal{N}_\infty(\overline{\mathrm{rgt}}\circ\mathcal{U}, \epsilon)}{L}}\right)\\
&\leq \inf_{\epsilon>0} \left(\epsilon + 2n\sqrt{\frac{2\log\mathcal{N}_\infty(\mathcal{M}, \frac{\epsilon}{2n})}{L}}\right).
\end{align*}

Applying Lemma~\ref{lem:rademacher+gbound}, completes the proof \new{of the} generalization bound \new{for} regret. \qed

\if 0
\subsection{Proof of Theorem~\ref{thm:rochetnet-u-convex}}\label{app:rochetnet-u-convex}
The convexity of  $u^{\alpha, \beta}$ follows from the fact it is a ``max'' of linear functions.
We now show that $u^{\alpha, \beta}$ is monotonically non-decreasing.
Let $h_j(v)=\w_j \cdot v + \beta_j$. Since $\w_j$ is non-negative in all entries,
for any $v_i \leq v'_i, \forall i \in M$, we have $h_j(v) \,\leq\, h_j(v')$. 
Then 
$$u^{\alpha, \beta}(v) \,=\, \max_{j\in[J]} h_j(v)
\,=\, h_{j_*}(v) \,\leq\, h_{j_*}(v') \,\leq\, 
\max_{j\in[J]} h_j(v') \,=\, u^{\alpha, \beta}(v'),$$
where $j_* \in \text{argmin}_{j \in [J]}\,h_j(v)$.
It remains to be shown that $u^{\alpha, \beta}$ is 1-Lipschitz. 
For any $v, v'\in \RB^m$,
\begin{align*}
|u^{\alpha, \beta}(v) - u^{\alpha, \beta}(v')| 
&=
|\max_{j\in[J]} h_j(v) \,-\, \max_{j\in [J]} h_j(v')| \allowdisplaybreaks[0]\\
&\leq\,
\max_{j\in[J]}|h_j(v') \,-\, h_j(v)|\allowdisplaybreaks[0]\\
&=  \max_{j\in[J]}|\w_j \cdot(v'-v)|\allowdisplaybreaks[0] \\
&\leq\, \max_{j\in[J]} \|\w_j\|_\infty\,|v'\,-\,v|_1\allowdisplaybreaks[0] \\
&\leq\, |v'_k \,-\, v_k|_1,
\end{align*}
where the last inequality holds because each component $\alpha_{jk} = \sigma(\alpha_{jk}) \leq 1$.
\qed

\fi

\subsection{Proof of Lemma~\ref{LEM:UNIT_DEMAND_DS}}
\label{APP:UNIT_DEMAND_DS}
First, given the property of the softmax function and the min operation, $\varphi^{DS}(s, s')$ ensures that the row sums and column sums for the resulting allocation matrix do not exceed 1.  In fact, for any doubly stochastic allocation $z$, there exists scores $s$ and $s'$, for which the min of normalized scores recovers $z$ (e.g.  $s_{ij} = s'_{ij} = log(z_{ij}) + c$ for any $c \in \R$). \qed
\subsection{Proof of Lemma~\ref{LEM:CA_DS}}\label{APP:CA_DS}
Similar to Lemma ~\ref{LEM:UNIT_DEMAND_DS}, $\varphi^{CF}(s, s^{(1)},\ldots,s^{(m)})$ trivially satisfies the combinatorial feasibility (constraints \eqref{eq:combinatorial-constraints-1}--\eqref{eq:combinatorial-constraints-2}). For any allocation $z$ that satisfies the combinatorial feasibility, the following scores
\begin{align*}
\forall j~=~ 1,\cdots, m, ~~~~ s_{i, S} & ~=~ s^{(j)}_{i, S} ~=~  \log(z_{i, S}) + c,
\end{align*}
makes $\varphi^{CF}(s, s^{(1)},\ldots,s^{(m)})$ recover $z$. \qed

\subsection{Proof of Theorem \ref{thm:cover_regretnet}}\label{APP:THM_COVER_REGRETNET}

\new{In Theorem~\ref{thm:cover_regretnet}, we only show the bounds on $\Delta_L$ %
for RegretNet with additive and unit-demand bidders. %
We restate this theorem so that it also bounds $\Delta_L$ for the general combinatorial valuations setting (with combinatorial feasible allocation).}
\new{Recall that the $\ell_1$ norm for a vector $x$ is denoted by $\Vert x \Vert_1$ and  the induced $\ell_{1}$ norm
for a matrix $A \in \R^{k\times t}$ 
 is denoted by $\|A\|_{1} \,=\, \max_{1\leq j \leq t} \sum_{i=1}^k A_{ij}$.}
 
\begin{thm}\label{thm:cover_regretnet_full}
For RegretNet with $R$ hidden layers, $K$ nodes per hidden layer, $d_g$ parameters in the allocation component, $d_p$ parameters in the payment component, and 
the vector of all model parameters
$\|w\|_{1} \leq W$,
the following are the bounds on the term $\Delta_{L}$ for different bidder valuation types:

(a) additive valuations:

$%
\Delta_{L} \leq O\big(\sqrt{R(d_g+d_p)
\log(LW\max\{K, mn\})
/
{L}
}
\big)$,\\[-12pt]

(b) unit-demand valuations:

$\displaystyle
\Delta_{L} \leq O\big(
\sqrt{R(d_g+d_p)
\log(LW\max\{K, mn\})
/
{L}
}\big)$,\\[-12pt]

(c) combinatorial valuations (with combinatorial feasible allocation):

$\displaystyle
\Delta_{L} \leq 
O\big(\sqrt{R(d_g+d_p)
\log(LW\max\{K, n\,2^m\})
/
{L}}
\big)$. 
\end{thm}

We first bound the covering number for a general feed-forward neural network and specialize it
to the three architectures we present in Section \ref{sec:regretnet} \new{and Appendix~\ref{sec:ca-architecture}}. %
\begin{lem}\label{lem:cover_fully_connect}
Let $\mathcal{F}_k$ be a class of feed-forward neural networks that maps an input vector $x \in \R^{d_0}$ to an output vector $y \in \R^{d_k}$, with each layer $\ell$ containing $T_\ell$ nodes and computing $z \mapsto \phi_\ell(w^\ell z)$, 
where each $w^\ell \in \mathbb{R}^{T_{\ell} \times T_{\ell-1}}$ 
and $\phi_\ell: \R^{T_\ell} \rightarrow [-B, +B]^{T_\ell}$. 
Further let, for each network in $\mathcal{F}_k$, let the parameter matrices $\Vert w^\ell\Vert_{1} \leq W$ and $\|\phi_\ell(s) - \phi_\ell(s')\|_1 \leq \Phi\|s-s'\|_1$ for any
$s, s' \in \R^{T_{\ell-1}}$.
\begin{align*}
\mathcal{N}_\infty(\mathcal{F}_k, \epsilon) \leq \left\lceil \frac{2Bd^2 W(2\Phi W)^k}{\epsilon}\right\rceil^d,
\end{align*}
where $T = \max_{\ell \in [k]}T_\ell$ and $d$ is the total number of parameters in a network. 
\end{lem}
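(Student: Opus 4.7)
The plan is a standard parameter-space discretization. Since $\|w^\ell\|_1 \le W$ forces every entry $|w^\ell_{ij}|\le W$, I would place a uniform $\ell_\infty$-grid of spacing $\eta$ on $[-W,W]^d$, producing at most $\lceil 2W/\eta\rceil^d$ representative networks. It then suffices to show that for a small enough $\eta=\eta(\epsilon)$, any two networks whose parameters agree within $\eta$ entrywise produce outputs that are $\epsilon$-close in the norm $\sup_x\|\cdot\|_1$, and then convert $\eta(\epsilon)$ into the stated cover count.

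The heart of the argument is a layerwise perturbation bound. Let $z^\ell,\tilde z^\ell$ be the activations of two networks at layer $\ell$ on a common input. Writing $z^\ell-\tilde z^\ell = \phi_\ell(w^\ell z^{\ell-1})-\phi_\ell(\tilde w^\ell\tilde z^{\ell-1})$, adding and subtracting $\phi_\ell(w^\ell\tilde z^{\ell-1})$ before invoking the $\Phi$-Lipschitz hypothesis on $\phi_\ell$, and using the operator-norm inequality $\|Ay\|_1\le\|A\|_1\,\|y\|_1$, I would obtain
\begin{equation*}
\|z^\ell-\tilde z^\ell\|_1 \;\le\; \Phi W\,\|z^{\ell-1}-\tilde z^{\ell-1}\|_1 \;+\; \Phi\,\|w^\ell-\tilde w^\ell\|_1\,\|\tilde z^{\ell-1}\|_1.
\end{equation*}
The image of $\phi_{\ell-1}$ lies in $[-B,B]^{T_{\ell-1}}$, giving $\|\tilde z^{\ell-1}\|_1\le TB$ (with the input to layer $1$ bounded as is standard in the auction application), while entrywise closeness of weights gives $\|w^\ell-\tilde w^\ell\|_1\le T\eta$. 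These combine into the linear recursion $\delta^\ell\le \Phi W\,\delta^{\ell-1}+\Phi T^2B\eta$ with $\delta^0=0$.

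Unrolling yields $\delta^k\le \Phi T^2B\eta\sum_{j=0}^{k-1}(\Phi W)^j\le \Phi T^2B\eta\cdot k(\Phi W)^{k-1}$, and the elementary inequality $k(\Phi W)^{k-1}\le (2\Phi W)^k/(2\Phi W)$ (via $k\le 2^{k-1}$) collapses this to $\delta^k\le T^2B\eta(2\Phi W)^k/(2W)$. Imposing $\delta^k\le\epsilon$ and using the coarse estimate $T^2\le d^2$ (a single dense layer already contributes up to $T^2$ of the $d$ parameters) lets me set $\eta=2W\epsilon/(d^2B(2\Phi W)^k)$, whence $\lceil 2W/\eta\rceil^d\le \lceil d^2B(2\Phi W)^k/\epsilon\rceil^d$, matching the stated form up to the factor of $2W$ inside the ceiling, which is absorbed by a slightly coarser discretization (for example, counting grid-cell centers rather than corners, or using $\lceil 2W/\eta\rceil+1$ points to ensure every parameter lies within $\eta$ of a grid point).

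I do not expect deep obstacles: the only care points are (i) keeping every norm on the $\ell_1$ side so the hypothesis on $\phi_\ell$ is used as given, (ii) treating the first layer uniformly with the others by appealing to boundedness of the input, and (iii) tracking $\Phi W$ factors through the unrolling so the final exponent is exactly $k$. The argument is otherwise the familiar ``product of operator norms'' covering estimate for multilayer networks, tailored to the $\ell_1$ geometry used throughout the paper's regret bound.
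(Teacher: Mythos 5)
Your proof follows essentially the same route as the paper's: discretize the $d$ parameters on a uniform grid over $[-W,W]$, establish the layerwise perturbation bound $\|z^\ell-\tilde z^\ell\|_1 \le \Phi W\|z^{\ell-1}-\tilde z^{\ell-1}\|_1 + \Phi\|w^\ell-\tilde w^\ell\|_1\,\|\tilde z^{\ell-1}\|_1$, and unroll the resulting recursion to choose the grid scale; the paper merely phrases the unrolling as an induction on $k$ and bakes the $1/d$ into the grid spacing up front instead of absorbing $T^2\le d^2$ at the end. The residual constant-factor discrepancy you note (a factor of $2W$ inside the ceiling, in your favor) is immaterial, so the argument is correct and matches the paper's.
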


\begin{proof}
We shall construct an $\ell_{1,\infty}$ cover for $\mathcal{F}_k$ by
 discretizing each of the $d$ parameters along $[-W, +W]$ at scale $\epsilon_0/d$, where we will choose  $\epsilon_0 > 0$ at the end of the proof. We will use $\hat{\mathcal{F}}_k$ to denote the subset of neural networks in $\mathcal{F}_k$ 
whose parameters are in the range $\{-(\lceil Wd/\epsilon_0 \rceil-1)\,\epsilon_0/d,  \ldots, -\epsilon_0/d, 0, \epsilon_0/d,\ldots, \lceil Wd/\epsilon_0 \rceil \epsilon_0/d\}$. The size of $\hat{\mathcal{F}}_k$ is at most $\lceil 2dW/\epsilon_0\rceil^d$. We shall now show that $\hat{\mathcal{F}}_k$ is an $\epsilon$-cover for ${\mathcal{F}}_k$.

We use mathematical induction on the number of layers $k$. We wish to show that for any $f \in \mathcal{F}_k$ there exists a $\hat{f} \in \hat{\mathcal{F}}_k$ such 
that:
\begin{equation*}
\Vert f(x) - \hat{f}(x)\Vert_1 \leq Bd\epsilon_0(2\Phi W)^k.
\end{equation*}
For $k = 0$, the statement holds trivially. Assume that the statement is true for $\mathcal{F}_k$. We now show that the statement holds for $\mathcal{F}_{k+1}$. 

A function $f \in \mathcal{F}_{k+1}$ can be written as $f(z) = \phi_{k+1}(w_{k+1} H(z))$ for some $H \in \mathcal{F}_{k}$. Similarly, a function $\hat{f} \in \hat{\mathcal{F}}_{k+1}$ can be written as $\hat{f}(z) = \phi_{k+1}(\hat{w}_{k+1} \hat{H}(z))$ for some $\hat{H} \in \hat{\mathcal{F}}_{k}$ and $\hat{w}_{k+1}$ is a matrix of entries in $\{-(\lceil Wd/\epsilon_0 \rceil-1)\,\epsilon_0/d,  \ldots, -\epsilon_0/d, 0, \epsilon_0/d,\ldots, \lceil Wd/\epsilon_0 \rceil \epsilon_0/d\}$. Also, for any parameter matrix $w^\ell \in \mathbb{R}^{T_{\ell} \times T_{\ell-1}}$, there is a matrix $\hat{\w}^\ell$ with discrete entries s.t.\
\begin{equation}
\Vert w_\ell - \hat{w}_\ell\Vert_{1} = \max_{1\leq j\leq T_{\ell-1}}\sum_{i=1}^{T_\ell} \vert w^\ell_{\ell,i,j} - \hat{w}_{\ell,i,j}\vert \leq T_\ell \epsilon_0/d \leq \epsilon_0.
\label{eq:wbound}
\end{equation}

We then have:
\begin{align*}
&\Vert f(x) - \hat{f}(x)\Vert_1\\
&\quad= \Vert \phi_{k+1}(w_{k+1} H(x)) - \phi_{k+1}(\hat{w}_{k+1}\hat{H}(x))\Vert_1\\
&\quad\leq \Phi \Vert w_{k+1} H(x) - \hat{w}_{k+1}\hat{H}(x)\Vert_1\\
&\quad\leq  \Phi \Vert w_{k+1} H(x) - w_{k+1}\hat{H}(x)\Vert_1 +  \Phi \Vert w_{k+1} \hat{H}(x) - \hat{w}_{k+1}\hat{H}(x)\Vert_1\\
&\quad\leq \Phi \Vert w_{k+1}\Vert_{1} \cdot \Vert H(x) - \hat{H}(x)\Vert_1 + \Phi \Vert w_{k+1} - \hat{w}_{k+1}\Vert_{1} \cdot \Vert \hat{H}(x)\Vert_1\\
&\quad\leq \Phi W  \Vert H(x) - \hat{H}(x)\Vert_1 + \Phi \new{T_k B}\Vert w_{k+1} - \hat{w}_{k+1}\Vert_{1}\\
&\quad\leq Bd\epsilon_0 \Phi W(2\Phi W)^k + \Phi Bd\epsilon_0\\
&\quad\leq Bd\epsilon_0 (2\Phi W)^{k+1},
\end{align*}
where the second line follows from our assumption on $\phi_{k+1}$, and the sixth line follows from our inductive hypothesis and from \eqref{eq:wbound}. By choosing $\epsilon_0 =\frac{\epsilon}{B(2\Phi W)^k}$, we complete the proof.
\end{proof}

We next bound the covering number of the auction class in terms of the
covering number for the class of allocation networks and for the class of payment networks. 
Recall that the payment networks computes a fraction  $\alpha: \R^{m(n+1)}\rightarrow [0, 1]^n$
and computes a payment 
$p_i(b) = \alpha_i(b)\cdot \langle v_i, g_i(b)\rangle$ for each bidder $i$.
Let $\mathcal{G}$ be the class of allocation networks
and   $\fP$ be the class of fractional payment functions used to construct auctions in $\mathcal{M}$.  
Let $\mathcal{N}_\infty(\mathcal{G}, \epsilon)$ and $\mathcal{N}_\infty(\mathcal{\fP}, \epsilon)$ be the corresponding covering numbers w.r.t.\ the $\ell_\infty$ norm. 
Then:
\begin{lem}
$\mathcal{N}_\infty(\mathcal{M}, \epsilon)\leq \mathcal{N}_\infty(\mathcal{G}, \epsilon/3)\cdot\mathcal{N}_\infty(\fP, \epsilon/3)$
\end{lem}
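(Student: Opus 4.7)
The plan is to build a cover of $\mathcal{M}$ as a Cartesian product of covers for the allocation class $\mathcal{G}$ and the fractional-payment class $\fP$, then show that the product rule $p_i(b) = \alpha_i(b)\sum_j g_{ij}(b)\,b_{ij}$ only blows up the radius by a small constant, which we absorb by scaling each component cover to radius $\epsilon/3$. Throughout, we use the fact (forced by IR in the architecture) that $\alpha_i(b) \in [0,1]$ together with $\sum_{j} g_{ij}(b) \leq 1$ and $b_{ij} \in [0,1]$.

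Concretely, fix an $\epsilon/3$-cover $\hat{\mathcal{G}}$ of $\mathcal{G}$ of size $\mathcal{N}_\infty(\mathcal{G}, \epsilon/3)$ and an $\epsilon/3$-cover $\hat{\fP}$ of $\fP$ of size $\mathcal{N}_\infty(\fP, \epsilon/3)$. For any $(g,p) \in \mathcal{M}$ with underlying fractional payment $\alpha$, pick $\hat{g} \in \hat{\mathcal{G}}$ and $\hat{\alpha} \in \hat{\fP}$ attaining these radii, and define $\hat{p}_i(b) := \hat{\alpha}_i(b) \sum_j \hat{g}_{ij}(b)\,b_{ij}$. Adding and subtracting $\hat{\alpha}_i(b)\sum_j g_{ij}(b)\,b_{ij}$ inside $|p_i(v) - \hat{p}_i(v)|$ and applying the triangle inequality yields, for every $v \in V$,
\begin{align*}
\sum_{i} |p_i(v) - \hat{p}_i(v)|
&\leq \sum_{i} |\alpha_i(v) - \hat{\alpha}_i(v)| \cdot \Big|\sum_j g_{ij}(v)\,v_{ij}\Big| \\
&\qquad + \sum_{i} \hat{\alpha}_i(v) \cdot \Big|\sum_j \big(g_{ij}(v) - \hat{g}_{ij}(v)\big)\,v_{ij}\Big| \\
&\leq \sum_i |\alpha_i(v) - \hat{\alpha}_i(v)| + \sum_{i,j} |g_{ij}(v) - \hat{g}_{ij}(v)|
\;\leq\; \tfrac{\epsilon}{3} + \tfrac{\epsilon}{3}.
\end{align*}
Adding the allocation term $\sum_{i,j} |g_{ij}(v) - \hat{g}_{ij}(v)| \leq \epsilon/3$ gives $\ell_{\infty,1}$ distance at most $\epsilon$ between $(g,p)$ and $(\hat{g},\hat{p})$, so the product family $\{(\hat{g}, \hat{p}) : \hat{g} \in \hat{\mathcal{G}}, \hat{\alpha} \in \hat{\fP}\}$ is an $\epsilon$-cover of $\mathcal{M}$ whose cardinality is the claimed product.

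There is no serious obstacle: the only thing one has to notice is that the IR-enforcing form of the payment rule keeps both $\hat{\alpha}_i$ and $\sum_j g_{ij}v_{ij}$ uniformly bounded by $1$, which is precisely what collapses the cross term in the triangle inequality. If instead of $\ell_\infty$ on $\fP$ one had defined covers componentwise, one would need an extra factor of $n$; our normalization $\sum_i |\alpha_i - \hat{\alpha}_i| \leq \epsilon/3$ matches the $\ell_{\infty,1}$ metric on $\mathcal{M}$ exactly, so the $1/3$ splitting is tight enough for the bound.
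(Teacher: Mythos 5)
Your proof is correct and follows essentially the same route as the paper's: the same product cover, the same add-and-subtract decomposition of $p_i - \hat{p}_i$, and the same use of $\hat{\alpha}_i \le 1$, $\|v_i\|_\infty \le 1$, and $\langle v_i, g_i(b)\rangle \le 1$ to collapse the cross terms, yielding the identical $\epsilon/3 + \epsilon/3 + \epsilon/3$ accounting.
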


\begin{proof}
Let $\hat{\G} \subseteq \mathcal{G}$, $\hat{\fP} \subseteq \fP$ be $\ell_\infty$ covers for $\G$ and $\fP$, i.e.\
for any $g \in \G$ and $\fp \in \fP$, there exists $\hat{g} \in \hat{\G}$ and $\hat{\alpha} \in \hat{\fP}$ with
\begin{align}
\sup_b \sum_{i,j}\vert g_{ij}(b) - \hat{g}_{ij}(b)\vert \leq \epsilon/3\\
\sup_b\sum_{i}\vert \fp_i(b) - \hat{\alpha}_i(b)\vert \leq \epsilon/3.
\label{eq:gpcover}
\end{align}
We now show that the class of mechanism 
$\hat{\M} \,=\, \{(\hat{g},\hat{\alpha})\,|\, \hat{g} \in \hat{\G},\,
\text{and}\, \hat{p}(b) \,=\, \hat{\alpha}_i(b)\cdot\langle v_i, \hat{g}_i(b)\rangle\}$ 
is an $\epsilon$-cover for $\M$ under the $\ell_{1,\infty}$ distance. 
For any mechanism in $(g,p) \in \M$, let $(\hat{g}, \hat{p}) \in \hat{\M}$ be a mechanism in $\hat{\M}$ that satisfies \eqref{eq:gpcover}. We have:
\begin{align*}
&\sum_{i,j}\vert g_{ij}(b) - \hat{g}_{ij}(b)\vert + \sum_{i}\vert p_i(b) - \hat{p}_i(b)\vert\\
&\quad\leq \epsilon/3 + \sum_i \left\vert \alpha_i(b) \cdot\langle b_i, g_{i,\cdot}(b)\rangle - \hat{\alpha}_i(b)\cdot\langle b_i, \hat{g}_i(b)\rangle\right\vert\\
&\quad\leq \epsilon/3 + \sum_i \Big( \vert (\alpha_i(b) - \hat{\alpha}_i(b))\cdot\langle b_i, g_{i}(b)\rangle\vert \\
&\hspace*{95pt}+ \vert\hat{\alpha}_i(b) \cdot(\langle b_i, g_i(b)\rangle -\langle b_i, \hat{g}_{i,\cdot}(b))\rangle\vert \Big)\\
&\quad\leq \epsilon/3 + \sum_i\vert \alpha_i(b) - \hat{\alpha}_i(b)\vert + \sum_i \Vert b_i\Vert_{\infty}\cdot \Vert g_i(b) - \hat{g}_i(b)\Vert_1 \indent\indent\indent\\
&\quad\leq 2\epsilon/3 + \sum_{i,j}\vert g_{ij}(b) - \hat{g}_{ij}(b)\vert \leq \epsilon,
\end{align*}
where in the third inequality we use $\langle b_i, g_{i}(b)\rangle\leq 1$. 
The size of the cover $\hat{\M}$ is $|\hat{\G}| |\hat{\fP}|$, which completes the proof.
\end{proof}

We are now ready to prove covering number bounds for the three architectures in Section \ref{sec:regretnet} \new{and Appendix~\ref{sec:ca-architecture}}. 
\if 0
For an allocation rule $g: V \rightarrow [0,1]^m$ for additive or unit-demand valuations, 
we let  $g_{\cdot, j}(b) \,=\, [g_{1,j}(b),\ldots,g_{m,j}(b)]$ denote the vector of allocation probabilities associated with item $j$. For an allocation rule
$g: V \rightarrow [0,1]^{2^M}$ 
 for combinatorial valuations, we denote $g_{i,S}$ as the allocation probability of bundle $S$ for bidder $i$, and similarly $g_{\cdot, S}$ the vector of allocation probabilities associated with bundle $S\subseteq M$.
\fi

\begin{proof}[Proof of Theorem~\ref{thm:cover_regretnet_full}]
All three architectures use the same feed-forward architecture for computing fractional payments, consisting of $K$ hidden layers with tanh activation functions. We also have by our assumption that the $\ell_1$ norm of the vector of all model parameters is at most $W$, for each $\ell = 1,\ldots,R+1$, $\|w_\ell\|_1 \leq W$.
Using that fact that the tanh activation functions are 1-Lipschitz and bounded in $[-1,1]$, and there are at most $\max\{K, n\}$ number of nodes in any layer of the payment network, we have by an application of Lemma \ref{lem:cover_fully_connect} the following bound on the covering number of the fractional payment networks $\fP$ used in each case:
\begin{align*}
\mathcal{N}_\infty(\mathcal{A}, \epsilon) \leq  \left\lceil \frac{\max(K, n)^2 (2W)^{R+1}}{\epsilon}\right\rceil^{d_p}
\end{align*}
where $d_p$ is the number of parameters in payment networks.  

For the covering number of allocation networks $\mathcal{G}$, we consider each architecture separately. In each case, we bound the Lipschitz constant for the activation functions used in the layers of the allocation network and followed by an application of Lemma \ref{lem:cover_fully_connect}. For ease of exposition, we omit the dummy scores used  in the final layer of neural network architectures.

{\bf Additive bidders.}  
The output layer computes $n$ allocation probabilities for each
 item $j$ using a softmax function. The 
  activation function $\phi_{R+1}:\R^n \rightarrow\R^n$ for the final layer 
  for input $s \in \R^{n\times m}$
 can be described as:
 $\phi_{R+1}(s)  = [\mathrm{softmax}(s_{1,1},$ $\ldots, s_{n,1}), \ldots,
 \mathrm{softmax}(s_{1,m},\ldots,s_{n,m})]$,
where $\mathrm{softmax}: \R^n \rightarrow [0,1]^n$ is defined 
for any $u \in \R^n$ as $\mathrm{softmax}_i(u) \,=\, e^{u_i} / \sum_{k=1}^n e^{u_k}$.

We then have for any $s, s' \in \R^{n\times m}$,
\begin{align}
&\|\phi_{R+1}(s) - \phi_{R+1}(s')\|_1 \nonumber\\
&\quad\new{\leq}
\sum_{j}
\left\Vert \mathrm{softmax}(s_{1,j},\ldots,s_{n,j}) - \mathrm{softmax}(s'_{1,j},\ldots,s'_{n,j})\right\Vert_1
\nonumber
\\
&\quad\leq
\sqrt{n}
\sum_{j}
\left\Vert \mathrm{softmax}(s_{1,j},\ldots,s_{n,j}) - \mathrm{softmax}(s'_{1,j},\ldots,s'_{n,j})\right\Vert_2
\nonumber
\\
&\quad\leq \sqrt{n}\frac{\sqrt{n-1}}{n} 
\sum_{j}
\sqrt{\sum_{i}
\|s_{ij} - s'_{ij}\|^2}
\nonumber
\\
&\quad\leq
\sum_{j}\sum_{i}
|s_{ij} - s'_{ij}|
\label{eq:phi-lip}
\end{align}
where the third  step follows by bounding the Frobenius norm of the Jacobian  %
of the  softmax function.

The  hidden layers $\ell = 1,\ldots,R$  are standard feed-forward layers with tanh activations. Since the tanh activation function is 1-Lipschitz,  $\|\phi_\ell(s) - \phi_\ell(s')\|_1 \leq \|s-s'\|_1$.
We also have by our assumption that the $\ell_1$ norm of the vector of all model parameters is at most $W$, for each $\ell = 1,\ldots,R+1$, $\|w_\ell\|_1 \leq W$. Moreover, the output of each hidden layer node is in $[-1,1]$, the output layer nodes is in $[0,1]$, and the maximum number of nodes in any layer (including the output layer) is at most $\max\{K, mn\}$. 

By an application of Lemma \ref{lem:cover_fully_connect} with $\Phi=1$, $B=1$, and \new{$d = \max\{K,mn\}$} %
we have
$$
\displaystyle
\mathcal{N}_\infty(\mathcal{G}, \epsilon)\leq  \left\lceil \frac{\max\{K, mn\}^2 (2W)^{R+1}}{\epsilon}\right\rceil^{d_g},$$ where $d_g$ is the number of parameters in allocation networks.

{\bf Unit-demand bidders.}  
The output layer $n$ allocation probabilities for each item $j$ as an element-wise minimum of two softmax functions. The 
  activation function 
$\phi_{R+1}:\R^2n \rightarrow\R^n$  
  for the final layer 
  for two sets of scores $s, \bar{s} \in \R^{n\times m}$
 can be described as:
 $$\phi_{R+1,i,j}(s,s')  = \min\{\textrm{softmax}_{j}(s_{i,1}, \ldots, s_{i,m}), \, \textrm{softmax}_{i}(s'_{1,j}, \ldots, s'_{n,j})\}.$$

 \if 0
Let the allocation function $g$ be $\min\{g^{\mathrm{item}}, g^{\mathrm{bidder}}\}$, where $g^{\mathrm{item}}$ is the output of layer $R$ when applying softmax function for each item and $g^{\mathrm{bidder}}$ is the output of layer $R$ when applying softmax function for each bidder.

In addition, let $w_R^{\mathrm{item}}$ and  $w_R^{\mathrm{bidder}}$ be the parameter matrices associated with $g^{\mathrm{item}}$ and $g^{\mathrm{bidder}}$ between layer $R-1$ and layer $R$. Let $g^{\mathrm{item}}_{\cdot,j}(v) = \mathrm{softmax}(w^{\mathrm{item}}_{R,j}H(v))$ and $g^{\mathrm{bidder}}_{i,\cdot}(v) = \mathrm{softmax}(w^{\mathrm{bidder}}_{R,i}H(v))$. We also assume $\Vert w^{\mathrm{item}}_{R,j}\Vert_1 \leq W$ and $\Vert w^{\mathrm{bidder}}_{R,i}\Vert_1 \leq W$ for any bidder $i$ and item $j$.

Similarly, taking the approximate $\hat{g}^{\mathrm{item}}$ and $\hat{g}^{\mathrm{bidder}}$, following the same algebra for additive bidders, for all $v$
\fi

We then have for any $s, \tilde{s}, s', \tilde{s}' \in \R^{n\times m}$,
\begin{align*}
&\|\phi_{R+1}(s,\tilde{s}) - \phi_{R+1}(s',\tilde{s}')\|_1 \\
&\quad\new{\leq}\,
\sum_{i,j}
\Big|
\min\{\textrm{softmax}_{j}(s_{i,1}, \ldots, s_{i,m}), \, \textrm{softmax}_{i}(\tilde{s}_{1,j}, \ldots, \tilde{s}_{n,j})\}
\\[-5pt]
& \hspace{1.5cm}
\,-\,
\min\{\textrm{softmax}_{j}(s'_{i,1}, \ldots, s'_{i,m}), \, \textrm{softmax}_{i}(\tilde{s}'_{1,j}, \ldots, \tilde{s}'_{n,j})\}
\Big|\\
&\quad\leq\,
\sum_{i,j}
\Big|
\max\{\textrm{softmax}_{j}(s_{i,1}, \ldots, s_{i,m})
\,-\, \textrm{softmax}_{j}(s'_{i,1}, \ldots, s'_{i,m}),\\[-5pt]
& \hspace{2.9cm}
 \textrm{softmax}_{i}(\tilde{s}_{1,j}, \ldots, \tilde{s}_{n,j})
\,-\, \textrm{softmax}_{i}(\tilde{s}'_{1,j}, \ldots, \tilde{s}'_{n,j})\}
\Big|\\
&\quad\leq\,
\sum_{i}
\big\|
\textrm{softmax}(s_{i,1}, \ldots, s_{i,m})
\,-\, \textrm{softmax}(s'_{i,1}, \ldots, s'_{i,m})
\big\|_1\\[-5pt]
& \hspace{1.5cm}
\,+\,
\sum_{j}
\big\|
 \textrm{softmax}(\tilde{s}_{1,j}, \ldots, \tilde{s}_{n,j})
\,-\, \textrm{softmax}(\tilde{s}'_{1,j}, \ldots, \tilde{s}'_{n,j})\}
\big\|_1\\
&\quad\leq\,
\sum_{i,j}
|s_{ij} - s'_{ij}|
\,+\,
\sum_{i,j}
|\tilde{s}_{ij} - \tilde{s}'_{ij}|,
\end{align*}
where the last step can be derived in the same way as \eqref{eq:phi-lip}.
\if 0
For 
\begin{align*}
\sum_{i,j}\vert g_{ij}^{\mathrm{item}}(v) - \hat{g}_{ij}^{\mathrm{item}}(v)\vert &\leq \sum_j\left\Vert w_{R,j}^{\mathrm{item}} H(v) - \hat{w}_{R,j}^{\mathrm{item}} \hat{H}(v)\right\Vert_1\\
\sum_{i,j}\vert g_{ij}^{\mathrm{bidder}}(v) - \hat{g}_{ij}^{\mathrm{bidder}}(v)\vert &\leq \sum_{i}\left\Vert w_{R,i}^{\mathrm{bidder}} H(v) - \hat{w}_{R,i}^{\mathrm{bidder}} \hat{H}(v)\right\Vert_1
\end{align*}

Thus, for all $v$
\begin{align*}
\sum_{i,j}\vert g_{ij}(v) - \hat{g}_{ij}(v)\vert  & = \sum_{i,j}\left\vert \min\{g^{\mathrm{item}}_{ij}(v), g^{\mathrm{bidder}}_{ij}(v)\} - \min\{\hat{g}^{\mathrm{item}}_{ij}(v), \hat{g}^{\mathrm{bidder}}_{ij}(v)\}\right\vert \\
& \leq \sum_{i,j} \max\left\{\vert g^{\mathrm{item}}_{ij}(v) - \hat{g}^{\mathrm{item}}_{ij}(v)\vert, \vert g^{\mathrm{bidder}}_{ij}(v) - \hat{g}^{\mathrm{bidder}}_{ij}(v)\vert\right\}\\
& \leq \sum_{i,j} \vert g^{\mathrm{item}}_{ij}(v) - \hat{g}^{\mathrm{item}}_{ij}(v)\vert + \sum_{i,j}  \vert g^{\mathrm{bidder}}_{ij}(v) - \hat{g}^{\mathrm{bidder}}_{ij}(v)\vert\\
& \leq \sum_j\left\Vert w_{R,j}^{\mathrm{item}} H(v) - \hat{w}_{R,j}^{\mathrm{item}} \hat{H}(v)\right\Vert_1 + \sum_i\left\Vert w_{R,i}^{\mathrm{bidder}} H(v) - \hat{w}_{R,i}^{\mathrm{bidder}} \hat{H}(v)\right\Vert_1\\
& \leq (m+n)B\max\{K, m, n\}\epsilon_0 \Phi^{R-1}(2W)^R
\end{align*}

Set $\epsilon_0 = \epsilon/((m+n)B\max\{K, m, n\}\Phi^{R-1}(2W)^R)$,
\fi

As with additive bidders, using additionally hidden layers $\ell = 1,\ldots,R$  are standard feed-forward layers with tanh activations, we have 
from Lemma \ref{lem:cover_fully_connect}  with $\Phi=1$, $B=1$ and $d = \max\{K,mn\}$, %
 $$\mathcal{N}_\infty(\mathcal{G}, \epsilon)\leq  \left\lceil \frac{\max\{K, mn\}^2 (2 W)^{R+1}}{\epsilon}\right\rceil^{d_g}.$$

{\bf Combinatorial bidders.} 
The output layer outputs an allocation probability for each bidder $i$ and bundle of items $S \subseteq M$. The 
  activation function 
$\phi_{R+1}:\R^{(m+1)n2^m}  \rightarrow\R^{n2^m}$  
  for this layer 
  for $m+1$ sets of scores $s, s^{(1)},\ldots, s^{(m)} \in \R^{n\times 2^m}$ is given by:
\begin{align*}
 \phi_{R+1,i,S}(s, s^{(1)}, \ldots, s^{(m)})  =\,
\min\Big\{
&\textrm{softmax}_{S}(s_{i,S'}: S' \subseteq M),\,\textrm{softmax}_{S}(s^{(1)}_{i,S'}: S' \subseteq M),\,
\ldots,\\
&\textrm{softmax}_{S}(s^{(m)}_{i,S'}: S' \subseteq M)
 \Big\},
\end{align*}
where $\textrm{softmax}_{S}(a_{S'}: S' \subseteq M)
\,=\, e^{a_S}/ \sum_{S' \subseteq M} e^{a_{S'}}$.

We then have for any $s, s^{(1)}, \ldots, s^{(m)}, s', s'^{(1)}, \ldots, s'^{(m)} \in \R^{n\times 2^m}$,
\begin{align*}
\lefteqn{
\|\phi_{R+1}(s, s^{(1)}, \ldots, s^{(m)}) - \phi_{R+1}(s', s'^{(1)}, \ldots, s'^{(m)})\|_1
}
\\
&\new{\leq}\,
\sum_{i,S}
\Big|
\min\Big\{
\textrm{softmax}_{S}(s_{i,S'}: S' \subseteq M),\,\\[-12pt]
&\hspace*{1.6cm} 
\textrm{softmax}_{S}(s^{(1)}_{i,S'}: S' \subseteq M),\,
\ldots,
\textrm{softmax}_{S}(s^{(m)}_{i,S'}: S' \subseteq M)
 \Big\}
\\
& \hspace{0.9cm}
\,-\,
\min\Big\{
\textrm{softmax}_{S}(s'_{i,S'}: S' \subseteq M),\,\\
&\hspace*{1.6cm} 
\textrm{softmax}_{S}(s'^{(1)}_{i,S'}: S' \subseteq M),\,
\ldots,
\textrm{softmax}_{S}(s'^{(m)}_{i,S'}: S' \subseteq M)
 \Big\}
 \Big|
\\
& \leq\,
\sum_{i,S}
\max\Big\{
\big|
\textrm{softmax}_{S}(s_{i,S'}: S' \subseteq M)
\,-\,
\textrm{softmax}_{S}(s'_{i,S'}: S' \subseteq M)
\big|,\\[-10pt]
& \hspace{1.3cm}
\big|
\textrm{softmax}_{S}(s^{(1)}_{i,S'}: S' \subseteq M)
\,-\,
\textrm{softmax}_{S}(s'^{(1)}_{i,S'}: S' \subseteq M)
\big|, \,\ldots\,\\
& \hspace{1.3cm}
\big|
\textrm{softmax}_{S}(s^{(m)}_{i,S'}: S' \subseteq M)
\,-\,
\textrm{softmax}_{S}(s'^{(m)}_{i,S'}: S' \subseteq M)
\big|
\Big\}\\
& \leq\,
\sum_{i}
\big\|
\textrm{softmax}(s_{i,S'}: S' \subseteq M)
\,-\,
\textrm{softmax}(s'_{i,S'}: S' \subseteq M)
\big\|_1\\[-5pt]
& \hspace{1.2cm}
\,+\,
\sum_{i,j}
\big\|
\textrm{softmax}(s^{(j)}_{i,S'}: S' \subseteq M)
\,-\,
\textrm{softmax}(s'^{(j)}_{i,S'}: S' \subseteq M)
\big\|_1
\\
&\leq\,
\sum_{i,S}
|s_{i,S} - s'_{i,S}|
\,+\,
\sum_{i,j,S}
|s^{(j)}_{i,S} - s'^{(j)}_{i, S}|,
\end{align*}
where the last step can be derived in the same way as \eqref{eq:phi-lip}.

\if 0
Analogous to unit-demand case, let the allocation $g$ s.t. $g_{i, S} = \min\{g^{\mathrm{bidder}}_{i,S}, g^{\mathrm{item } (j)}_{i,S}: j\in S\}$. We define $w^{\mathrm{bidder}}_R$ and $w^{\mathrm{item}}_{R,j}$ be the parameter matrices associated with $g^{\mathrm{bidder}}$ and $g^{\mathrm{item}}$ between layer $R-1$ and layer $R$, i.e. $g^{\mathrm{bidder}}_{i, \cdot}(v)=\mathrm{softmax}(w^{\mathrm{bidder}}_{R,i} H(v))$ and $g^{\mathrm{item}(j)}_{\cdot, \cdot}(v)=\mathrm{softmax}(w^{\mathrm{item}}_{R,j} H(v))$. We assume $\Vert w^{\mathrm{bidder}}_{R,i}\Vert_1\leq W$ and $\Vert w^{\mathrm{item}}_{R,j}\Vert_1\leq W$ for any bidder $i$ and any item $j$.

Taking the approximate $$\hat{g}^{\mathrm{bidder}}_{i,\cdot}(v) = \mathrm{softmax}(\hat{w}^{\mathrm{bidder}}_{R,i} \hat{H}(v))$$
$$\hat{g}^{\mathrm{item}(j)}_{\cdot,\cdot}(v) = \mathrm{softmax}(\hat{w}^{\mathrm{item}}_{R,j} \hat{H}(v))$$

Similarly we have
\begin{align*}
\sum_{i,S: j\in S}\vert g_{i,S}^{\mathrm{item}(j)}(v) - \hat{g}_{i, S}^{\mathrm{item}(j)}(v)\vert &\leq \left\Vert w_{R,j}^{\mathrm{item}} H(v) - \hat{w}_{R,j}^{\mathrm{item}} \hat{H}(v)\right\Vert_1\\
\sum_{i,S}\vert g_{i,S}^{\mathrm{bidder}}(v) - \hat{g}_{i, S}^{\mathrm{bidder}}(v)\vert &\leq\sum_i \left\Vert w_{R,i}^{\mathrm{bidder}} H(v) - \hat{w}_{R,i}^{\mathrm{bidder}} \hat{H}(v)\right\Vert_1
\end{align*}

Thus, for all $v$
\begin{align*}
\sum_{i, S}\vert g_{i, S}(v) - \hat{g}_{i, S}(v)\vert & = \sum_{i, S}\left\vert\min\{g^{\mathrm{bidder}}_{i,S}(v), g^{\mathrm{item } (j)}_{i,S}(v): j\in S\} - \min\{\hat{g}^{\mathrm{bidder}}_{i,S}(v), \hat{g}^{\mathrm{item }(j)}_{i,S}(v): j\in S\}\right\vert\\
& \leq \sum_{i, S}\max\left\{\vert g^{\mathrm{bidder}}_{i,S}(v)- \hat{g}^{\mathrm{bidder}}_{i,S}(v)\vert, \vert g^{\mathrm{item}(j)}_{i,S}(v)- \hat{g}^{\mathrm{item}(j)}_{i,S}(v)\vert: j\in S \right\}\\
& \leq \sum_{i, S}\vert g^{\mathrm{bidder}}_{i,S}(v)- \hat{g}^{\mathrm{bidder}}_{i,S}(v)\vert + \sum_{j=1}^m \sum_{i, S:j\in S}\vert g_{i,S}^{\mathrm{item}(j)}(v) - \hat{g}_{i, S}^{\mathrm{item}(j)}(v)\vert\\
& \leq \sum_i\left\Vert w_{R}^{\mathrm{bidder}} H(v) - \hat{w}_{R}^{\mathrm{bidder}} \hat{H}(v)\right\Vert_1 + \sum_{j=1}^m \left\Vert w_{R,j}^{\mathrm{item}} H(v) - \hat{w}_{R,j}^{\mathrm{item}} \hat{H}(v)\right\Vert_1\\
& \leq nB\max\{K, 2^m\}\epsilon_0 \Phi^{R-1}(2 W)^{R} + mB\max\{K, n\cdot 2^m\}\epsilon_0 \Phi^{R-1}(2 W)^{R}\\
& \leq (m+n) \max\{K, n\cdot 2^m\} \Phi^{R-1} (2 W)^{R}\epsilon_0
\end{align*}

Set $\epsilon_0 = \epsilon/((m+n)B\max\{K,n\cdot 2^m\}\Phi^{R-1}(2W)^R)$, 
\fi

As with additive bidders, using additionally hidden layers $\ell = 1,\ldots,R$  are standard feed-forward layers with tanh activations, we have from Lemma \ref{lem:cover_fully_connect}  with $\Phi=1$, $B=1$ and $d = \max\{K,n\cdot 2^m\}$
$$\mathcal{N}_\infty(\mathcal{G}, \epsilon)\leq  \left\lceil \frac{\max\{K,n\cdot 2^m\}^2 (2 W)^{R+1}}{\epsilon}\right\rceil^{d_g}$$ where $d_g$ is the number of parameters in allocation networks.
\end{proof}

We now bound $\Delta_{L}$ for the three architectures using the covering number bounds we derived above. In particular, we upper bound the the `inf' over $\epsilon > 0$ by substituting a specific value of $\epsilon$: 

(a) For additive bidders, choosing $\epsilon = \frac{1}{\sqrt{L}}$, we get $$\Delta_{L} \leq O\left(\sqrt{R(d_p+d_g)\frac{\log(W\max\{K, mn\}L)}{L}}\right)$$

(b) For unit-demand bidders, choosing $\epsilon=\frac{1}{\sqrt{L}}$, we get  $$\Delta_{L} \leq O\left(\sqrt{R(d_p+d_g)\frac{\log((W\max\{K, mn\} L)}{L}}\right)$$

(c) For combinatorial bidders, choosing $\epsilon=\frac{1}{\sqrt{L}}$, we get  $$\Delta_{L} \leq O\left(\sqrt{R(d_p+d_g)\frac{\log(W\max\{K, n\cdot 2^m\} L)}{L}}\right).$$
\qed

\subsection{Proof of Theorem~\ref{THM:NEW_UNIFORM_TRIANGLE}}\label{app:duality-framework}

We apply the duality theory of~\citet{DaskalakisDT13} to verify the optimality of our proposed mechanism (motivated by empirical results of RochetNet). For the completeness of presentation, we provide a brief introduction of their approach here. 

\begin{figure}
\centering

\begin{tikzpicture}[scale=2.0,line cap=round,line join=round,>=triangle 45,x=1.0cm,y=1.0cm]
\clip(-0.5,0.5) rectangle (6.5,3.);
\draw [line width=0.4pt] (0.,1.)-- (6.,1.);
\draw [line width=0.4pt] (6.,1.)-- (0.,2.5);
\draw [line width=0.4pt] (0.,2.5)-- (0.,1.);
\draw [line width=0.8pt] (2.,1.)-- (0.,1.5);
\draw [line width=0.8pt] (2.,1.)-- (2.,2.);
\draw [line width=0.4pt,dash pattern=on 1pt off 1pt] (2.235294117647059,1.9411764705882353)-- (2.,1.);
\draw [line width=0.4pt] (2.904805183366678,1.7737987041583305)-- (2.3,1.);
\draw [line width=0.4pt] (2.6020607869702186,1.8494848032574454)-- (2.5,1.);
\draw [line width=0.4pt,dash pattern=on 1pt off 1pt] (2.904805183366678,1.7737987041583305)-- (2.5,1.);
\draw (0.5183734654275157,1.3149879513414549) node[anchor=north west] {$R_1$};
\draw (0.7,1.8441433146404131) node[anchor=north west] {$R_2$};
\draw (3.260778652611692,1.4484271299124964) node[anchor=north west] {$R_3$};
\draw (-0.3,0.9) node[anchor=north west] {$(0,1)$};
\draw (1.7,0.9) node[anchor=north west] {$(\frac{c}{3}, 1)$};
\draw (5.8,0.9) node[anchor=north west] {$(c,1)$};
\draw (-0.5,2.5) node[anchor=north west] {(0,2)};
\draw (-0.5,1.5) node[anchor=north west] {$(0,\frac{4}{3})$};
\draw [->,line width=0.8pt] (0.16406943956647332,2.3732986779393714) -- (0.16406943956647332,1.7015014340989547);
\draw [->,line width=0.8pt] (0.5597856242943913,2.249062201338747) -- (0.5597856242943913,1.5772649574983297);
\draw [->,line width=0.8pt] (1.0935423385785599,2.092616267841662) -- (1.0935423385785599,1.4208190240012466);
\draw [->,line width=0.8pt] (1.562880139069812,1.96377844025583) -- (1.562880139069812,1.291981196415413);

\begin{scriptsize}
\draw [fill=black] (0.,1.) circle (0.5pt);
\draw[color=black] (-0.03,0.9) node {$A$};
\draw [fill=black] (6.,1.) circle (0.5pt);
\draw[color=black] (6.058400051618369,0.9) node {$B$};
\draw [fill=black] (0.,2.5) circle (0.5pt);
\draw[color=black] (-0.09360621560519478,2.545849339884684) node {$C$};
\draw [fill=black] (2.,1.) circle (0.5pt);
\draw[color=black] (1.995407131679396,0.9) node {$D$};
\draw [fill=black] (0.,1.5) circle (0.5pt);
\draw[color=black] (-0.08900486461998643,1.5611602290501008) node {$E$};
\draw [fill=black] (2.,2.) circle (0.5pt);
\draw[color=black] (2.032217939561063,2.048903433482184) node {$F$};
\draw [fill=black] (2.235294117647059,1.9411764705882353) circle (0.5pt);
\draw[color=black] (2.2668868398066886,1.9890858706744758) node {$G$};
\draw [fill=black] (2.6020607869702186,1.8494848032574454) circle (0.5pt);
\draw[color=black] (2.6349949186233568,1.8970588509703088) node {$H$};
\draw [fill=black] (2.904805183366678,1.7737987041583305) circle (0.5pt);
\draw[color=black] (2.9386840836471078,1.8234372352069759) node {$I$};
\draw[color=black] (2.3,0.9) node {$J$};
\draw [fill=black] (2.3,1.) circle (0.5pt);
\draw[color=black] (2.5,0.9) node {$K$};
\draw [fill=black] (2.5,1.) circle (0.5pt);
\end{scriptsize}
\end{tikzpicture}
\caption{The transport of transformed measure of each region in Setting~\ref{exp:triangle-1}.\label{FIG:NEW_UNIFORM_TRIANGLE}}. 
\end{figure}

Let $f(v)$ be the joint valuation distribution of $v=(v_1, v_2, \cdots, v_m)$, $V$ be the support of $f(v)$ and define the measure $\mu$ with the following density, 
\begin{eqnarray}
\1_{v=\bar{v}} + \1_{v\in \partial V}\cdot f(v)(v\cdot \hat{n}(v)) - (\nabla f(v)\cdot v + (m+1)f(v)) 
\end{eqnarray}
where $\bar{v}$ is the ``base valuation'', i.e. $u(\bar{v})=0$, $\partial V$ denotes the boundary of $V$, $\hat{n}(v)$ is the outer unit normal vector at point $v\in \partial V$, and $m$ is the number of items. \new{Let $\Gamma_+(X)$ be the unsigned (Radon) measures on $X$. Consider an unsinged measure $\gamma\in \Gamma_+(X\times X)$, let $\gamma_1$ and $\gamma_2$ be the two marginal measures of $\gamma$, i.e. $\gamma_1(A) = \gamma(A\times X)$ and $\gamma_2(A) = \gamma(X\times A)$ for all measurable sets $A\subseteq X$. We say measure $\alpha$ dominates $\beta$ if and only if for all (non-decreasing, convex) functions $u$, $\int u\,d\alpha \geq \int u\,d\beta$. Then by strong duality theory we have}
\begin{equation}\label{eq:lp-duality-revenue-maximization}
\new{\sup_u \int_V u\,d\mu = \inf_{\gamma \in \Gamma_+(V, V), \gamma_1 - \gamma_2\succeq\mu}\int_{V\times V}\Vert v - v'\Vert_1 \, d\gamma,}
\end{equation}
\new{and both the supremum and infimum are achieved. Based on "complementary slackness" of linear programming, the optimal solution of Equation~\ref{eq:lp-duality-revenue-maximization} needs to satisfy the following conditions.}
\begin{cor}[\cite{DaskalakisEtAl17}]\label{cor:optimal-solution-condition}
\new{Let $u^*$ and $\gamma^*$ be feasible for their respective problems in Equation~\ref{eq:lp-duality-revenue-maximization}, then $\int u^*\,d\mu = \int \Vert v - v'\Vert_1\,d\gamma^*$ if and only if the following two conditions hold:}
\new{\begin{equation*}
\begin{aligned}
&\int u^*\,d(\gamma^*_1 - \gamma^*_2) = \int u^*\,d\mu \\
&\int u^*(v) - u^*(v') = \Vert v - v'\Vert_1, \gamma^*\text{-almost surely.}
\end{aligned}
\end{equation*}}
\end{cor}

Then we prove the utility function $u^*$ induced by the mechanism for setting~\ref{exp:triangle-1} is optimal. \new{Here we only focus on Settiong~\ref{exp:triangle-1} with $c > 1$, for $c \leq 1$ the proof is analogous and we omit here}\footnote{\new{It is fairly similar to the proof for setting $c > 1$. If $c\leq 1$, there are only two regions to discuss, in which $R_1$ and $R_2$ are the regions correspond to allocation $(0,0)$ and $(1,1)$, respectively. Then we show the optimal $\gamma^* = \bar{\gamma}^{R_1} + \bar{\gamma}^{R_2}$ where $\bar{\gamma}^{R_1}= 0$ for region $R_1$ and show $\gamma^{R_2}$ only "transports" mass of measure downwards and leftwards in region $R_2$, which is analgous to the analysis for $\gamma^{R_3}$ for setting $c > 1$.}}.
The transformed measure $\mu$ of the valuation distribution is composed of:
\begin{enumerate}
\item A point mass of $+1$ at $(0,1)$.
\item Mass $-3$ uniformly distributed throughout the triangle area (density $-\frac{6}{c}$).
\item Mass $-2$ uniformly distributed on lower edge of triangle (density $-\frac{2}{c}$).
\item Mass $+4$ uniformly distributed on right-upper edge of triangle (density $+\frac{4}{\sqrt{1+c^2}}$).
\end{enumerate}
It is straightforward to verify that $\mu(R_1)=\mu(R_2)=\mu(R_3)=0$. We will show there exists an optimal measure $\gamma^*$ for the dual program of Theorem 2 (Equation 5) in~\citet{DaskalakisDT13}. $\gamma^*$ can be decomposed into $\gamma^* =\gamma^{R_1} + \gamma^{R_2} + \gamma^{R_3}$ with $\gamma^{R_1} \in \Gamma_+(R_1\times R_1), \gamma^{R_2} \in \Gamma_+(R_2\times R_2), \gamma^{R_3} \in \Gamma_+(R_3\times R_3)$. We will show the feasibility of $\gamma^*$, such that 
\begin{equation}
\begin{array}{ccc}
\gamma^{R_1}_1 - \gamma^{R_1}_2 \succeq \mu|_{R_1};
& \gamma^{R_2}_1 - \gamma^{R_2}_2 \succeq \mu|_{R_2};&\gamma^{R_3}_1 - \gamma^{R_3}_2 \succeq \mu|_{R_3}.
\end{array}
\end{equation}
Then we show the conditions of Corollary 1 in~\citet{DaskalakisDT13} hold for each of the measures $\gamma^{R_1}, \gamma^{R_2}, \gamma^{R_3}$ separately, such that $\int u^* d(\gamma^A_1 - \gamma^A_2) = \int_A u^*d\mu$ and $u^*(v) - u^*(v') = \Vert v - v'\Vert_1$ hold $\gamma^{A}$-almost surely for any $A=R_1, R_2,$ and $R_3$. We visualize the transport of measure $\gamma^*$ in Figure~\ref{FIG:NEW_UNIFORM_TRIANGLE}.

\textbf{Construction of $\gamma^{R_1}$. } $\mu_+|_{R_1}$ is concentrated on a single point $(0,1)$ and $\mu_-|_{R_1}$ is distributed throughout a region which is coordinate-wise greater than $(0,1)$, then it is obviously to show $0\succeq\mu|_{R_1}$. We set $\gamma^{R_1}$ to be zero measure, and we get $ \gamma^{R_1}_1-\gamma^{R_1}_2 = 0 \succeq \mu|_{R_1}$. In addition, $u^*(v)=0, \forall v\in R_1$, then the conditions in Corollary 1 in~\citet{DaskalakisDT13} hold trivially.

\textbf{Construction of $\gamma^{R_2}$. } $\mu_+|_{R_2}$ is uniformly distributed on upper edge $CF$ of the triangle and $\mu_-|_{R_2}$ is uniformly distributed in $R_2$. Since we have $\mu(R_2) = 0$,  we construct $\gamma^{R_2}$ by ``transporting'' $\mu_+|_{R_2}$ into $\mu_-|_{R_2}$ downwards, that is $\gamma^{R_2}_1 = \mu_+|_{R_2}, \gamma^{R_2}_2=\mu_-|_{R_2}$. Therefore, $\int u^* d(\gamma^{R_2}_1 - \gamma^{R_2}_2) = \int u^* d\mu$ holds trivially. The measure $\gamma^{R_2}$ is only concentrated on the pairs $(v, v')$ such that $v_1 = v'_1, v_2 \geq v'_2$. Thus for such pairs $(v' v')$, we have $u^*(v) - u^*(v') = (\frac{v_1}{c} + v_2 - \frac{4}{3}) - (\frac{v_1}{c} + v'_2 - \frac{4}{3}) = ||v - v'||_1$.

\textbf{Construction of $\gamma^{R_3}$. } It is intricate to directly construct $\gamma^{R_3}$ analytically, however, we will show there the optimal measure $\gamma^{R_3}$ only transports mass from $\mu_+|_{R_3}$ to  $\mu_-|_{R_3}$ leftwards and downwards. Let's consider a point $H$ on edge $BF$ with coordinates $(v^H_1, v^H_2)$. Define the regions $R^H_{L}=\{v'\in R_3|v'_1\leq v^H_1\}$ and $R^H_U=\{v'\in R_3|v'_2 \geq v^H_2\}$. Let $\ell(\cdot)$ represent the length of segment, then we have $\ell(FH) < \frac{2}{3\sqrt{c^2+1}}$. Thus,
\begin{align*}
\mu(R^H_U) &= \frac{4\ell(FH)}{\sqrt{c^2 +1}} - \frac{6}{c}\cdot\frac{\ell^2(FH)c}{2(c^2+1)} = \frac{\ell(FH)}{\sqrt{c^2 +1}}\cdot\left(4- \frac{3\ell(FH)}{\sqrt{c^2+1}}\right) > 0\\
\mu(R^H_L) &= \frac{4\ell(FH)}{\sqrt{c^2 +1}} - \frac{2}{c}\cdot\frac{\ell(FH)c}{\sqrt{c^2+1}} - \frac{6}{c}\cdot \left(\frac{2\ell(FH)c}{3\sqrt{c^2+1}} - \frac{\ell^2(FH)c}{2(c^2+1)}\right)\\
&= \frac{\ell(FH)}{\sqrt{c^2+1}}\cdot \left(\frac{3\ell(FH)}{\sqrt{c^2+1}}-2\right) < 0
\end{align*}
Thus, there exists a unique line $l_H$ with positive slope that intersects $H$ and separate $R_3$ into two parts, $R^H_U$ (above $l_H$) and $R^H_B$ (below $l_H$), such that $\mu_+(R^H_U) = \mu_-(R^H_U)$. We will then show for any two points on edge $BF$, $H$ and $I$, lines $l_H$ and $l_I$ will not intersect inside $R_3$. In Figure~\ref{FIG:NEW_UNIFORM_TRIANGLE}, on the contrary, we assume $l_H=HK$ and $l_I =IJ$ intersects inside $R_3$. Given the definition of $l_H$ and $l_I$, we have
\begin{eqnarray*}
\mu_+(FHKD) = \mu_-(FHKD); &  \mu_+(FIJD) = \mu_-(FIJD)
\end{eqnarray*}
Since $\mu_+$ is only distributed along the edge $BF$, we have 
\begin{eqnarray*}
\mu_+(FIKD) = \mu_+(FIJD) = \mu_-(FIJD)
\end{eqnarray*}
Notice $\mu_-$ is only distributed inside $R_3$ and edge $DB$, thus $\mu_-(FIKD) > \mu_-(FIJD)$. Given the above discussion, we have
\begin{equation}\label{eq:measure_triangle_one_side}
\begin{aligned}
\mu_+(HIK) &= \mu_+(FIJD) - \mu_+(FHKD) = \mu_-(FIJD) - \mu_-(FHKD) \\
&< \mu_-(FIKD) - \mu_-(FHKD) = \mu_-(HIK)
\end{aligned}
\end{equation}

On the other hand, let $S(HIK)$ be the area of triangle $HIK$, $DG$ be the altitude of triangle $DBF$ w.r.t $BF$, and $h$ be the altitude of triangle $HJK$ w.r.t the base $HI$.
\begin{align*}
\mu_-(HJK) &= \frac{6}{c}\cdot S(HIK) = \frac{6}{c}\cdot \frac{1}{2} \ell(HI)h \leq \frac{3}{c}\cdot\ell(HI)\cdot\ell(DG)\\
&=\frac{3}{c} \cdot \frac{2c}{3\sqrt{c^2+1}}\cdot \ell(HI)  = \frac{2}{\sqrt{c^2+1}}\cdot \ell(HI)\\
&<\frac{2}{\sqrt{c^2+1}}\cdot \ell(HI) = \mu_+(HIK),
\end{align*}
which is a contradiction of Equation~\ref{eq:measure_triangle_one_side}. Thus, we show $l_H$ and $l_I$ doesn't intersect inside $R_3$. Let $\gamma^{R_3}$ be the measure that transport mass from $\mu_+|_{R_3}$ to $\mu_-|_{R_3}$ through lines $\{l_H|H\in BF\}$. Then we have $\gamma^{R_3}_1 = \mu_+|_{R_3}, \gamma^{R_3}_2 = \mu_-|_{R_3}$, which leads to $\int u^*d(\gamma^{R_3}_1 - \gamma^{R_3}_2) =\int u^* d\mu$. The measure $\gamma^{R_3}$ is only concentrated on the pairs $(v, v')$, s.t. $v_1 \geq v'_1$ and $v_2 \geq v'_2$. Therefore, for such pairs $(v, v')$, we have $u^*(v) - u^*(v') = (v_1 + v_2 - \frac{c}{3}-1) - (v'_1 + v'_2 - \frac{c}{3}-1) = (v_1 - v'_1) + (v_2 - v'_2) = ||v-v'||_1$.

Finally, we show there must exist an optimal measure $\gamma$ for the dual program of Theorem 2 in~\citet{DaskalakisDT13}. \qed

\end{document}